\title{Online multiple testing with e-values}
\author{
Ziyu Xu\thanks{Department of Statistics and Data Science, Carnegie Mellon University, USA. Email: \texttt{xzy@cmu.edu}.}
\and
Aaditya Ramdas\thanks{Departments of Statistics and Data Science, and Machine Learning, Carnegie Mellon University, USA. Email: \texttt{aramdas@cmu.edu}.}
}
\date{\today}
\begin{document}

\maketitle

\begin{abstract}
A scientist tests a continuous stream of hypotheses over time in the course of her investigation --- she does not test a predetermined, fixed number of hypotheses.
The scientist wishes to make as many discoveries as possible while ensuring the number of false discoveries is controlled --- a well recognized way for accomplishing this is to control the false discovery rate (FDR).
Prior methods for FDR control in the online setting have focused on formulating algorithms when specific dependency structures are assumed to exist between the test statistics of each hypothesis.
However, in practice, these dependencies often cannot be known beforehand or tested after the fact.
Our algorithm, e-LOND, provides FDR control under arbitrary, possibly unknown, dependence. 
We show that our method is more powerful than existing approaches to this problem through simulations. We also formulate extensions of this algorithm to utilize randomization for increased power, and for constructing confidence intervals in online selective inference.
 \end{abstract}

\tableofcontents

\section{Introduction}
Science advances one hypothesis at a time. Moreover, the rate at which new hypotheses are tested has drastically increased in recent decades to the point where a single scientist can quickly test hundreds to thousands of hypotheses with the aid of computation. For example, a geneticist can now sequence thousands of genes from trial subjects and individually determine whether each of these genes has an effect on phenotypes of interest (e.g., disease, physical characteristics, etc.). A team of data scientists can test many variations of a website or app in A/B experiments to determine which version maximizes desirable user metrics. The key feature of all these examples is that hypotheses are being formulated and tested in an \emph{online} fashion --- the total number of hypotheses that are tested is unknown beforehand and possibly infinite.
Thus, we can formulate the online multiple testing problem, as receiving a stream of hypotheses, \(H_1, H_2, \dots\) --- typically, these are the null hypotheses we wish to reject (e.g., this gene has no effect on this disease, there is no association between socioeconomic status and future earning potential, this recommendation algorithm does not increase average user view count, etc.). A subset, $\hypset_0 \subseteq \naturals$, of these null hypotheses are truly null, where $\naturals$ denotes the natural numbers. We wish to discover all the hypotheses that are not null, i.e., discover the non-null hypotheses $\hypset_1 \coloneqq \naturals \setminus \hypset_0$. For each hypothesis, we observe some data and must immediately decide whether it is a discovery or not before observing future hypotheses. Thus, we denote the set of discoveries so far as \(\rejset_1 \subseteq \rejset_2 \subseteq \dots \subseteq \naturals\). The \textit{false discovery proportion} (FDP) refers to the proportion of discoveries in a discovery set \(\rejset\) that are truly null. We want to control the \emph{false discovery rate} (FDR), which is the expectation of the FDP. Define these as follows. \begin{align}
    \FDP(\rejset) \coloneqq \frac{|\rejset \cap \hypset_0|}{|\rejset| \vee 1}, \qquad \FDR(\rejset) \coloneqq \expect[\FDP(\rejset)].
\end{align}
$(X_t)_{t \in \mathbb{I}}$ denotes a sequence of objects indexed by a set $\mathbb{I}$ --- we drop the index set and write $(X_t)$ it is clear from context (often $\naturals$). Our goal is to produce discovery sets $(\rejset_t)$ that satisfy the following guarantee:
\begin{align}
    \FDR(\rejset_t) \leq \alpha \text{ for all }t \in \naturals,
    \label{eq:online-fdr-control}
\end{align} while maximizing the number of discoveries.
FDR is reasonable metric to control in applications where one wishes to filter candidates that are promising before doing more extensive follow-up studies, e.g., clinical trials for drugs, genome-wide association studies for genetic factors, features for pushing to production, etc. We elaborate on the motivations for considering the FDR error metric in \Cref{sec:WhyFDR}.
\citet{robertson_online_multiple_2022a} comprehensively surveys the existing literature of online multiple testing.
In particular, multiple previous works have devoted significant effort to formulating different types of dependency that can arise in natural situations and deriving algorithms that provide online FDR control under these dependence structures \citep{zrnic_asynchronous_2021,zrnic2020power,fisher_online_control_2022,fisher_online_false_2022}.
These works have considered dependencies that are natural to the online setting (i.e., local dependence and dependence between asynchronously initiated experiments) as well as the popular PRDS condition \citep{benjamini_control_false_2001}. However, under \emph{unknown or arbitrary} dependence in the data, the assumptions for these algorithms are violated and they do not provably control the FDR.

There are many circumstances where one wishes to be robust to arbitrary dependence --- we list some below:
\ifarxiv{}{\vspace{-10pt}}
\begin{itemize}[leftmargin=*]
    \item \textbf{Data reuse}. A natural way in which unknown dependency might arise is when one uses the same dataset to evaluate a large number of hypotheses. Although reusing data for different hypothesis tests is not generally a statistically valid practice, this practice inevitably occurs, as data collection may be difficult or prohibitively expensive. For example, in many applied areas of machine learning, the same dataset may be used to evaluate many different methods, e.g., Kaggle competitions \citep{bojer_kaggle_forecasting_2021}, the UCI data repository~\citep{newman1998uci}. Similarly, open data repositories in science also are reused across many studies~\citep{auton_global_reference_2015,burton_genome-wide_association_2007,koscielny_international_mouse_2014}. Data reuse naturally comes up in offline policy evaluation in reinforcement learning, since often deploying a new policy has costs (e.g., expenses incurred by new actions, loss of revenue if a policy underperforms, etc.), and one would wish to backtest many policies on previously collected data. In all these cases, the statistics calculated for each test are highly dependent, since they use the same data.

\item \textbf{Temporal overlap}. This type of dependency is considered primarily in works involving local dependencies \citep{zrnic_asynchronous_2021}, as it occurs when data collected for different hypotheses overlap or are subject to temporal noise. For example, in A/B testing, users are incrementally added to each experiment over time. However, since there is no partitioning of users across experiments, experiments may overlap in users. This induces a dependence among the resulting test statistics. Temporal events (e.g., holidays or weekends) can also induce time-dependent noise. We elaborate on the ``doubly sequential framework'' relevant to this setting in \Cref{sec:Evalue}.

    \item \textbf{Inherent dependence}. Dependence between statistics might simply arise because of the data generating process. One common type is dependence that arises from sampling without replacement (WoR) from a finite population. Sampling WoR naturally arises when we wish to test the average treatment effect of a treatment on the finite population \citep{splawa-neyman_application_probability_1990} --- the statistics calculated for different treatments allocated to different samples are dependent --- we simulate our methods in this setting in \Cref{sec:Simulations}. Similarly, dependence also arises when doing coarser cluster (rather than individual) based randomization~\citep{campbell_developments_cluster_2007}. Dependence can also come from a data-dependent sampling mechanism, which we can observe in multi-armed bandits or adaptive sampling settings. \end{itemize}
In many experiments, one may not know ahead of time which combination of the aforementioned types of dependencies may occur, nor the specific structure they may take. This is particularly relevant in online multiple testing, since the nature of the hypotheses being tested and which types of data are being used to conduct the tests are not known a priori. Hence, being simultaneously powerful and robust to arbitrary dependence is a highly practical desiderata.

The primary of contribution of this paper is a new algorithm, \ELOND, that provably controls FDR, i.e., satisfies \eqref{eq:online-fdr-control}, under unknown and arbitrary dependence, while being more powerful (i.e, makes more discoveries) than previous state-of-the-art algorithms. Our method accomplishes this by utilizing \emph{e-values}, a class of statistics that has garnered significant recent attention in hypothesis testing. E-values are central in sequential testing \citep{ramdas_admissible_anytimevalid_2020,ramdas_how_can_2021} as every admissible sequential test utilizes an e-value. We characterize a ``doubly sequential framework'' of scientific experimentation that combines sequential tests with online multiple testing in \Cref{sec:Evalue}, and illustrate how retaining validity under arbitrary dependence is particularly useful in this framework. A notable example of an e-value is the universal inference statistic \citep{wasserman_universal_inference_2020}, which allows for testing of composite nulls without regularity conditions. This, in turn, enables the construction of tests for novel problems where no prior valid test exists --- an example of this is testing whether a distribution is log-concave \citep{dunn_universal_inference_2022,gangrade_sequential_test_2023}. The kinds of hypotheses for which e-values are applicable is quite comprehensive. We refer the reader to \citet{ramdas_gametheoretic_statistics_2022} for thorough collection of examples for which e-values are applicable.

\ifarxiv{\paragraph}{\noindent\textbf}{P-values vs.\ e-values.} Since the formulation of online multiple testing by \citet{foster2008alpha}, solutions have only assumed a \emph{p-value}, $P_t$, is associated with hypothesis $H_t$ and satisfies the following,
\begin{align}
\prob{P_t \leq s} \leq s \text{ for all }s \in [0, 1]\text{ if }t \in \hypset_, \label{eqn:PMarginalProb}
\end{align}
for all $t \in \naturals$. We consider the novel setting where, instead, an \emph{e-value}, $E_t$, accompanies each hypothesis $H_t$ and satisfies the following property for all $t \in \naturals$:
\begin{align}
\expect[E_t] \leq 1\text{ if }t \in \hypset_0.
    \label{eqn:EMarginalExpect}
\end{align}

An \emph{online multiple testing algorithm} is a sequence of (possibly random) test levels $(\alpha_t)$, where $\alpha_t \in [0, 1]$ for all $t \in \naturals$, and the algorithm produces discovery set $\rejset_t$ at the $t$th step in the following fashion:
\begin{align}
    \rejset_t  = \begin{cases}
        \{i \in [t]: P_i \leq \alpha_i\} \text{ if using p-values,}\\
        \{i \in [t]: E_i \geq 1 / \alpha_i\} \text{ if using e-values}\\
    \end{cases}.
\end{align}
The definition of $\rejset_t$ in the e-value case is equivalent to the p-value case if we assumed our p-values were formulated as $P_t = 1/  E_t$ --- one can see  this is a bona fide p-value by applying Markov's inequality to the e-value definition in \eqref{eqn:EMarginalExpect}.
One can consider e-value algorithms as operating on a special type of p-values. We leverage the specific properties of e-values to derive more powerful algorithms that remain valid even under arbitrary dependence.

\ifarxiv{\paragraph}{\noindent\textbf}{Our contributions.} We make the three following contributions in the main paper.
\ifarxiv{}{\vspace{-10pt}}
\begin{enumerate}[leftmargin=*]
    \item \emph{Powerful online FDR control under arbitrary dependence with e-values.} The current method for online FDR control under arbitrary dependence, the \rLOND\ algorithm \citep{javanmard2018online,zrnic_asynchronous_2021}, is unnecessarily conservative when applied to e-values. The \rLOND\ algorithm corrects each of its test levels by an additional factor that is logarithmic in the number of hypotheses tested so far, compared to its counterpart, the \LOND\ algorithm, that ensures FDR control under a much more stringent assumption of positive dependence. This is similar to the penalty paid by the Benjamini-Yekutieli procedure \citep{benjamini_control_false_2001} in the offline setting. Our algorithm, \ELOND, operates on e-values, but does not require the additional correction. Thus, it can maintain FDR control regardless of the dependence structure and dominates the standard \rLOND\ algorithm. Another previous approach to FDR control under dependence is the LORD$^*$ algorithm, which requires a priori knowledge of which hypotheses have statistics that are dependent. Our numerical simulations in \Cref{sec:Simulations} show that \ELOND\ is more powerful than \rLOND\, and becomes more powerful than LORD$^*$ when more hypotheses are mutually dependent.

    \item \emph{Additional power through randomization.} If one is interested in maximizing the power of their online multiple testing procedure, then one can incorporate randomization in the manner of of \citet{xu_more_powerful_2023}, who use randomization to improve offline multiple testing procedures. We develop variants of \ELOND\ and \rLOND\ (\ULOND\ and \UrLOND, respectively), that use the randomization of a single uniform random variable to increase their power over their deterministic counterparts.
These randomized methods dominate (i.e., never make fewer, and often make more discoveries) their deterministic versions and hence should be employed if one is interested in making as many discoveries as possible.

    \item \emph{Online FCR control with no restrictions on selection rules or dependence on e-CIs.} In addition to online FDR control, we also provide novel results for the online selective confidence interval (CI) problem introduced by \citet{weinstein2019online}. In this problem, one wishes to output, in an online fashion, CIs for a stream of parameters such that the overall false coverage rate (FCR) of all the CIs is controlled. This problem adds in the additional complexity of having a selection rule --- while a discovery is made at the $t$th hypothesis solely based on its test level $\alpha_t$, one decides whether a parameter should be selected for CI construction based on a selection rule $\Srm_t$ (which uses the observed data for the current and past parameters) that is separate from the coverage level of the CI, $1 - \alpha_t$. The extension of \ELOND\ to the online selective CI problem can control FCR under any sequence of selection rules, and arbitrary dependence. The sole caveat of this algorithm is that it operates on a subset of CIs based on e-values, called e-CIs \citep{vovk_confidence_discoveries_2023,xu_postselection_inference_2022}, which have been used for offline FCR control. 

\end{enumerate}

Our developments of \ELOND\ and \ULOND\ allow one to significantly improve power when e-values are available --- hence, our e-value methods are complementary to existing p-value based methods, i.e., \rLOND, for FDR control under arbtirary dependence. Our randomization techniques do benefit both e-value and p-value methods. Thus, a practitioner should use \rLOND\ or \UrLOND\ when only p-values are available, and \ELOND\ or \ULOND\ when e-values are available. When there is a mix of p-values and e-values, one should follow the guidance summarized in \Cref{corollary:LONDCalibration} of calibrating p-values to e-values.

\ifarxiv{}{\vspace{-10pt}}
\ifarxiv{\paragraph}{\noindent\textbf}{Outline.} In \Cref{sec:Evalue}, we discuss the ``doubly sequential'' framework that abstracts scientific experimentation. We recap existing online multiple testing algorithms and introduce the \ELOND\ algorithm in \Cref{sec:ELOND}.  In \Cref{sec:FCR} we devise methods for the online selective inference problem from \ELOND.
We demonstrate the power of \ELOND\ empirically through numerical simulations in  \Cref{sec:Simulations}, and summarize our findings in \Cref{sec:Conclusion}. We defer discussion of related work to \Cref{sec:RelatedWork,sec:ADA}. Further, we apply our methods to an online version of the model-free selective inference problem of \citet{jin_model-free_selective_2023} in \Cref{sec:conformal}, and evaluate their performance on real data from protein prediction task. Lastly, we show a sharpness result on the FDR control of \ELOND\ in \Cref{sec:sharp-fdr}, i.e., there exists instances where the true FDR is arbitrarily close to $\alpha$.

\section{Doubly sequential inference}
\label{sec:Evalue}
\begin{figure}[h]
    \centering
    \includegraphics[width=\ifarxiv{0.5\textwidth}{\columnwidth}]{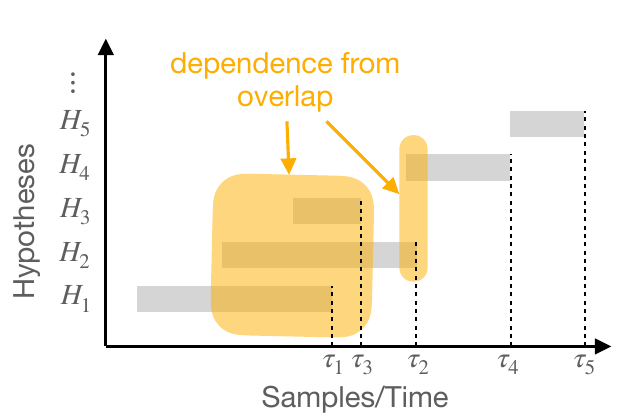}
    \ifarxiv{}{\vspace{-10pt}}
    \caption{A cartoon of the doubly-sequential framework for experimentation. Real world time or the number of samples collected is on the x-axis --- experiments run sequentially, stopping at $\tau_t$ when enough samples are collected for hypothesis $H_t$. Hypotheses arriving in a stream are shown on the y-axis. As a result of the overlap in time that data for different experiments is collected, dependence between hypotheses can occur.}
    \label{fig:DoubleSeq}
\end{figure}

E-values are particularly applicable to the sequential fashion in which data is gathered in many modern applications of hypothesis testing. In the sequential setting, samples are received one at a time, e.g., patients entering a clinical trial, users joining an A/B test, etc. To maximize efficiency, we collect data samples $X_1, X_2, \dots$ (here we are indexing by sample, rather than hypothesis) and stop sampling as soon as we are able to make a decision about the result of the experiment. A key concept for sequential testing is the \emph{e-process}, which is a process $(M_t)$ where $M_t$ is a function of the first $t$ samples $(X_1, \dots, X_t)$ and satisfies the following property:
\begin{align}
    \expect[M_\tau] \leq 1 \text{ for all stopping times }\tau\text{ under }H_0.\label{eq:e-process-def}
\end{align} A \emph{stopping time} is a random time $\tau$ that can be determined based on the data seen so far, i.e., one can determine whether $\tau = t$ solely by using $(X_1, \dots, X_t)$. From the definition of an e-process in \eqref{eq:e-process-def}, one can see  $M_\tau$ is an e-value, so making a discovery when $M_\tau \geq \alpha^{-1}$ is a valid hypothesis test with Type I error of at most $\alpha$. Consequently, a ubiquitous stopping time is the first time at which $M_t$ exceeds the test threshold $\alpha^{-1}$. \cite{ramdas_admissible_anytimevalid_2020} showed that any admissible sequential test which allows early stopping of this sort \emph{must} be derived from e-processes, making e-values a central and necessary component of sequential testing.

This leads us to the \emph{doubly sequential framework} \citep{robertson_online_multiple_2022a}, where both samples are hypotheses arrive sequentially, as a widely applicable framework for how scientific experimentation is done. An online multiple testing algorithm that utilizes e-values a quite useful in this framework, since e-values are critical to sequential testing.
\Cref{fig:DoubleSeq} illustrates this concept. Both data and hypotheses arrive in streams, and one must be able to test new hypotheses and utilize new data as evidence in an online fashion. When many experiments are run simultaneously, the data gathered for each experiment are \emph{dependent}, either due noise that jointly affects samples collected at a similar time (e.g., season fluctuations affecting the e-commerce habits of users), or because some experiments might share some of the collected data (e.g., clinical endpoints that utilize data from prior trials). Thus, a common application of this framework is in large scale A/B testing at companies \citep{xu2015infrastructure}, where separate data scientists are starting new experiments regularly, and have concurrent existing experiments that gather data sequentially.  \cite{yang_framework_multiarmedbandit_2017} illustrate an instance where the data for each hypothesis is collected through a multi-armed bandit.

Regardless, all these scenarios can involve complicated and unknown dependence between the statistics for testing each hypothesis. Thus, our methods that are robust to dependence allow for valid inference in the doubly sequential framework, and we verify this empirically in our experiments in \Cref{sec:Simulations}.

\section{e-LOND: FDR control via e-values}
\label{sec:ELOND}
To prepare ourselves for \ELOND,  we first recap what the current state-of-the-art algorithms are. Let  a \emph{discount sequence} $(\gamma_t)$ be a fixed sequence of nonnegative reals that satisfy $\sum_{t = 1}^\infty \gamma_t \leq 1$, and $\alpha \in [0, 1]$ is our desired level of FDR control.
For all sequences of discovery sets $(\rejset_t)$, we let $\rejset_0 = \emptyset$. An algorithm that produces a sequence of discovery sets $(\rejset_t^1)$ \emph{strictly dominates} an algorithm that produces $(\rejset_t^2)$ iff (1) $\rejset_t^1 \supseteq \rejset_t^2$ on all sequences of p-values $(P_t)$ (or e-values $(E_t)$) and all $t \in \naturals$, and (2) there is a sequence p-values $(P_t)$ (or e-values $(E_t)$) s.t.\ there exists $t \in \naturals$ where $\rejset_t^1 \supset \rejset_t^2$. Further, $(\rejset_t^1)$ is said to \emph{strictly dominate} $(\rejset_t^2)$ \emph{in expectation} if condition (1) holds and (3) if there also exists a sequence of p-values $(P_t)$ (or e-values $(E_t)$) and  $t \in \naturals$ such that $\expect[|\rejset_t^1| \mid (E_i)_{i \in [t]}] > \expect[|\rejset_t^2| \mid (E_i)_{i \in [t]}]$, i.e., the expected number of discoveries is strictly larger when taken only over the randomness in the algorithm.
\ifarxiv{\subsection{Prior work: the LOND and r-LOND algorithms}}{}
We first recall the \LOND\ algorithm. For each $t \in \naturals$ define:
\begin{align}
    \alpha_t^{\LOND} \coloneqq \alpha \gamma_t \cdot (|\rejset_{t - 1}^\LOND| + 1),
\end{align}
where $(\rejset_t^{\LOND})$ are the corresponding discovery sets. The \LOND\ algorithm requires p-values to be independent or positively dependent for FDR control.
\begin{fact}[Theorem 4 \citep{zrnic_asynchronous_2021}]
    For p-values $(P_t)$ that satisfy \eqref{eqn:PMarginalProb} and are independent or PRDS \citep[Definition 1]{zrnic_asynchronous_2021},
    $\FDR(\rejset_t^{\LOND}) \leq \alpha$ for each $t \in \naturals$.
    \label{fact:LONDIndPRDS}
\end{fact}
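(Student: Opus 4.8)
The plan is to fix an arbitrary horizon $T \in \naturals$ and control $\FDR(\rejset_T^{\LOND}) = \expect[\FDP(\rejset_T^{\LOND})]$ by splitting the false discovery proportion over individual null indices. Writing $R_T \coloneqq |\rejset_T^{\LOND}|$, we have
\begin{align*}
\FDP(\rejset_T^{\LOND}) = \sum_{i \in \hypset_0,\, i \leq T} \frac{\mathbf{1}\{P_i \leq \alpha_i^{\LOND}\}}{R_T \vee 1},
\end{align*}
so by linearity it suffices to prove the per-null bound $\expect\big[\mathbf{1}\{P_i \leq \alpha_i^{\LOND}\}/(R_T \vee 1)\big] \leq \alpha \gamma_i$ for each null $i \leq T$; summing and invoking $\sum_i \gamma_i \leq 1$ then gives $\FDR \leq \alpha$. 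The structural fact driving the argument is that rejecting $H_i$ increments the discovery count: on $\{P_i \leq \alpha_i^{\LOND}\}$ we have $R_T \geq |\rejset_i^{\LOND}| = |\rejset_{i-1}^{\LOND}| + 1$, hence
\begin{align*}
\frac{\mathbf{1}\{P_i \leq \alpha_i^{\LOND}\}}{R_T \vee 1} \;\leq\; \frac{\mathbf{1}\{P_i \leq \alpha_i^{\LOND}\}}{|\rejset_{i-1}^{\LOND}| + 1}.
\end{align*}
The point is that the new denominator is exactly the factor in $\alpha_i^{\LOND} = \alpha \gamma_i (|\rejset_{i-1}^{\LOND}| + 1)$, so it is poised to cancel the threshold.

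For the independence case I would condition on the past. Both $\alpha_i^{\LOND}$ and $|\rejset_{i-1}^{\LOND}|$ are measurable with respect to $\sigma(P_1, \dots, P_{i-1})$, i.e.\ predictable, while the null $P_i$ is independent of this history and super-uniform by \eqref{eqn:PMarginalProb}. Thus
\begin{align*}
\expect\left[\frac{\mathbf{1}\{P_i \leq \alpha_i^{\LOND}\}}{|\rejset_{i-1}^{\LOND}| + 1} \,\Big|\, P_1, \dots, P_{i-1}\right]
= \frac{\prob{P_i \leq \alpha_i^{\LOND} \mid P_1, \dots, P_{i-1}}}{|\rejset_{i-1}^{\LOND}| + 1}
\leq \frac{\alpha_i^{\LOND}}{|\rejset_{i-1}^{\LOND}| + 1}
= \alpha \gamma_i,
\end{align*}
and the tower property yields the per-null bound.

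For the PRDS case the conditioning step breaks, since $P_i$ is no longer independent of the (correlated) predictable threshold, and the positive feedback --- a small $P_i$ tends to shrink the past p-values, inflating $|\rejset_{i-1}^{\LOND}|$ and thus the threshold --- must be tamed. Here I would recast the rule in self-consistent form: rejecting $H_i$ forces $\alpha_i^{\LOND} \leq \alpha \gamma_i R_T$, so $\rejset_T^{\LOND} \subseteq \{i \leq T : P_i \leq \alpha \gamma_i R_T\}$, and it remains to show $\expect[\mathbf{1}\{P_i \leq \alpha \gamma_i R_T\}/(R_T \vee 1)] \leq \alpha \gamma_i$. I expect this to be the main obstacle. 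The route is the Benjamini--Yekutieli positive-dependence superuniformity lemma: one first checks by induction that $R_T$ is coordinatewise non-increasing in the p-value vector (lowering any $P_j$ only creates rejections, which only relax later thresholds, and so never destroys rejections), then writes $\mathbf{1}\{P_i \leq \alpha \gamma_i R_T\}/R_T = \sum_{r \geq 1} r^{-1}\, \mathbf{1}\{P_i \leq \alpha \gamma_i r\}\, \mathbf{1}\{R_T = r\}$ and applies Abel summation, using the PRDS monotonicity of $p \mapsto \prob{R_T \geq r \mid P_i = p}$ on the decreasing events $\{R_T \geq r\}$ together with the super-uniformity of $P_i$ to collapse the sum to at most $\alpha \gamma_i$. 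Summing the per-null bounds over all null $i \leq T$ and using $\sum_i \gamma_i \leq 1$ completes both cases.
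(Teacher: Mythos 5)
The paper states this result as an imported fact from \citet{zrnic_asynchronous_2021} and contains no proof of its own, and your reconstruction is correct and follows essentially the same route as the cited source: the per-null decomposition with the key inequality $|\rejset_T^{\LOND}| \geq |\rejset_{i-1}^{\LOND}| + 1$ on the rejection event (the same skeleton this paper uses in its proofs of \Cref{thm:ELONDArbDep} and \Cref{thm:UrLOND}), predictable-threshold conditioning under independence, and, under PRDS, the self-consistency reduction $\rejset_T^{\LOND} \subseteq \{i : P_i \leq \alpha\gamma_i |\rejset_T^{\LOND}|\}$ combined with the coordinatewise monotonicity of LOND and the Blanchard--Roquain/Benjamini--Yekutieli super-uniformity lemma. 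Your PRDS step is exactly the deterministic ($U \equiv 1$, $\beta = \mathrm{id}$) analogue of \Cref{fact:rand-su-lemma}, which the paper invokes for \UrLOND, so nothing is missing.
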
 To achieve FDR control under arbitrary dependence, the \rLOND\ algorithm outputs more conservative test levels. For each $t \in \naturals$, define
\begin{align}\label{eq:r-lond}
    \alpha_t^{\rLOND}\coloneqq \alpha \gamma_t \cdot \beta_t(|\rejset_{t - 1}^\rLOND| + 1).
\end{align} Here, $(\beta_t)$ is a sequence of \emph{reshaping functions} \citep{blanchard_two_simple_2008}. A reshaping function $\beta: [0, \infty) \mapsto [0, \infty)$ is a nondecreasing function that can be written in the form ${\beta(r) = \int_0^r x d\nu(x)}$ where $\nu$ is any probability measure on $[0, \infty)$.
Let $(\rejset_{t}^{\rLOND})$ denote the sequence of discovery sets output by \rLOND.
\begin{fact}[Theorem 2.7 \citep{javanmard_online_control_2015}, Theorem 4 \citep{zrnic_asynchronous_2021}
    \footnote{Strictly speaking, \rLOND\ in \citet{zrnic_asynchronous_2021} is formulated as $\alpha_t^{\rLOND} = \alpha\gamma_t \cdot \beta_t(|\rejset_{t - 1}| \vee 1)$ which is less powerful than~\eqref{eq:r-lond}, the latter being the original \rLOND\ \citep{javanmard_online_control_2015}. However, the proofs of~\cite{zrnic_asynchronous_2021} carry through to the original \rLOND.}
    ]\label{fact:LONDArbDep}
    Under arbitrary dependence in $(P_t)$, i.e., under \eqref{eqn:PMarginalProb},
    $\FDR(\rejset_t^{\rLOND}) \leq \alpha$ for each $t \in \naturals$.
\end{fact}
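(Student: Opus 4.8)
The plan is to bound the FDR directly by decomposing the FDP into per-hypothesis contributions and controlling each contribution via the reshaping machinery of \citet{blanchard_two_simple_2008}. Fix $t$, and write $\rejset_t$ and $\alpha_i$ for $\rejset_t^{\rLOND}$ and $\alpha_i^{\rLOND}$. Since $i \in \rejset_t$ exactly when $i \le t$ and $P_i \le \alpha_i$, linearity of expectation gives
\begin{align*}
\FDR(\rejset_t) = \expect\!\left[\frac{|\rejset_t\cap\hypset_0|}{|\rejset_t|\vee1}\right] = \sum_{i\in\hypset_0,\,i\le t}\expect\!\left[\frac{\mathbf{1}\{P_i\le\alpha_i\}}{|\rejset_t|\vee1}\right].
\end{align*}

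The first key step is a denominator bound exploiting the monotonicity $\rejset_{i}\subseteq\rejset_t$ for $i\le t$. On the event $\{P_i\le\alpha_i\}$ the $i$th hypothesis is itself rejected, so $|\rejset_t|\ge|\rejset_i| = |\rejset_{i-1}|+1$; hence $|\rejset_t|\vee1 \ge |\rejset_{i-1}|+1 =: V_i \ge 1$, and each summand is at most $\expect\!\left[\mathbf{1}\{P_i\le\alpha_i\}/V_i\right]$. Recalling $\alpha_i = \alpha\gamma_i\,\beta_i(V_i)$, this is precisely $\expect\!\left[\mathbf{1}\{P_i\le\alpha\gamma_i\beta_i(V_i)\}/V_i\right]$.

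The second and central step is to invoke the reshaping lemma: for a null (hence super-uniform) $P_i$, any reshaping function $\beta_i$, any constant $c\ge0$, and any nonnegative random variable $V_i$ that may depend arbitrarily on $P_i$,
\begin{align*}
\expect\!\left[\frac{\mathbf{1}\{P_i\le c\,\beta_i(V_i)\}}{V_i}\right]\le c .
\end{align*}
Applying this with $c=\alpha\gamma_i$ bounds the $i$th summand by $\alpha\gamma_i$. Summing over $i\in\hypset_0\cap[t]$ and using $\sum_t\gamma_t\le1$ then yields $\FDR(\rejset_t)\le\alpha\sum_{i=1}^\infty\gamma_i\le\alpha$, which holds for every $t$, as claimed.

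The main obstacle is entirely contained in the reshaping lemma, and this is exactly where arbitrary dependence must be confronted: the threshold $\alpha_i$ is a function of $\rejset_{i-1}$, hence of $P_1,\dots,P_{i-1}$, which under \eqref{eqn:PMarginalProb} alone may be coupled to $P_i$ in an unknown way, so one cannot treat $\alpha_i$ as fixed when integrating over $P_i$. The reason the lemma survives this is that $\beta_i(r)=\int_0^r u\,d\nu(u)$ is built so the factor $u\,d\nu(u)$ cancels the $1/V_i$. Concretely, I would bound $\mathbf{1}\{P_i\le c\beta_i(V_i)\}/V_i$ pointwise by $1/V^*(P_i)$, where $V^*(p)=\inf\{v:\beta_i(v)\ge p/c\}$, which removes the adversarial choice of $V_i$; since $1/V^*$ is nonincreasing and $P_i$ is super-uniform, $\expect[1/V^*(P_i)]\le\int_0^1 dp/V^*(p)$, and the change of variables $p=c\beta_i(v)$ with $dp=c\,v\,d\nu(v)$ turns this into $c\int d\nu\le c$. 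I would state this as a standalone lemma (it is essentially Blanchard--Roquain reshaping) so that the online FDR bound follows by the routine summation above.
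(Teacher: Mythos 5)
Your proposal is correct and is essentially the same argument the paper relies on: the paper imports this result as a Fact without reproving it, but its own proof of the randomized analogue (\Cref{thm:UrLOND} in \Cref{sec:urlond-proof}) uses exactly your decomposition --- bound the denominator by $|\rejset_{i-1}|+1$ on the rejection event, apply the reshaping lemma (\Cref{fact:rand-su-lemma}, which reduces to the Blanchard--Roquain lemma you state upon taking $U_t \equiv 1$), and sum the $\alpha\gamma_i$. Your standalone proof of that lemma via the generalized inverse $V^*(p)=\inf\{v:\beta_i(v)\ge p/c\}$ is also the standard Blanchard--Roquain argument, so nothing differs in substance.
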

\ifarxiv{}{\vspace{-20pt}}
A typical choice of reshaping function is
\begin{align}
\beta_t^{\BY}(r) = (\lfloor r \rfloor \wedge t) / \ell_t,
\end{align}  where $\ell_t \coloneqq \sum_{i = 1}^t 1 / i$ --- this is the choice used by the Benjamini-Yekutieli (BY) procedure \citep{benjamini_control_false_2001} for offline FDR control.
Hence, one can consider \LOND\ and \rLOND\ as the online analogs of the Benjamini-Hochberg (BH) procedure \citep{benjamini_controlling_false_1995} for independent or PRDS p-values and the BY procedure for arbitrarily dependent p-values, respectively.
\ifarxiv{\subsection{The e-LOND algorithm}}{}
Our \emph{\ELOND\ algorithm} achieves the best-of-both worlds in the sense it has the same powerful test levels as \LOND, but also is valid under arbitrary dependence like \rLOND. For each $t \in \naturals$, define
\begin{align}
    \alpha_{t}^{\ELOND} \coloneqq \alpha \gamma_t \cdot (|\rejset_{t - 1}^\ELOND| + 1).
\end{align} $(\rejset_t^{\ELOND})$ denotes the resulting discovery sets. The following is our main result.
\begin{theorem}
    Under arbitrary dependence on e-values \eqref{eqn:EMarginalExpect}, \(\FDR(\rejset_t^\ELOND) \leq \alpha\) for each \(t \in \naturals\).
    In addition, \ELOND\ strictly dominates \rLOND\ applied to $(1 / E_t)$ for any sequence of reshaping functions $(\beta_t)$.
\label{thm:ELONDArbDep}
\end{theorem}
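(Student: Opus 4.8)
The plan is to treat the two assertions separately: the FDR bound via a deterministic (pointwise) control of the false discovery proportion that exploits the e-value property, and the domination via a straightforward induction. For the FDR guarantee, I would fix $t$, abbreviate $R_t = |\rejset_t^{\ELOND}|$, and bound $\FDP(\rejset_t^{\ELOND})$ before taking any expectation. Two facts drive the computation. First, since the procedure never revisits past decisions, its discovery sets are nested with $\rejset_i^{\ELOND} = \rejset_t^{\ELOND} \cap [i]$ for $i \leq t$; consequently any rejected index $i \leq t$ satisfies $|\rejset_{i-1}^{\ELOND}| + 1 \leq R_t$, because $\rejset_{i-1}^{\ELOND} \cup \{i\} \subseteq \rejset_t^{\ELOND}$. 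Second, the e-value analogue of Markov's inequality holds pointwise: since $E_i \geq 0$, we have $\mathbf{1}\{E_i \geq 1/\alpha_i^{\ELOND}\} \leq \alpha_i^{\ELOND} E_i$. Plugging $\alpha_i^{\ELOND} = \alpha\gamma_i(|\rejset_{i-1}^{\ELOND}|+1)$ into the numerator of the FDP, applying the pointwise bound on each rejected null index, and using $|\rejset_{i-1}^{\ELOND}|+1 \leq R_t$ makes the $R_t$ factor cancel the denominator $R_t \vee 1$, leaving the deterministic inequality $\FDP(\rejset_t^{\ELOND}) \leq \alpha \sum_{i \in \hypset_0,\, i \leq t} \gamma_i E_i$.

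Taking expectations then closes the argument using only marginal information: by linearity and \eqref{eqn:EMarginalExpect}, $\FDR(\rejset_t^{\ELOND}) \leq \alpha \sum_{i \in \hypset_0,\, i \leq t} \gamma_i \expect[E_i] \leq \alpha \sum_{i=1}^\infty \gamma_i \leq \alpha$. The essential point --- and the reason no reshaping correction is needed under arbitrary dependence --- is that after the cancellation we integrate each $E_i$ on its own, so only the marginal bound $\expect[E_i]\leq 1$ is ever invoked and no joint law of the e-values enters. This is the feature I expect to be conceptually central, and it is exactly what fails for p-value LOND (where the indicators cannot be dominated marginally), forcing the reshaping penalty in \rLOND.

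For the domination claim I would use that every reshaping function is dominated by the identity: $\beta(r) = \int_0^r x\,d\nu(x) \leq r\,\nu([0,r]) \leq r$. I would then show $\rejset_t^{\rLOND} \subseteq \rejset_t^{\ELOND}$ for all $t$ by induction, with base case $\rejset_0 = \emptyset$. Assuming $|\rejset_{t-1}^{\rLOND}| \leq |\rejset_{t-1}^{\ELOND}|$, the monotonicity of $\beta_t$ in the rejection count together with $\beta_t(r)\leq r$ gives $\alpha_t^{\rLOND} = \alpha\gamma_t\beta_t(|\rejset_{t-1}^{\rLOND}|+1) \leq \alpha\gamma_t(|\rejset_{t-1}^{\ELOND}|+1) = \alpha_t^{\ELOND}$, so e-LOND uses the smaller rejection threshold $1/\alpha_t^{\ELOND} \leq 1/\alpha_t^{\rLOND}$ for $E_t$. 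Since \rLOND\ applied to $(1/E_t)$ rejects $t$ exactly when $E_t \geq 1/\alpha_t^{\rLOND}$, any index it rejects is also rejected by e-LOND, and combined with the inductive inclusion on $[t-1]$ this yields $\rejset_t^{\rLOND} \subseteq \rejset_t^{\ELOND}$, which is condition (1) of strict domination.

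To establish the strict part (condition (2)) I would construct a witnessing e-value sequence for the given $(\beta_t)$. No reshaping function equals the identity at every positive integer --- $\beta(1)=1$ forces $\nu = \delta_1$ and then $\beta(2)=1<2$ --- so for each $t$ there is an argument $n \in \{1,2\}$ with $\beta_t(n) < n$. I can thus feed in the first few e-values so that the common rejection count reaches $n-1$ just before some step $t$, opening a nonempty gap $[1/\alpha_t^{\ELOND}, 1/\alpha_t^{\rLOND})$; placing $E_t$ in this gap yields $t \in \rejset_t^{\ELOND} \setminus \rejset_t^{\rLOND}$. I expect the main thing to get right to be precisely this gap-existence check for an adversarially chosen reshaping sequence, since reshaping functions may coincide with the identity at isolated small arguments; the observation that a witness always exists at argument $1$ or $2$ is what resolves it.
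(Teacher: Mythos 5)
Your proposal is correct and follows essentially the same route as the paper: the FDR bound rests on the identical pointwise inequality $\ind{E_i \geq 1/\alpha_i} \leq \alpha_i E_i$ combined with $|\rejset_{i-1}^{\ELOND}| + 1 \leq |\rejset_t^{\ELOND}| \vee 1$ on the rejection event, so that after cancellation only the marginal bound $\expect[E_i] \leq 1$ is invoked, and the domination argument uses $\beta_t(r) \leq r$ plus a witness sequence exploiting that $\beta_t$ must fall strictly below the identity at argument $1$ or $2$. If anything, your handling of strictness is more careful than the paper's (which only exhibits a gap in test levels without explicitly placing an e-value inside it), and the one sub-case your sketch leaves implicit --- when the common rejection count cannot be synchronized because $\beta_1(1) = 0$ --- is harmless, since then \rLOND\ can never reject at step $1$ and strict inclusion already occurs at $t = 1$.
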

The proof relies on a simple observation about any e-value $E$ and test level $\alpha \in [0, 1]$ that allows us to directly upper bound the indicator of whether a discovery is made or not by the e-value itself:
\begin{align}
    \ind{E \geq \alpha^{-1}} = \ind{\alpha E \geq 1} \leq \alpha E.
    \label{eqn:DeterministicInequality}
\end{align} We defer the full proof to \Cref{sec:ArbDepProof}. Further, we show in \Cref{sec:sharp-fdr} that this level of FDR control is \emph{sharp}, i.e., one can design instances of e-values where the true FDR of \ELOND\ is arbitrarily close to the upper bound of $\alpha$.

The \ELOND\ algorithm has the same test levels $(\alpha_t)$ as \LOND, but we use different notation to emphasize that \ELOND\ operates on e-values with no restrictions on dependence and \LOND\ operate on p-values that are independent or satisfy PRDS. This is similar to the relationship between the e-BH procedure \citep{wang_false_discovery_2020} and BH for offline FDR control.

In addition, we can show \rLOND\ is actually a special case of \ELOND. To clarify how \rLOND\ is subsumed by \ELOND\ under arbitrary dependence, we introduce the notion of calibration. Any p-value $P$ can be calibrated into an e-value $E = f(P)$ using a \emph{ calibrator} \citep{vovk_e-values_calibration_2021}. A calibrator ${f: [0, 1] \mapsto [0, \infty)}$ is an nonincreasing, upper semicontinuous function that satisfies $\int_{0}^1 f(x) dx \leq 1$. We can define a specific sequence of calibrators $(f_t)$ that transform p-values into e-values such that \rLOND\ is a special case of \ELOND.
\begin{corollary}
If p-values $(P_t)$ satisfy \eqref{eqn:PMarginalProb}, we can construct an e-value $E_t = f_t(P_t)$ for each $t \in \naturals$
from a sequence of calibrators $(f_t)$. We achieve $\FDR(\rejset_t^\ELOND) \leq \alpha$ for each $t \in \naturals$ by \Cref{thm:ELONDArbDep}. If we define $f_t$  as follows:
\begin{align}
    f_t(p) =  (\alpha \gamma_t \cdot \lceil (p \ell_t /(\alpha \gamma_t)) \vee 1 \rceil)^{-1}
    \label{eqn:LONDCalibration}
\end{align} we recover \rLOND\ for FDR control under arbitrary dependence described in \Cref{fact:LONDArbDep}. This allows us to reap the benefits of \ELOND\ when only some hypotheses may have e-values, and the rest have p-values ---  we can calibrate just the p-values before running \ELOND.
\label{corollary:LONDCalibration}
\ifarxiv{}{\vspace{-20pt}}
\end{corollary}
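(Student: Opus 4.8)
The plan is to prove the corollary in two essentially independent parts. The first part shows that each $f_t$ from \eqref{eqn:LONDCalibration} is a genuine calibrator, so that $E_t = f_t(P_t)$ is a valid e-value and the FDR bound $\FDR(\rejset_t^\ELOND)\le\alpha$ follows verbatim from \Cref{thm:ELONDArbDep}. The second part shows that running \ELOND\ on these $(E_t)$ reproduces exactly the discovery sets of the BY-reshaped \rLOND\ of \Cref{fact:LONDArbDep}. Since the excerpt already records that calibrating a super-uniform p-value yields an e-value, once $f_t$ is verified to be a calibrator I get $\expect[f_t(P_t)]\le\int_0^1 f_t(p)\,dp\le 1$ for null $t$ under \eqref{eqn:PMarginalProb}, so the real work is the calibrator check and the decision-equivalence check.

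For the calibrator check, fix $t$ and write $c=\alpha\gamma_t$. Monotonicity and upper semicontinuity are immediate from structure: the inner map $p\mapsto\lceil(p\ell_t/c)\vee 1\rceil$ is nondecreasing and left-continuous, jumping up by one only as $p$ crosses a breakpoint $cn/\ell_t$, so $f_t$ is nonincreasing and retains the larger value at each jump, giving $\limsup_{p\to p_0}f_t(p)=f_t(p_0)$. The crux is the normalization $\int_0^1 f_t\le 1$. Here $f_t$ is piecewise constant, equal to $1/(cn)$ on $(c(n-1)/\ell_t,\,cn/\ell_t]$ (the $\vee 1$ placing $p=0$ in the $n=1$ piece), each piece having length $c/\ell_t$, so the $n$-th piece contributes $\tfrac{1}{cn}\cdot\tfrac{c}{\ell_t}=\tfrac{1}{n\ell_t}$ and the first $t$ pieces sum to $\tfrac{1}{\ell_t}\sum_{n=1}^t\tfrac1n=1$; the harmonic sum telescopes exactly against the $\ell_t$ normalization. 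I expect this to be the main obstacle, because the $t$-th breakpoint $\alpha\gamma_t t/\ell_t$ need not reach $1$: for small $\gamma_t$ the bare formula keeps assigning positive mass on $(\alpha\gamma_t t/\ell_t,1]$ and the integral can exceed $1$. The resolution is to cap the effective count at $t$ --- precisely the role of the $\wedge t$ in $\beta_t^{\BY}$ --- equivalently to truncate $f_t$ to $0$ beyond its $t$-th breakpoint. This costs nothing for the decisions, since for such $p$ one has $\lceil(p\ell_t/c)\vee 1\rceil\ge t+1$ and hence $E_t\le(\alpha\gamma_t(t+1))^{-1}<(\alpha\gamma_t t)^{-1}$, below the smallest attainable \ELOND\ threshold; with the cap the integral equals $1$ and $f_t$ is a bona fide calibrator.

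For the decision-equivalence part I would induct on $t$, with base case $\rejset_0=\emptyset$. Assuming the two algorithms agree through step $t-1$, they share the same count $R_{t-1}\coloneqq|\rejset_{t-1}|$, with $R_{t-1}\le t-1$. \ELOND\ rejects $H_t$ iff $E_t\ge(\alpha\gamma_t(R_{t-1}+1))^{-1}$; substituting $E_t=f_t(P_t)$ and cancelling the positive factor $\alpha\gamma_t$ turns this into $\lceil(P_t\ell_t/(\alpha\gamma_t))\vee 1\rceil\le R_{t-1}+1$. Because $R_{t-1}+1$ is a positive integer, $\lceil x\rceil\le N\iff x\le N$ for integer $N\ge 1$, and the $\vee 1$ is harmless when $N\ge 1$, this reduces to $P_t\le\alpha\gamma_t(R_{t-1}+1)/\ell_t$. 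Finally $R_{t-1}+1\le t$ gives $(R_{t-1}+1)\wedge t=R_{t-1}+1$, so the right-hand side is exactly $\alpha\gamma_t\beta_t^{\BY}(R_{t-1}+1)=\alpha_t^{\rLOND}$; the \ELOND\ rule on $E_t$ and the \rLOND\ rule on $P_t$ therefore coincide, so $\rejset_t^\ELOND=\rejset_t^{\rLOND}$ and the induction closes. Combining the two parts delivers both the FDR guarantee and the identification of \ELOND\ on $(E_t)$ with \rLOND\ on $(P_t)$.
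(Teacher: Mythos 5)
Your proposal is correct, and since the paper states \Cref{corollary:LONDCalibration} without any separate proof, your two-part verification (calibrator check plus decision-equivalence induction) supplies exactly the argument the paper leaves implicit. The equivalence step is right: with $N = |\rejset_{t-1}|+1 \in \{1,\dots,t\}$ shared by both procedures under the inductive hypothesis, the rejection condition $f_t(P_t) \geq (\alpha\gamma_t N)^{-1}$ reduces, via $\lceil x \vee 1 \rceil \leq N \iff x \leq N$ for integer $N \geq 1$, to $P_t \leq \alpha\gamma_t N/\ell_t = \alpha\gamma_t \beta_t^{\BY}(N)$, with $N \leq t$ absorbing the $\wedge\, t$ in $\beta_t^{\BY}$; boundary cases are handled consistently on both sides since both rules are inclusive at the threshold.

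More importantly, you have caught a genuine wrinkle that the paper glosses over: the function in \eqref{eqn:LONDCalibration}, taken literally on all of $[0,1]$, is \emph{not} a calibrator. It is piecewise constant with value $(\alpha\gamma_t n)^{-1}$ on pieces of length $\alpha\gamma_t/\ell_t$, so its integral is $\ell_t^{-1}\sum_{n=1}^{N} n^{-1}$ with roughly $N \approx \ell_t/(\alpha\gamma_t)$ pieces, which generally exceeds $1$ (e.g., $t=1$, $\alpha\gamma_1 = 0.025$ gives integral $\approx 4.3$); hence $\expect[f_t(P_t)] \leq 1$ cannot be concluded and \Cref{thm:ELONDArbDep} does not literally apply as the corollary asserts. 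Your repair --- truncating $f_t$ to $0$ beyond its $t$-th breakpoint $\alpha\gamma_t t/\ell_t$, mirroring the $\wedge\, t$ cap in $\beta_t^{\BY}$ --- is the right one: beyond that breakpoint $f_t(p) \leq (\alpha\gamma_t(t+1))^{-1}$ lies strictly below the smallest attainable \ELOND\ threshold $(\alpha\gamma_t t)^{-1}$, so no decision changes, while the truncated integral becomes $\ell_t^{-1}\sum_{n=1}^{t} n^{-1} = 1$ (or $<1$ in the edge case $\alpha\gamma_t t > \ell_t$, where fewer than $t$ pieces fit in $[0,1]$ --- at most $1$ either way). One could alternatively sidestep the calibrator route entirely by observing that your decision equivalence makes $\FDR(\rejset_t^{\ELOND})$ identical to the FDR of \rLOND, which \Cref{fact:LONDArbDep} already controls; but the corollary's stated logic (FDR via \Cref{thm:ELONDArbDep}) genuinely requires your truncation, so your reading is the faithful fix rather than an optional refinement.
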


\ifarxiv{\subsection}{\paragraph}{More power through randomization}  Building on recent advances by \citet{xu_more_powerful_2023} for offline multiple testing, we can strictly improve both \ELOND\ and \rLOND\ by incorporating independent randomization.
Let $E$ be an e-value and $\widehat\alpha \in [0, 1]$ be a possibly random threshold that may depend on $E$.
Let $U$ be a uniform random variable on $[0, 1]$ that is independent of both $E$ and $\widehat\alpha$. Define the following randomized e-value:
\begin{align}
    S_{\widehat\alpha}(E) \coloneqq (E \cdot \ind{E \geq \widehat\alpha^{-1}}) \vee (\ind{U \leq E \widehat\alpha}\widehat\alpha^{-1}),
\end{align}
\begin{fact}[Proposition 2 \citep{xu_more_powerful_2023}]\label{fact:stochastic-rounding-e}
    $S_{\widehat{\alpha}}(E)$ is also an e-value. Further, note that
\begin{align}
    \ind{S_{\widehat\alpha}(E) \geq \widehat{\alpha}^{-1}} = \ind{E \geq \widehat{\alpha}^{-1} \cdot U}
\end{align}
\end{fact}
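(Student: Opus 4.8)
The plan is to establish the two claims separately, proving the indicator identity first and then feeding its case structure into the e-value computation; throughout I would condition on the pair $(E, \widehat\alpha)$ so that the only remaining randomness is the independent uniform $U$.

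For the indicator identity, I would decompose the event $\{S_{\widehat\alpha}(E) \geq \widehat\alpha^{-1}\}$ according to which of the two terms in the maximum attains the threshold. The first term $E\ind{E \geq \widehat\alpha^{-1}}$ is at least $\widehat\alpha^{-1}$ exactly when $E \geq \widehat\alpha^{-1}$ (if $E \geq \widehat\alpha^{-1}$ the term equals $E \geq \widehat\alpha^{-1}$, and otherwise it is $0$). The second term $\ind{U \leq E\widehat\alpha}\widehat\alpha^{-1}$ reaches $\widehat\alpha^{-1}$ exactly when $U \leq E\widehat\alpha$, i.e.\ (rescaling by $\widehat\alpha^{-1} > 0$) when $E \geq \widehat\alpha^{-1}U$. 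Hence $\{S_{\widehat\alpha}(E) \geq \widehat\alpha^{-1}\} = \{E \geq \widehat\alpha^{-1}\} \cup \{E \geq \widehat\alpha^{-1}U\}$, and since $U \leq 1$ the first event is contained in the second, so the union collapses to $\{E \geq \widehat\alpha^{-1}U\}$. This is precisely the claimed identity; the degenerate case $\widehat\alpha = 0$ is handled by the convention $\widehat\alpha^{-1} = +\infty$, under which both indicators vanish almost surely.

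For the e-value claim, I would prove the stronger fact that stochastic rounding preserves the conditional mean, namely $\expect[S_{\widehat\alpha}(E) \mid E, \widehat\alpha] = E$, from which $\expect[S_{\widehat\alpha}(E)] = \expect[E] \leq 1$ follows by the tower property together with the assumption that $E$ is an e-value. To compute the conditional expectation I split into the same two cases. If $E \geq \widehat\alpha^{-1}$, then $E\widehat\alpha \geq 1 \geq U$, so the indicator in the second term is $1$ and that term equals $\widehat\alpha^{-1} \leq E$; the maximum is therefore $E$, giving conditional mean $E$. If $E < \widehat\alpha^{-1}$, the first term vanishes and $S_{\widehat\alpha}(E) = \ind{U \leq E\widehat\alpha}\widehat\alpha^{-1}$; since $E\widehat\alpha \in [0,1)$ and $U$ is uniform and independent of $(E,\widehat\alpha)$, we get $\prob{U \leq E\widehat\alpha \mid E, \widehat\alpha} = E\widehat\alpha$, so the conditional mean is $E\widehat\alpha \cdot \widehat\alpha^{-1} = E$.

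The step I expect to require the most care is the bookkeeping around conditioning: because $\widehat\alpha$ is permitted to depend on $E$, one must condition on the pair $(E,\widehat\alpha)$ rather than on $E$ alone before invoking the independence of $U$, and one must check that the two branches of the maximum do not overlap in a way that inflates the mean. This is exactly why the analysis shows the second term is dominated by the first in the ``large $E$'' branch and the first term is identically zero in the ``small $E$'' branch, so that no double counting occurs. The only remaining subtleties are the boundary values $\widehat\alpha \in \{0, 1\}$ and $E\widehat\alpha \in \{0, 1\}$, which are either null events or covered by the stated conventions.
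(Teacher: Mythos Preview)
Your argument is correct. Note, however, that the paper does not supply its own proof of this statement: it is recorded as a \emph{fact} cited from \citet{xu_more_powerful_2023}, so there is no in-paper proof to compare against. Your case split on $\{E \geq \widehat\alpha^{-1}\}$ versus $\{E < \widehat\alpha^{-1}\}$, together with conditioning on $(E,\widehat\alpha)$ and using independence of $U$, is the standard way to verify both claims; indeed the paper later invokes exactly the stronger identity $\expect[S_{\widehat\alpha}(E)\mid E,\widehat\alpha]=E$ that you establish (see the remark following the proof of \Cref{thm:wcs-evalue-fdr}).
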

We now define \ULOND, a randomized version of \ELOND. Let $(U_t)$ be a sequence of  uniform random variables on $[0, 1]$ that are independent of $(E_t)$.
\begin{align}
    \alpha^{\ULOND}_t \coloneqq \alpha_t^{\ELOND}\cdot U_t^{-1}.
    \label{eq:ULOND}
\end{align} Let $(\rejset_t^{\ULOND})$ be the sequence of discovery sets output by \ULOND. The following is our second main result.
\begin{theorem}
    Under arbitrary dependence on e-values \eqref{eqn:EMarginalExpect}, $\FDR(\rejset_t^\ULOND) \leq \alpha$ for each $t \in \naturals$. Further, \ULOND\ strictly dominates \ELOND in expectation.\end{theorem}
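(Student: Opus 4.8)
The plan is to recognize $\ULOND$ as nothing but $\ELOND$ run on a sequence of \emph{randomized} e-values, and then read off FDR control from \Cref{thm:ELONDArbDep} together with \Cref{fact:stochastic-rounding-e}. Concretely, for each $t$ let $\widehat\alpha_t \coloneqq \alpha_t^{\ELOND}$ denote the $\ELOND$ test level computed \emph{along the $\ULOND$ trajectory}, i.e.\ $\widehat\alpha_t = \alpha\gamma_t(|\rejset_{t-1}^{\ULOND}|+1)$, and define $\widetilde E_t \coloneqq S_{\widehat\alpha_t}(E_t)$. The identity in \Cref{fact:stochastic-rounding-e} gives
\begin{align}
    \ind{\widetilde E_t \geq \widehat\alpha_t^{-1}} = \ind{E_t \geq \widehat\alpha_t^{-1}U_t} = \ind{E_t \geq (\alpha_t^{\ULOND})^{-1}},
\end{align}
so rejecting $H_t$ via $\ELOND$ applied to $(\widetilde E_t)$ is exactly rejecting via $\ULOND$. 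A short induction on $t$ confirms the trajectories coincide: if the discovery sets agree through step $t-1$, then $\widehat\alpha_t$ equals the $\ELOND$-on-$(\widetilde E_t)$ level at step $t$, and the displayed identity makes the step-$t$ decisions agree, so $\rejset_t^{\ULOND}$ equals the $\ELOND$-on-$(\widetilde E_t)$ discovery set.

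It then remains to check each $\widetilde E_t$ is a genuine e-value under the null, after which FDR control is immediate from \Cref{thm:ELONDArbDep} (which needs only the marginal property \eqref{eqn:EMarginalExpect} and no dependence assumption). \Cref{fact:stochastic-rounding-e} supplies exactly this: $S_{\widehat\alpha_t}(E_t)$ is an e-value, so $\expect[\widetilde E_t] = \expect[E_t] \leq 1$ for $t \in \hypset_0$. The one delicate point is the independence hypothesis of \Cref{fact:stochastic-rounding-e}: since $\widehat\alpha_t$ is a function of $(\rejset_{t-1}^{\ULOND})$, hence of $(E_i, U_i)_{i<t}$, I must ensure $U_t$ is independent of the pair $(E_t, \widehat\alpha_t)$. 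This holds provided the $(U_t)$ are mutually independent and jointly independent of $(E_t)$, which I will state as the operative assumption; then $U_t$ is independent of $(E_{\le t}, U_{<t})$, of which $(E_t,\widehat\alpha_t)$ is a measurable function, so the Fact applies verbatim. (The only cosmetic wrinkle is that $\widehat\alpha_t$ may exceed $1$; the computation behind \Cref{fact:stochastic-rounding-e} — conditioning on $(E_t,\widehat\alpha_t)$ and splitting on whether $E_t\widehat\alpha_t \geq 1$ — still yields $\expect[\widetilde E_t \mid E_t,\widehat\alpha_t] = E_t$, so nothing breaks.)

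For the domination claim I will verify the two conditions in the definition of strict domination in expectation. Condition (1), that $\rejset_t^{\ULOND} \supseteq \rejset_t^{\ELOND}$ on every e-value sequence and for every realization of $(U_t)$, follows by induction: as $U_t \in (0,1]$ gives $\alpha_t^{\ULOND} = \alpha_t^{\ELOND} U_t^{-1} \geq \alpha_t^{\ELOND}$, the inductive hypothesis $\rejset_{t-1}^{\ULOND}\supseteq\rejset_{t-1}^{\ELOND}$ yields $|\rejset_{t-1}^{\ULOND}|\geq|\rejset_{t-1}^{\ELOND}|$, hence $\alpha_t^{\ULOND}\geq\alpha_t^{\ELOND}$, so any hypothesis $\ELOND$ rejects at step $t$ is also rejected by $\ULOND$ and the inclusion propagates. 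For condition (3) I will exhibit a single-hypothesis instance: take $\gamma_1=1$ and the e-value $E_1=1$, so $\alpha_1^{\ELOND}=\alpha$ and $\ELOND$ makes no discovery when $\alpha<1$, whereas $\ULOND$ discovers $H_1$ whenever $U_1\leq\alpha$, giving $\expect[|\rejset_1^{\ULOND}|\mid E_1]=\alpha>0=|\rejset_1^{\ELOND}|$.

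The main obstacle lies in the first two steps: making the reduction airtight, in particular that the ``threshold'' fed into \Cref{fact:stochastic-rounding-e} is the self-referential $\ELOND$ level along the $\ULOND$ path, and discharging the independence requirement so that the Fact's conditional-expectation identity holds at every step despite $\widehat\alpha_t$ depending on the earlier randomization $U_{<t}$. Once the reduction is in place, FDR control is an immediate corollary of \Cref{thm:ELONDArbDep}, and the domination argument is elementary.
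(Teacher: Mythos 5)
Your proposal is correct and takes essentially the same route as the paper: the paper's own (two-line) proof likewise identifies \ULOND\ with \ELOND\ applied to the stochastically rounded e-values $(S_{\alpha_t^{\ELOND}}(E_t))$, invokes \Cref{fact:stochastic-rounding-e} together with \Cref{thm:ELONDArbDep} for FDR control, and notes that $U_t^{-1} > 1 + \varepsilon$ with nonzero probability (independently of $(E_t)$) for the strict dominance in expectation. You merely make explicit the details the paper leaves implicit --- the trajectory-coincidence induction, the bookkeeping ensuring $U_t \indep (E_t, \widehat\alpha_t)$, the harmlessness of $\widehat\alpha_t > 1$, and a concrete witness for condition (3) --- all of which check out.
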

\ifarxiv{}{\vspace{-10pt}}
\begin{proof}
    \ULOND\ in \eqref{eq:ULOND} is equivalent to applying \ULOND\ to $(S_{\alpha_t^{\ELOND}}(E_t))$. Hence, FDR control holds by \Cref{thm:ELONDArbDep}. The domination is because $U_t^{-1} > 1 + \varepsilon$ with nonzero probability for all $\varepsilon > 0$, and is independent from $(E_t)$.
\end{proof}
Note that $(U_t)$ can all be equal, i.e., $U_1 = \dots = U_t$, or they can be drawn independently for each hypothesis.
To improve \rLOND, we use the following result.
\begin{fact}[Lemma 1 \citep{xu_more_powerful_2023}]
Let $P$ be a superuniform random variable that can be arbitrarily dependent on a positive random variable $R$. Let $U$ be a superuniform
random variable that is independent of both $P$ and $R$. Let $c$ be a nonnegative constant and
$\beta$ be a reshaping function. Then, the following holds:
\begin{align}
    \expect\left[\frac{\ind{P\leq c\beta(R / U)}}{R}\right] \leq c.
\end{align}
\label{fact:rand-su-lemma}
\end{fact}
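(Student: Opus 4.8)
The plan is to prove this by peeling off the two independent sources of randomness one at a time, handling $U$ first and $P$ second, and invoking the reshaping structure $\beta(r) = \int_{(0,\infty)} x\,\ind{x \le r}\,d\nu(x)$ only at the very end. Assume $c > 0$ (the case $c = 0$ is immediate, since then the indicator forces $P \le 0$, an event of probability zero by superuniformity). Introduce the generalized inverse $\rho(p) \coloneqq \inf\{r \ge 0 : \beta(r) \ge p/c\}$, with the convention that $\rho(p) = +\infty$ when $\beta$ never reaches $p/c$, so that $1/\rho(p) = 0$ there. The key structural observation is the pointwise inclusion: if $P \le c\beta(R/U)$, then $\beta(R/U) \ge P/c$, which places $R/U$ in the set defining $\rho(P)$ and hence $R/U \ge \rho(P)$, i.e.\ $U \le R/\rho(P)$. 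Thus $\{P \le c\beta(R/U)\} \subseteq \{U \le R/\rho(P)\}$, with no boundary subtleties needed for an upper bound.

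First I would condition on $(P,R)$ and integrate out $U$. Since $U$ is superuniform and independent of $(P,R)$, its conditional law satisfies $\prob{U \le s \mid P,R} \le s$ for every $s \ge 0$, so the inclusion above gives
\[
\expect\left[\frac{\ind{P \le c\beta(R/U)}}{R} \,\middle|\, P, R\right] \le \frac{1}{R}\,\prob{U \le R/\rho(P) \mid P, R} \le \frac{1}{R}\cdot\frac{R}{\rho(P)} = \frac{1}{\rho(P)}.
\]
The crucial point—and the reason randomization costs nothing in the constant—is that the factor $R$ generated by the threshold $R/\rho(P)$ cancels exactly against the $1/R$ outside the indicator. (In the deterministic reshaping lemma one is instead forced to bound $\ind{R \ge \rho(P)}/R \le 1/\rho(P)$, discarding the slack when $R > \rho(P)$.) Taking expectations, it remains only to show $\expect[1/\rho(P)] \le c$.

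For the second step I would write the expectation through its tail, $\expect[1/\rho(P)] = \int_0^\infty \prob{\rho(P) < 1/s}\,ds$, and exploit the superuniformity of $P$. If $\rho(P) < 1/s$ then $\beta$ reaches $P/c$ strictly before $1/s$, whence $P \le c\,\beta((1/s)^-) \le c\,\beta(1/s)$ by monotonicity of $\beta$; superuniformity then yields $\prob{\rho(P) < 1/s} \le c\,\beta(1/s)$. Substituting $r = 1/s$ and applying Fubini against the reshaping measure closes the argument:
\[
\expect\!\left[\frac{1}{\rho(P)}\right] \le \int_0^\infty c\,\beta(1/s)\,ds = c\int_0^\infty \frac{\beta(r)}{r^2}\,dr = c\int_{(0,\infty)} x\!\int_x^\infty \frac{dr}{r^2}\,d\nu(x) = c\int_{(0,\infty)} d\nu(x) \le c,
\]
where the final inequality is merely that $\nu$ is a probability measure. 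Chaining this with the previous display proves the claim.

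I expect the main obstacle to be the measure-theoretic bookkeeping around the generalized inverse $\rho$: carefully justifying that $\{\rho(P) < 1/s\} \subseteq \{P \le c\,\beta((1/s)^-)\}$ when $\beta$ has flat stretches or jumps, handling the degenerate values $\rho(P) \in \{0, +\infty\}$ (the former living on the null event $\{P \le 0\}$, which superuniformity forces to have probability zero), and certifying the Fubini interchange in the last display. None of these is deep, but each must be verified so that every invocation of superuniformity is applied to a genuine one-sided threshold event. The two genuinely load-bearing ideas are the exact $R$-cancellation in the first display—which is precisely what randomization buys over the deterministic reshaping lemma—and the fact that Fubini against $\nu$ returns exactly the total mass $\nu\big((0,\infty)\big) \le 1$.
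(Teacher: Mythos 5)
Your proof is correct: the generalized-inverse reduction $\{P \le c\beta(R/U)\} \subseteq \{U \le R/\rho(P)\}$, the exact cancellation of $R$ after conditioning on $(P,R)$ and invoking superuniformity plus independence of $U$, and the Tonelli computation $\int_0^\infty \beta(r) r^{-2}\,dr = \nu\bigl((0,\infty)\bigr) \le 1$ are all sound, and the edge cases you flag ($\rho(P) \in \{0,+\infty\}$, flat stretches of $\beta$) are handled as you indicate. Note that this paper states the result as an imported fact from \citet{xu_more_powerful_2023} without reproving it, and your argument is essentially the proof in that reference --- peel off $U$ by conditioning on $(P,R)$, then bound $\expect[1/\rho(P)] \le c$ via the Blanchard--Roquain reshaping calculation --- so there is no genuinely different route here to compare.
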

\ifarxiv{}{\vspace{-20pt}}
We can define the \UrLOND\ procedure as follows:
\begin{align}
    \alpha_t^{\UrLOND} \coloneqq \alpha \gamma_t \beta_t((|\rejset_{t - 1}| + 1) / U_t),
    \label{eq:UrLOND}
\end{align} with $(\rejset_t^{\UrLOND})$ being the resulting discovery sets.
We now present our third main result.
\begin{theorem}\label{thm:UrLOND}
    Under arbitrary dependence on p-values \eqref{eqn:PMarginalProb}, $\FDR(\rejset_t^\UrLOND) \leq \alpha$ for each $t \in \naturals$. Further, \UrLOND\ strictly dominates \rLOND\ in expectation for reshaping functions $(\beta_t^{\BY})$.
\end{theorem}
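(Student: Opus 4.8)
The plan is to follow the pathwise FDP-bounding argument behind \LOND\ and \ELOND, but to push the randomizer $U_t$ inside the reshaping function and discharge it with \Cref{fact:rand-su-lemma}; strict domination will then follow from the monotonicity $1/U_t \ge 1$ together with an explicit instance exploiting the floor in $\beta_t^{\BY}$. First I would bound $\FDP(\rejset_t^{\UrLOND})$ pathwise. Fix $t$ and write it as $\sum_{i \in \hypset_0 \cap [t]} \ind{P_i \le \alpha_i^{\UrLOND}}/(|\rejset_t^{\UrLOND}| \vee 1)$. The key monotonicity observation is that whenever the $i$th null is rejected we have $i \in \rejset_t^{\UrLOND}$ but $i \notin \rejset_{i-1}^{\UrLOND}$, so $\rejset_{i-1}^{\UrLOND} \cup \{i\} \subseteq \rejset_t^{\UrLOND}$ (as $i \le t$ and the discovery sets are nested) and hence $|\rejset_{i-1}^{\UrLOND}| + 1 \le |\rejset_t^{\UrLOND}| \vee 1$. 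On the event that the indicator is nonzero I may therefore replace the denominator by the smaller $|\rejset_{i-1}^{\UrLOND}| + 1$, yielding
\[
\FDP(\rejset_t^{\UrLOND}) \le \sum_{i \in \hypset_0 \cap [t]} \frac{\ind{P_i \le \alpha \gamma_i \beta_i((|\rejset_{i-1}^{\UrLOND}|+1)/U_i)}}{|\rejset_{i-1}^{\UrLOND}|+1}.
\]

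Second, I would take expectations term by term and apply \Cref{fact:rand-su-lemma} to each null index $i$ with $P = P_i$ (superuniform by \eqref{eqn:PMarginalProb}), $R = |\rejset_{i-1}^{\UrLOND}|+1$ (strictly positive and arbitrarily dependent on $P_i$), $c = \alpha\gamma_i$, and $\beta = \beta_i$. This gives $\expect[\ind{P_i \le \alpha\gamma_i\beta_i(R/U_i)}/R] \le \alpha\gamma_i$, and summing with $\sum_i \gamma_i \le 1$ yields $\FDR(\rejset_t^{\UrLOND}) \le \alpha \sum_{i \in \hypset_0 \cap [t]} \gamma_i \le \alpha$. I expect the main obstacle to be the independence bookkeeping here: the lemma requires $U_i$ to be independent of both $P_i$ and the data-dependent denominator $R = |\rejset_{i-1}^{\UrLOND}|+1$, and since $R$ is a function of $P_1,\dots,P_{i-1}$ and $U_1,\dots,U_{i-1}$, this is exactly where I need the $U_t$ to be drawn mutually independently and independently of the p-values. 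This is the one place \UrLOND\ genuinely differs from \ULOND, whose randomizers may coincide; the remainder is the standard denominator-swap step.

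For the domination claim I would first establish the pathwise containment $\rejset_t^{\UrLOND} \supseteq \rejset_t^{\rLOND}$ by induction on $t$. Since $U_t \in [0,1]$ forces $1/U_t \ge 1$ and each $\beta_t$ is nondecreasing, the inductive hypothesis $|\rejset_{t-1}^{\UrLOND}| \ge |\rejset_{t-1}^{\rLOND}|$ gives, on every realization,
\[
\alpha_t^{\UrLOND} = \alpha\gamma_t\beta_t((|\rejset_{t-1}^{\UrLOND}|+1)/U_t) \ge \alpha\gamma_t\beta_t(|\rejset_{t-1}^{\rLOND}|+1) = \alpha_t^{\rLOND},
\]
so every rejection made by \rLOND\ is also made by \UrLOND, verifying condition (1) of strict domination.

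For the strictness condition (3) with $\beta_t = \beta_t^{\BY}$, I would exhibit one instance exploiting the floor in $\beta_t^{\BY}(r) = (\lfloor r\rfloor \wedge t)/\ell_t$. Take $P_1 = 1$ so neither procedure rejects at $t=1$ and $|\rejset_1| = 0$. At $t=2$ we have $\alpha_2^{\rLOND} = \alpha\gamma_2\,\beta_2^{\BY}(1) = \alpha\gamma_2/\ell_2$, whereas $\alpha_2^{\UrLOND} = \alpha\gamma_2\,\beta_2^{\BY}(1/U_2) = 2\alpha\gamma_2/\ell_2$ on the event $\{U_2 \le 1/2\}$, where $\lfloor 1/U_2\rfloor \wedge 2 = 2$. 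Choosing $P_2 \in (\alpha\gamma_2/\ell_2,\ 2\alpha\gamma_2/\ell_2]$, with $\alpha$ and $\gamma_2$ small enough that all levels stay below $1$, \rLOND\ never rejects $H_2$ while \UrLOND\ rejects it whenever $U_2 \le 1/2$. Hence $\expect[|\rejset_2^{\UrLOND}| \mid (P_i)_{i\in[2]}] \ge 1/2 > 0 = \expect[|\rejset_2^{\rLOND}| \mid (P_i)_{i\in[2]}]$, which completes the proof. The only delicate point in this last step is that the benefit of randomization materializes precisely when $1/U_t$ pushes the argument of $\beta^{\BY}$ across an integer threshold no larger than the cap $\wedge\, t$, which is why I place the witnessing index at $t=2$ rather than $t=1$.
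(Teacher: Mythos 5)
Your proof is correct and follows essentially the same route as the paper's: the identical discovery-growth/denominator-swap reduction followed by an application of \Cref{fact:rand-su-lemma} with $R = |\rejset_{i-1}^{\UrLOND}|+1$ and $c = \alpha\gamma_i$, and strict dominance via the floor jump in $\beta_t^{\BY}$ induced by $1/U_t$. If anything, you are more explicit than the paper on two points it leaves implicit: that the $U_t$ must be drawn independently across hypotheses so that $U_i$ is independent of $(P_i, \rejset_{i-1}^{\UrLOND})$ as the lemma requires, and the concrete two-hypothesis instance (with $P_2$ chosen in the gap between the two test levels) witnessing the strictness condition.
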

\ifarxiv{}{\vspace{-5pt}}
We defer the proof to \Cref{sec:urlond-proof}.
\begin{corollary}
    If we use reshaping function $\beta_t^{\BY}$, \UrLOND\ produces the following test levels:
    \begin{align}
        \alpha_t^{\UrLOND} = \alpha \gamma_t (\lfloor (|\rejset_{t - 1}^\UrLOND| + 1) / U\rfloor \wedge t) /\ell_t.
    \end{align}
\end{corollary}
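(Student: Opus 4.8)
The plan is simply to unfold the definition of \UrLOND\ under the specific choice $\beta_t = \beta_t^{\BY}$; there is no genuine obstacle here, since the claim is a definitional specialization rather than a substantively new result. Recall from~\eqref{eq:UrLOND} that \UrLOND\ sets $\alpha_t^{\UrLOND} = \alpha \gamma_t \beta_t((|\rejset_{t-1}^{\UrLOND}| + 1)/U_t)$, and that the Benjamini--Yekutieli reshaping function is $\beta_t^{\BY}(r) = (\lfloor r \rfloor \wedge t)/\ell_t$. The entire argument amounts to substituting $r = (|\rejset_{t-1}^{\UrLOND}| + 1)/U_t$ into $\beta_t^{\BY}$.

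Carrying out that substitution yields
\begin{align}
    \alpha_t^{\UrLOND} = \alpha \gamma_t \cdot \frac{\lfloor (|\rejset_{t-1}^{\UrLOND}| + 1)/U_t \rfloor \wedge t}{\ell_t},
\end{align}
which is exactly the displayed formula (with the $U$ in the statement read as the relevant $U_t$, or as their common value in the case $U_1 = \dots = U_t$ discussed after the definition of \ULOND). The one point I would state explicitly is that the floor $\lfloor \cdot \rfloor$ is applied \emph{after} dividing by $U_t$, matching the argument fed into $\beta_t^{\BY}$; this is immediate from reading off the definitions and is the only place a careless reader might misplace the operations.

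Finally, I would note that no separate validity check is needed: the FDR guarantee $\FDR(\rejset_t^{\UrLOND}) \leq \alpha$ for each $t \in \naturals$ is already established in \Cref{thm:UrLOND} for an \emph{arbitrary} sequence of reshaping functions $(\beta_t)$, and $(\beta_t^{\BY})$ is one admissible instance of such a sequence. The corollary therefore inherits FDR control automatically and serves only to record the concrete, implementable form of the test levels that a practitioner would use with the standard BY reshaping choice.
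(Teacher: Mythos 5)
Your proposal is correct and matches the paper's (implicit) treatment: the corollary is a definitional specialization, and the paper states it without separate proof precisely because it follows by substituting $r = (|\rejset_{t-1}^{\UrLOND}| + 1)/U_t$ into $\beta_t^{\BY}(r) = (\lfloor r \rfloor \wedge t)/\ell_t$ in~\eqref{eq:UrLOND}, exactly as you do. Your added remarks --- that the floor is applied after division by $U_t$, that the $U$ in the display should be read as $U_t$ (or the common value), and that FDR control is inherited from \Cref{thm:UrLOND} --- are accurate clarifications rather than deviations.
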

\ifarxiv{}{\vspace{-10pt}}
Thus, by utilizing randomization, we are able to derive FDR controlling procedures that are never worse than their deterministic counterparts.

\section{Online FCR control with e-CIs}\label{sec:FCR}

Often, a scientist wishes not only to test the significance of an effect but also to measure the strength of the effect. Instead of receiving hypotheses in a stream, a scientist can consider a stream of parameters $\theta_1 \in \Theta_1, \theta_2 \in \Theta_2, \dots$, but wishes to estimate only some of them, e.g., only ones that show signficiant positive effect. Here, we desire our selected CIs to be accurate in aggregate, i.e., we want to control the false coverage rate (FCR) --- this problem was introduced by \cite{weinstein2019online} as the \emph{the online selective-CI problem}. For the $t$th parameter, the scientist receives some data (e.g., the results of an experiment) $X_t \in \Xcal_t$ and designs a selection rule ${\Srm_t: \Xcal_t \mapsto \{0, 1\}}$ to decide whether CI should be constructed for $\theta_t$. If a parameter is selected, one must choose an error level $\alpha_t \in (0, 1)$ and construct a $(1 - \alpha_t)$-CI for $\theta_t$. Let ${S_t = \Srm_t(X_t)}$ be an indicator variable that is 1 iff $\theta_t$ is selected for CI construction. We assume that one has access to a CI constructor $C_t: \Xcal_t \times [0, 1] \mapsto 2^{\Theta_t}$ for each $t \in \mathbb{N}$ where $C_t(X, \alpha)$  satisfies the following property:
\begin{align}
    \prob{\theta_t \not\in C_t(X_t, \alpha)} \leq \alpha \text{ for every }\alpha \in [0, 1].\label{eq:ci-def}
\end{align}
Formally, the \emph{false coverage proportion} (FCP), and the \emph{false coverage rate} (FCR) are defined as follows:
\begin{gather}
    \FCP(\Scal_t) \coloneqq \sum\limits_{i \in \Scal_t}\frac{ \ind{\theta_i \not\in C_i(\alpha_i)}}{|\Scal_t| \vee 1},\\ \FCR(\Scal_t)\coloneqq \expect\left[\FCP(\Scal_t)\right].
\end{gather}
The methods of \cite{weinstein2019online} relied on two key assumptions. The first is an explicit assumption on the dependence between hypotheses, i.e.,  $X_t$ were independent or that $C_t(X_t, \alpha_t)$ is still a valid $(1 - \alpha_t)$-CI conditional on past selection decisions. The second is a restrictive monotonicity assumption on the selection rules $\Srm_t$. In \Cref{alg:ELONDCI}, we devise versions of \ELOND\ and \ULOND\ for the online selective inference problem, \ELOND-CI and \ULOND-CI, respectively, that is free of both restrictons.

\begin{algorithm}[h!]
    \caption{The \ELOND-CI and \ULOND-CI algorithms ensure $\FCR \leq \alpha$ with no restrictions on the dependence between data $(X_t)$ or the selection rules $(\Srm_t)$. Let $(U_t)$ uniform random variables on $[0, 1]$ and independent of $(X_t)$.}
    \begin{algorithmic}
        \label{alg:ELONDCI}

        \STATE \textbf{Input:} E-CI constructors $(C_t)$, discount sequence $(\gamma_t)$, and FCR control level $\alpha$.
        \FOR{each $t \in \naturals$}
            \IF{running \ELOND-CI}
                \STATE $\alpha_t\coloneqq \alpha \gamma_t (|\Scal_{t - 1}| + 1)$.
            \ELSIF{running \ULOND-CI}
                \STATE $\alpha_t\coloneqq \alpha \gamma_t (|\Scal_{t - 1}| + 1) \cdot U_t^{-1}$.
            \ENDIF
            \STATE Receive data $X_t$.
            \STATE Make a selection decision $S_t \coloneqq \Srm_t(X_t)$.
            \IF{$S_t = 1$}
                \STATE $\Scal_t \coloneqq \Scal_{t - 1} \cup \{t\}$.
                \STATE Construct $C_t(X_t, \alpha_t)$ for $\theta_t$.
            \ELSE{}
                \STATE $\Scal_t \coloneqq \Scal_{t - 1}$
            \ENDIF
        \ENDFOR
    \end{algorithmic}
\end{algorithm}
To ensure FCR control, both algorithms do require each $C_t$ to a special type of CI: an \emph{e-CI} \citep{vovk_confidence_discoveries_2023,xu_postselection_inference_2022} --- similar to how \ELOND\ applies to e-values. $C(X, \alpha)$ is an e-CI over the universe of parameters $\Theta$ if it can be written as follows:
\begin{align}
    C(X, \alpha) = \{\theta \in \Theta: E_\theta < \alpha^{-1}\}, \label{eq:eci-def}
\end{align} where $E_\theta$ is an e-value when the true parameter is $\theta$. Note that the e-CI in \eqref{eq:eci-def} does satisfy the CI definition in \eqref{eq:ci-def} by Markov's inequality applied to $E_{\theta^*}$, where $\theta^*$ is the true parameter. Let $(\Scal^{\ELOND}_t)$ and $(\Scal^{\ULOND})$ denote the resulting selection sets of \ELOND-CI and \ULOND-CI, respectively. We now present our fourth main result, whose proof is in \Cref{sec:FCRProof}.
\begin{theorem}\label{thm:FCR}
    For any dependence structure among the data, $(X_t)$, and sequence of selection rules $(\Srm_t)$, $\FCR(\Scal_t^\ELOND), \FCR(\Scal_t^\ULOND) \leq \alpha$ for all $t \in \naturals$.
\end{theorem}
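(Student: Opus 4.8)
The plan is to reduce the FCR guarantee directly to the FDR argument behind \Cref{thm:ELONDArbDep}, exploiting that an e-CI makes the non-coverage event take exactly the shape of an e-value ``discovery.'' Writing $E_i \coloneqq E_{\theta_i}$ for the e-value evaluated at the \emph{true} parameter $\theta_i$, the e-CI definition \eqref{eq:eci-def} gives the pointwise identity $\ind{\theta_i \notin C_i(X_i, \alpha_i)} = \ind{E_i \geq \alpha_i^{-1}}$, and by \eqref{eqn:EMarginalExpect} each $E_i$ satisfies $\expect[E_i] \leq 1$. Hence a selected non-coverage behaves exactly like a false discovery of \ELOND, with the selection set $\Scal_t$ (chosen by the rules $\Srm_t$) playing the role of the discovery set $\rejset_t$, and the test level $\alpha_i = \alpha \gamma_i (|\Scal_{i-1}| + 1)$ mirroring $\alpha_i^\ELOND$.

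For \ELOND-CI I would then run the computation of \Cref{sec:ArbDepProof} on the selection set. Writing $\FCP(\Scal_t) = \sum_{i = 1}^t S_i \ind{E_i \geq \alpha_i^{-1}} / (|\Scal_t| \vee 1)$, the key monotonicity step is in fact \emph{cleaner} here than for FDR: whenever $S_i = 1$ we have $i \in \Scal_i \subseteq \Scal_t$ while $i \notin \Scal_{i-1}$, so $|\Scal_{i-1}| + 1 \leq |\Scal_t|$ deterministically, regardless of whether coverage fails (unlike the FDR case, where this hinged on the discovery indicator itself). Combining this with the deterministic inequality \eqref{eqn:DeterministicInequality} applied at $\alpha = \alpha_i$ collapses each summand, yielding $\FCP(\Scal_t) \leq \alpha \sum_{i=1}^t \gamma_i S_i E_i$.

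It remains to take expectations, and this is exactly where robustness to arbitrary selection rules and dependence is discharged. Since $S_i \in \{0,1\}$ and $E_i \geq 0$, we have $S_i E_i \leq E_i$ pointwise, so $\expect[S_i E_i] \leq \expect[E_i] \leq 1$ \emph{without} any independence between the data-driven selection $S_i = \Srm_i(X_i)$ and $E_i$, and with no constraint on the joint law of $(X_t)$. Together with $\sum_i \gamma_i \leq 1$ this gives $\FCR(\Scal_t^\ELOND) \leq \alpha \sum_{i=1}^t \gamma_i \expect[E_i] \leq \alpha$. For \ULOND-CI I would avoid redoing the analysis by reducing to the deterministic case via \Cref{fact:stochastic-rounding-e}: setting $\widetilde E_i \coloneqq S_{\alpha_i^\ELOND}(E_i)$, which is a valid e-value because $U_i$ is independent of $(X_t)$ (hence of $E_i$ and of the data-measurable threshold $\alpha_i^\ELOND$), the identity in \Cref{fact:stochastic-rounding-e} gives $\ind{E_i \geq (\alpha_i^\ULOND)^{-1}} = \ind{E_i \geq (\alpha_i^\ELOND)^{-1} U_i} = \ind{\widetilde E_i \geq (\alpha_i^\ELOND)^{-1}}$. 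Since selections depend only on $(X_t)$ and not on the coverage threshold, the selection sets coincide, so \ULOND-CI on $(E_i)$ produces the same $\Scal_t$ and the same non-coverage indicators as \ELOND-CI on $(\widetilde E_i)$, and the previous two paragraphs apply verbatim.

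I expect the only genuinely delicate point to be the bookkeeping around dependence: one must verify that the single inequality $S_i E_i \leq E_i$ — rather than any conditioning or independence argument between selection and coverage — is what simultaneously licenses \emph{arbitrary} selection rules and \emph{arbitrary} dependence; and in the randomized case, that the relevant independence for \Cref{fact:stochastic-rounding-e} is $U_i$'s independence from the data (the threshold $\alpha_i^\ELOND$ may depend on $U_i$'s effect through past selections, but $\widetilde E_i$ remains an e-value because $U_i \perp (X_t)$), so that $\expect[\widetilde E_i] \leq 1$ still holds.
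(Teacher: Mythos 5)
Your proposal is correct and follows essentially the same route as the paper's proof in \Cref{sec:FCRProof}: the e-CI identity $\ind{\theta_i \notin C_i(X_i,\alpha_i)} = \ind{E_{\theta_i} \geq \alpha_i^{-1}}$, the deterministic bound \eqref{eqn:DeterministicInequality}, the observation that $i \in \Scal_t$ forces $|\Scal_t| \geq |\Scal_{i-1}|+1$, and then dropping the selection indicator so that only $\expect[E_{\theta_i}] \leq 1$ and $\sum_i \gamma_i \leq 1$ are needed. Your \ULOND-CI reduction via \Cref{fact:stochastic-rounding-e} is the same as the paper's (which replaces $E_{\theta_i}$ with $S_{\alpha_i^\ELOND}(E_{\theta_i})$), with the added and correct clarification that the selection sets coincide because selections depend only on $(X_t)$ and not on the thresholds, so $\alpha_i^\ELOND$ is in fact $U$-free here.
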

\begin{remark}
    Unlike discovery sets $(\rejset_t)$ in the the online FDR control problem, the selection sets $(\Scal_t)$ \emph{do} not depend on $(\alpha_t)$ --- $(\Scal_t)$ can be chosen in an arbitrary fashion based on the observed data. Thus, algorithms with online FDR control do not necessarily provide provide FCR control. However, the reverse is true --- FCR control implies FDR control \citep[Section 5.2]{weinstein2019online}.
\end{remark}
As discussed by \citet{xu_postselection_inference_2022}, many existing canonical CIs are e-CIs, in the same way that many p-values are implicitly inverted e-values. This gives \ELOND-CI and \ULOND-CI broad applicability and utility as a default online selective inference method that is robust to the unknown dependence and arbitrary user choice of selection rule. 

\section{Numerical simulations}\label{sec:Simulations}

To highlight the practical behavior of our methods, we conduct two simulations, with different dependence structures, where we test the null hypothesis $H_0: \mu \leq 0$, where $\mu$ is the mean of a distribution with support bounded in $[-4, 4]$. The first simulation is with local dependence between hypotheses, and the second is with sampling without replacement (WoR) dependence between hypotheses. In both instances, it we sample data sequentially, and hence our experiments exemplify the practicality of our new e-value based methods for the doubly sequential framework described in \Cref{sec:Evalue}. In addition to simulations, we also describe an application of our methods to \emph{online model-free selective inference under covariate shift} in \Cref{sec:conformal}, and compare the performance of our methods on real data from a protein prediction task from \cite{jin_model-free_selective_2023}.

\begin{figure*}[tbh]
    \centering
    \includegraphics[width=\textwidth]{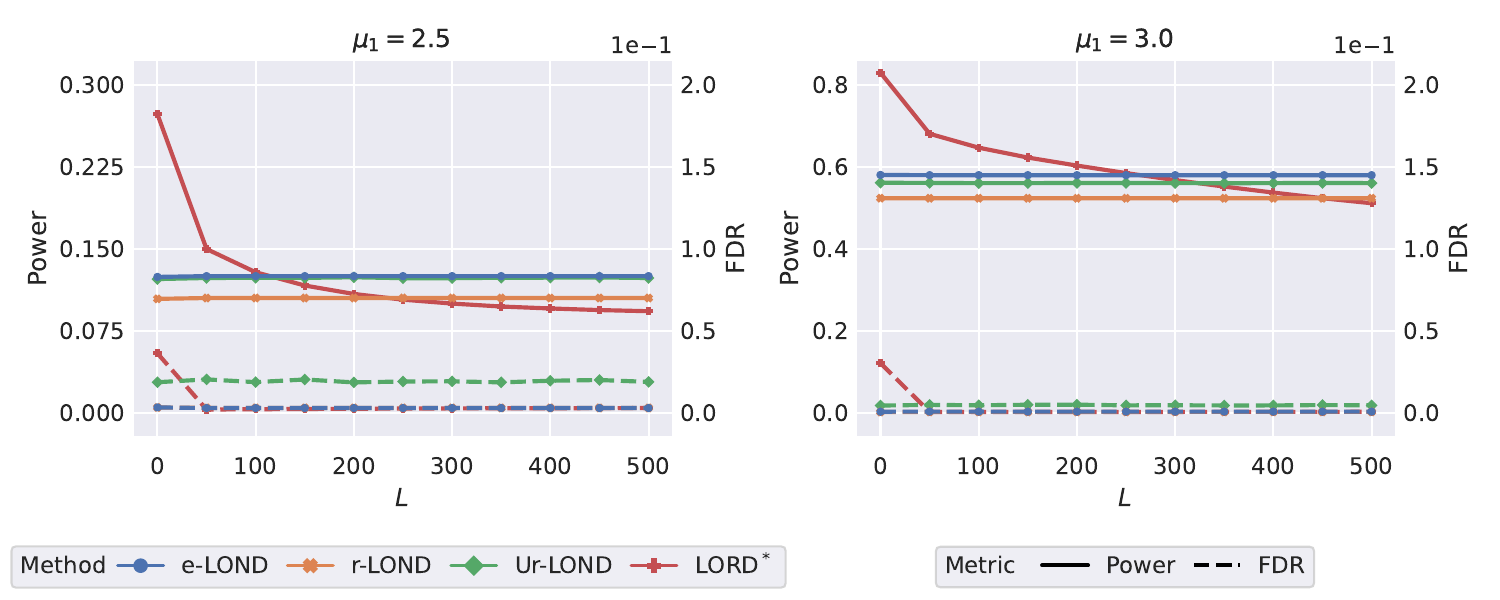}
    \caption{The power of different methods with provable FDR control against the lag parameter $L$ in a simulation with local dependence between statistics. Empirically, the FDR of all methods is well below the desired level of $\alpha=0.3$. As $L$ increases (i.e., more hypotheses are dependent), we can see the power of \LORD$^*$ decrease, since it is essentially ignoring hypotheses with statistics that are dependent with the current hypothesis being tested. \ELOND\ has consistently higher power than both p-value procedures, \rLOND\ and \UrLOND, and has higher power than LORD$^*$ as $L$ becomes large. We omit \ULOND\ since its power increase over \ELOND\ is very small. All Monte Carlo error from simulations is negligible (smaller than the line width in the plot).}
    \label{fig:LagCompSimulation}
    \ifarxiv{}{\vspace{-10pt}}
\end{figure*}
\ifarxiv{\subsection{Local dependence}}{\noindent\textbf{Local dependence.}}
We perform numerical simulations comparing \ELOND\ to other methods in a version of the local dependence setting from \citet{zrnic_asynchronous_2021}. Here, we draw data in a sequential setting with bounded random variables, since powerful sequential p-values for testing the mean of bounded random variables are naturally derived from e-values. We let $L$ be our local dependence lag parameter, i.e., the data for the $t$th hypothesis is independent of data from hypotheses that are more than $L$ indices away.
We let the total number of hypotheses be $T = 10^3$. For the $t$th hypothesis, we consider a setup where we recieve stream of $N = 200$ samples $(X_t^i)_{i \in [N]}$, where $X_t^i$ for each $i \in [N]$ are sampled i.i.d.\ from a Beta distribution (shifted and scaled to be on $[\pm 4]$) with mean $\mu_0 = 0$ under the null, and $\mu_1 \in \{2.5, 3\}$ otherwise.
For each $i \in [N], t \in [T]$, $X_t^i$ has Gaussian copula dependence with $(X^i_{t - L}, \dots, X^i_{t + L})$, i.e, the $i$th sample of data for hypotheses that are within $L$ steps. Explicitly, the covariance matrix of the Gaussian distribution, $\Sigma$, is set to $\Sigma_{i, j} = 0.5^{|i - j|}$ when $|i - j| \leq L$ and 0 elsewhere.
We construct p-values and e-values that are valid for this setting based on Hoeffding's inequality (see \Cref{sec:local-dep-details} for details).

Our results are averaged over 500 trials and shown in \Cref{fig:LagCompSimulation}. In addition to comparing to \rLOND\ and \UrLOND, we compare to \LORD$^*$, which is online FDR control algorithm from \citet{zrnic_asynchronous_2021} requires knowing the lag parameter $L$ beforehand, so it can solely utilize test statistics from hypotheses that are independent from the current hypothesis (see \Cref{sec:lord-star} for details). The power of $\LORD^*$ degrades as the lag parameter increases, which is expected, since it has access to a decreasing number of discoveries.
\ELOND\ is more powerful than both \rLOND\ and \UrLOND\ across the board, and $\LORD^*$ once $L \geq 250$ ($\mu_1=3$) or $L \geq 150$ ($\mu_1 = 2$). \ULOND\ only offers a small increase in power over \ELOND\ here so it is omitted.

\begin{figure*}[h]\label{fig:wor-simulation}
    \centering
    \includegraphics[width=\textwidth]{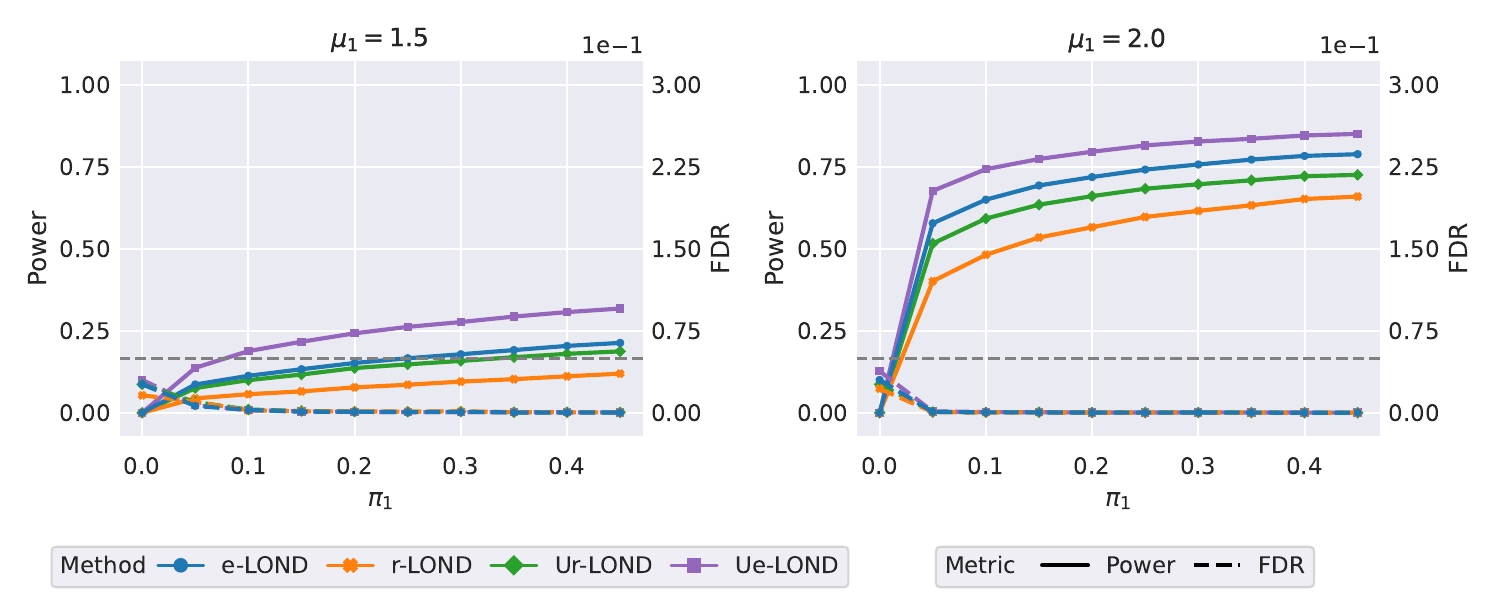}
    \caption{The power of different methods with provable FDR control against proportion of non-nulls $\pi_1$ in a simulation with sampling without replacement (WoR) dependence between statistics. Empirically, the FDR of all methods are below $\alpha=0.05$. \ELOND\ has consistently higher power than both p-value procedures, \rLOND\ and \UrLOND, and \ULOND\ is consistenly more powerful than \ELOND. This makes two e-value procedures, the most powerful methods. All Monte Carlo error from simulations is negligible (smaller than the line width in the plot).}
    \ifarxiv{}{\vspace{-10pt}}
\end{figure*}
\ifarxiv{\subsection{Sampling WoR}}{\noindent\textbf{Sampling WoR.}} We construct a population such that the mean is $\mu_0 = 0$ for the data we sample WoR for the null hypotheses, positive $\mu_1$ for the non-null hypotheses.
We will construct this population by discretizing a scaled and shifted Beta distribution.
Let $V(0), V(1) \in [\pm 4]^{N \times T}$ be the populations created from $P(\mu_0), P(\mu_1)$. Let $V_{i, t}$ be the $t$th value in $V(i)$. We set $s=0.01, \mu_0 = 0, \mu_1 \in \{1.5, 2\}$ in our simulations.
For each simulation trial, we choose a non-null proportion $\pi_1 \in [0.1, 0.9]$, and uniformly randomly choose $B \in \{0, 1\}^T$ with exactly $\lceil \pi_1 T \rceil$ ones. Let $\sigma$  be a random permutation over $[N \times T]$.
Our data for the $t$th hypothesis is $X_t = (V_{B_t, \sigma((t - 1) \cdot N + i)})_{i \in [N]}$. $X_t$ is a sample WoR of size $N$ from $V(0)$ if $B_t = 0$ and $V(1)$ if $B_t = 1$. Our e-values and p-values usiang an e-process for sampling WoR from~\cite{waudby-smith_confidence_sequences_2020} --- see \Cref{sec:sampling-wor-details} for details.

Our results, averaged over 500 trials, are in \Cref{fig:wor-simulation}. Here, both \ELOND\ and \ULOND\ dominate in power across the board, while all methods have FDR below $\alpha=0.05$. Clearly, the theoretical improvements of our novel e-value methods translate to empirical gains.

 \section{Application: online model-free selective inference under covariate shift}
\label{sec:conformal}

As an application of our framework, we can address an online version of the model-free selective inference under covariate shift problem introduced by \citet{jin_model-free_selective_2023}. To do so, we use \ELOND\ to directly derive an online version of the \emph{weighted conformal selection} (WCS) procedure. In this setting, we consider labeled pairs $(X_i, Y_i) \in \Xcal \times \Ycal$. We are given an i.i.d. calibration dataset of labeled pairs $\{(X_i, Y_i)\}_{i\in [n]}$ where $(X_i, Y_i) \sim \mathbf{P}$.
Our goal is to perform inference on a stream of i.i.d.\ test data points $(X_{n + 1}, Y_{n + 1}), (X_{n + 2}, Y_{n + 2}), \dots$. For each $t \in \naturals$, we only observe the covariates of the test points, $X_{n + t}$, and a potentially random threshold, $c_{n + t}$. Our goal is to test the following hypothesis about $Y_{n + t}$:
\begin{align}
H_0^t: Y_{n + t} \leq c_{n + t}.
\end{align}
One notable difference between the setup here and standard online multiple testing is that the null hypotheses themselves are random, as $Y_{n +t}$ and $c_{n + t}$ are both random. However, our goal remains the same: ensure $\FDR(\rejset_t) \leq \alpha$ for each $t \in \naturals$ where the expectation is now also taken over the randomness of whether a hypothesis is null or not.
As argued in~\cite{jin_model-free_selective_2023}, this type of selection occurs widely in practice, e.g., screening for high performing job candidates based on interview performance, picking patients with attributes that are responsive to treatment, detecting outliers, etc.
In this setting with randomized null hypotheses, we require our p-values and e-values to satisfy the following conditions instead for each $t \in \naturals$:
\begin{gather}
    \prob{P_t \leq \alpha, Y_{n +t} \leq c_{n + t}} \leq \alpha \text{ for all }\alpha \in [0, 1], \label{eq:joint-superuniform}\\
    \expect[E_t \cdot \ind{Y_{n + t}\leq c_{n + t}}] \leq 1. \label{eq:joint-evalue}
\end{gather}

In addition, $\mathbf{Q}$ results from a covariate shift on $\mathbf{P}$. This means that $\mathbf{P}(Y \mid X = x) = \mathbf{Q}(Y\mid X = x)$ for all $x \in \Xcal$. Further, the Radon-Nikodym derivative (w.r.t.\ to an arbitrary common base measure) satisfies $(d\mathbf{Q} / d\mathbf{P})(x, y) = w(x)$ for all $x \in \Xcal$, where $w$ is a likelihood ratio dependent only on $x \in \Xcal$. We assume we have access to $w$ (e.g., we can esimate it from other data accurately). In addition, define  a \emph{monotone} score function $V: \Xcal \times \Ycal \mapsto \reals$ as a function satisfying $V(x, y) \leq V(x, y')$ for all $x \in \Xcal$ and $y, y' \in \Ycal$ where $y \leq y'$.

\subsection{FDR control through online multiple testing}
\citet{jin_model-free_selective_2023} construct the following p-value using any monotone score function $V$:
\begin{gather}
    V_i \coloneqq V(X_i, Y_i), \qquad \widehat{V}_{n + t} \coloneqq V(X_{n + t}, c_{n + t}) ,\\
P_t \coloneqq \frac{\sum_{i = 1}^n w(X_i) \mathbf{1}\{V_i < \widehat{V}_{n + t}\} + w(X_{n + t})}{\sum_{i = 1}^n w(X_i) + w(X_{n + t})},
\label{eq:weighted-pvalue}
\end{gather}
For simplicity, we assume that neither $(V_i)_{i \in [n]}$ nor $(\widehat{V}_{n + t})_{t \in \naturals}$ have point masses in their distributions in this paper, and this assumption can be relaxed through simple modifications  to the p-value formulations \citep[eqs. 3 \& 6]{jin_model-free_selective_2023}.
\begin{fact}[Lemma 2.2 \citep{jin_model-free_selective_2023}]
    For each $t \in \naturals$, $P_t$ defined in \eqref{eq:weighted-pvalue} is a p-value \eqref{eq:joint-superuniform}.
    \label{fact:wcs-pvalue}
\end{fact}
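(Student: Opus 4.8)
The plan is to reduce the claim to the super-uniformity of an \emph{oracle} weighted conformal p-value and then establish that super-uniformity through a weighted-exchangeability argument. First I would introduce the unobservable oracle score $V_{n+t} \coloneqq V(X_{n+t}, Y_{n+t})$ and the associated oracle p-value
\begin{align}
\widetilde{P}_t \coloneqq \frac{\sum_{i = 1}^n w(X_i)\,\ind{V_i < V_{n+t}} + w(X_{n+t})}{\sum_{i = 1}^n w(X_i) + w(X_{n+t})},
\end{align}
which is identical to $P_t$ except that the observed threshold score $\widehat{V}_{n+t}$ is replaced by the true test score $V_{n+t}$.

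The reduction step exploits the monotonicity of $V$. On the null event $\{Y_{n+t} \leq c_{n+t}\}$, monotonicity gives $V_{n+t} = V(X_{n+t}, Y_{n+t}) \leq V(X_{n+t}, c_{n+t}) = \widehat{V}_{n+t}$, so $\ind{V_i < V_{n+t}} \leq \ind{V_i < \widehat{V}_{n+t}}$ for every $i$ and hence $\widetilde{P}_t \leq P_t$ pointwise on this event (the denominators coincide). Consequently $\{P_t \leq \alpha\} \cap \{Y_{n+t} \leq c_{n+t}\} \subseteq \{\widetilde{P}_t \leq \alpha\}$, and therefore $\prob{P_t \leq \alpha,\, Y_{n+t} \leq c_{n+t}} \leq \prob{\widetilde{P}_t \leq \alpha}$. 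It then suffices to show $\prob{\widetilde{P}_t \leq \alpha} \leq \alpha$, a statement that no longer references $c_{n+t}$ or the null event.

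For the super-uniformity of $\widetilde{P}_t$, I would condition on the unordered multiset of the $n+1$ labeled points $Z_1 = (X_1, Y_1), \dots, Z_n = (X_n, Y_n), Z_{n+t} = (X_{n+t}, Y_{n+t})$. Because the calibration points are i.i.d.\ $\mathbf{P}$ and the test point is drawn from $\mathbf{Q}$ with $(d\mathbf{Q}/d\mathbf{P})(x,y) = w(x)$ depending only on the covariate, the joint density of the ordered tuple factorizes as $\bigl(\prod_j p(z_j)\bigr)\, w(x_{\mathrm{test}})$; the only asymmetry among the points is the single factor $w$ attached to whichever point is the test point. This yields weighted exchangeability: conditional on the multiset, the test point equals the element $z_j$ with probability $\pi_j \coloneqq w(X_j) / \sum_k w(X_k)$, where $k$ ranges over all $n+1$ points, so that $\sum_k w(X_k)$ is exactly the p-value denominator. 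Since the scores have no ties, I can sort the points by score and let the test point occupy rank $r$; then $\widetilde{P}_t = \sum_{l = 1}^{r} \pi_{(l)}$, where $\pi_{(l)}$ is the normalized weight of the rank-$l$ point, and rank $r$ occurs with conditional probability $\pi_{(r)}$. Setting $R \coloneqq \max\{r : \sum_{l \leq r} \pi_{(l)} \leq \alpha\}$ and using that $\sum_{l \leq r}\pi_{(l)}$ is nondecreasing in $r$ gives $\prob{\widetilde{P}_t \leq \alpha \mid \text{multiset}} = \sum_{r \leq R} \pi_{(r)} = \sum_{l \leq R}\pi_{(l)} \leq \alpha$, and taking expectations over the multiset finishes the proof.

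The main obstacle is the weighted-exchangeability step: carefully justifying that conditioning on the unordered multiset turns the identity of the test point into a categorical draw with probabilities proportional to $w(X_j)$. This is exactly where the covariate-shift assumption $(d\mathbf{Q}/d\mathbf{P})(x,y) = w(x)$ is essential, since a shift depending on $y$ would attach the weight to a quantity not preserved by the unordered set and break the factorization. The remaining pieces --- the monotonicity reduction and the rank computation --- are routine given the no-ties assumption, though in the reduction one should keep track of the measurability of the possibly random threshold $c_{n+t}$.
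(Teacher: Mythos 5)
Correct. Note that the paper itself gives no proof of this statement --- it is imported verbatim as a Fact from \citet{jin_model-free_selective_2023} --- and your argument is essentially the proof in that reference: reduce to the oracle p-value via monotonicity of the score on the null event $\{Y_{n+t} \leq c_{n+t}\}$, then establish super-uniformity of the oracle weighted conformal p-value by conditioning on the unordered multiset and using weighted exchangeability, with the no-ties assumption making the rank computation exact. The same machinery (oracle scores, conditioning on the multiset $\xi_{z,t}$, and the conditional-uniformity lemmas of \citet{hu_two-sample_conditional_2023}) is what the paper deploys in its own proof of \Cref{prop:wcs-evalue}, so your proposal is fully consistent with the paper's toolkit.
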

The dependence structure among $(P_t)$ is quite complicated, and does not satisfy usual independence or positive dependence notions that are amenable to multiple testing without correction \cite[Proposition 2.4]{jin_model-free_selective_2023}.
Thus, one must apply \rLOND\ (or \UrLOND) derive FDR control.
\begin{proposition}
    Let $(\rejset_t^{\rLOND})$ and $(\rejset_t^{\UrLOND})$ be the sequences of rejection sets that arise from applying \rLOND\ or \UrLOND, respectively, to $(P_t)$ as defined in \eqref{eq:weighted-pvalue}. Then, $\FDR(\rejset_t^{\rLOND}) \leq \alpha$ and $\FDR(\rejset_t^{\UrLOND}) \leq \alpha$ for each $t \in \naturals$.
    \label{prop:wcs-p-fdr}
\end{proposition}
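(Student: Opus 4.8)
The plan is to show that both FDR guarantees reduce, almost verbatim, to the arguments already behind \Cref{fact:LONDArbDep} and \Cref{thm:UrLOND}, with the only change being that the deterministic null membership $\ind{i \in \hypset_0}$ is replaced by the random nullity indicator $N_i \coloneqq \ind{Y_{n + i} \leq c_{n + i}}$, and the marginal super-uniformity \eqref{eqn:PMarginalProb} is replaced by the joint super-uniformity \eqref{eq:joint-superuniform} supplied by \Cref{fact:wcs-pvalue}. Writing $\alpha_i$ for the level used at step $i$, $R_i \coloneqq |\rejset_{i - 1}| + 1$, and letting $\rejset_t$ denote either $\rejset_t^{\rLOND}$ or $\rejset_t^{\UrLOND}$, I would first expand the false coverage count over the random null set:
\[
    \FDP(\rejset_t) = \frac{\sum_{i \leq t} N_i \ind{P_i \leq \alpha_i}}{|\rejset_t| \vee 1}.
\]

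The central device is to absorb the random null indicator into a single super-uniform p-value. Define $\bar P_i \coloneqq P_i$ on $\{N_i = 1\}$ and $\bar P_i \coloneqq +\infty$ on $\{N_i = 0\}$, so that $N_i \ind{P_i \leq \alpha_i} = \ind{\bar P_i \leq \alpha_i}$ for every (possibly random) threshold $\alpha_i$. The joint condition \eqref{eq:joint-superuniform} is precisely the statement that $\prob{\bar P_i \leq s} = \prob{P_i \leq s,\ Y_{n + i} \leq c_{n + i}} \leq s$, i.e.\ $\bar P_i$ is a genuine marginally super-uniform p-value. Crucially, since $\bar P_i \geq P_i$ pointwise, the event $\{\bar P_i \leq \alpha_i\}$ implies $\{P_i \leq \alpha_i\}$, so whenever a numerator term fires we have $i \in \rejset_t$, and by nestedness of the discovery sets $|\rejset_t| \geq |\rejset_{i - 1}| + 1 = R_i$. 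This gives the pointwise bound $\FDP(\rejset_t) \leq \sum_{i \leq t} \ind{\bar P_i \leq \alpha_i} / R_i$.

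Taking expectations, it remains to bound each $\expect[\ind{\bar P_i \leq \alpha_i} / R_i]$ by $\alpha \gamma_i$. For \rLOND, $\alpha_i = \alpha \gamma_i \beta_i(R_i)$, and since $\bar P_i$ is super-uniform while $R_i$ is a positive random variable with arbitrary dependence on $\bar P_i$, the reshaping inequality of \citet{blanchard_two_simple_2008} underlying \Cref{fact:LONDArbDep} yields $\expect[\ind{\bar P_i \leq \alpha \gamma_i \beta_i(R_i)} / R_i] \leq \alpha \gamma_i$. For \UrLOND, $\alpha_i = \alpha \gamma_i \beta_i(R_i / U_i)$, and the same term is controlled by \Cref{fact:rand-su-lemma} applied with $P = \bar P_i$, $R = R_i$, and $c = \alpha \gamma_i$. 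Summing over $i \leq t$ and using $\sum_i \gamma_i \leq 1$ then gives $\FDR(\rejset_t) \leq \alpha \sum_{i \leq t} \gamma_i \leq \alpha$ in both cases.

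The hard part will be the dependence bookkeeping that lets the reshaping lemmas apply as black boxes. Two points need care. First, the thresholds $\alpha_i$ are determined by the \emph{actual} $P_i$-based discovery counts $R_i$, not by the null-masked $\bar P_i$; this is exactly why the monotonicity step ($\bar P_i \leq \alpha_i \Rightarrow P_i \leq \alpha_i \Rightarrow |\rejset_t| \geq R_i$) is essential, and why one cannot simply run the base procedure on the sequence $(\bar P_i)$. Second, \Cref{fact:rand-su-lemma} requires $U_i$ to be independent of the pair $(\bar P_i, R_i)$; since $R_i$ depends on $U_1, \dots, U_{i - 1}$ through the earlier thresholds, the \UrLOND\ argument needs the $U_i$ to be drawn independently across steps so that $U_i$ is independent of $(\bar P_i, R_i)$ at step $i$. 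Once these two dependence facts are verified, no re-derivation of the reshaping machinery is needed and the proposition follows from the cited lemmas.
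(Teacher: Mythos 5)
Your proposal is correct and takes essentially the same route as the paper: your null-masked p-value $\bar P_i$ is exactly the paper's $\widetilde{P}_t \coloneqq P_t \vee \ind{Y_t > c_t}$ (masked to $1$ rather than $+\infty$), and both arguments combine joint super-uniformity from \Cref{fact:wcs-pvalue}, the monotonicity step $\{\bar P_i \leq \alpha_i\} \Rightarrow \{|\rejset_t| \vee 1 \geq |\rejset_{i-1}| + 1\}$, and \Cref{fact:rand-su-lemma} applied with $R = |\rejset_{i-1}| + 1$ (the paper uses $U_i \equiv 1$ for \rLOND, subsuming the Blanchard--Roquain inequality you cite separately). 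Your dependence bookkeeping, including the requirement that $U_i$ be independent of $(\bar P_i, R_i)$, is precisely the condition under which the paper invokes \Cref{fact:rand-su-lemma}, so there is no gap.
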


We defer the proof of this result to \Cref{sec:wcs-p-fdr-proof}. \citet{jin_model-free_selective_2023} show that the more powerful way to utilize $P_t$ is to view them as e-values, and we show that a similar phenomenon is also possible for online WCS\@. First, define the following leave-one-out conformal p-values $P^{(t), -}_{j},P^{(t), +}_{j}$ for each $t \in \naturals$ and $j \in [t - 1]$:
\begin{align}
P_j^{(t),-} &\coloneqq \frac{\sum_{i = 1}^n w(X_i) \mathbf{1}\{V_i < \widehat{V}_{n + j}\}}{\sum_{i = 1}^n w(X_i) + w(X_{n + t})},\\
    P_j^{(t),+} &\coloneqq \frac{\sum_{i = 1}^n w(X_i) \mathbf{1}\{V_i < \widehat{V}_{n + j}\} + w(X_{n + t})}{\sum_{i = 1}^n w(X_i) + w(X_{n + t})}.
\end{align}

Let $\widehat{\rejset}^{\LOND(t), -}_{t - 1}$ and $\widehat{\rejset}^{\LOND(t), +}_{t - 1}$ be the discovery set obtained from applying \LOND\ to
$(P_j^{(t), -})_{j \in [t - 1]}$ and
$(P_j^{(t), +})_{j \in [t - 1]}$, respectively.
Define the test levels for the next hypothesis as
\begin{align}
    \widehat{\alpha}_t^{\LOND, -} \coloneqq \alpha \gamma_t \cdot (|\widehat{\rejset}_{t - 1}^{(t), -}| + 1),\qquad
    \widehat{\alpha}_t^{\LOND, +} \coloneqq \alpha \gamma_t \cdot (|\widehat{\rejset}_{t - 1}^{(t), +}| + 1).
\end{align} We can now define the following e-value:
\begin{align}
    E_t^\LOND \coloneqq \mathbf{1}\{P_t \leq \widehat{\alpha}_t^{\LOND, +}\} / \widehat{\alpha_t}^{\LOND, -}.
    \label{eq:weighted-evalue}
\end{align}

\begin{proposition}\label{prop:wcs-evalue}
    For each $t \in \naturals$, $E_t^\LOND$ is an e-value~\eqref{eq:joint-evalue}.
\end{proposition}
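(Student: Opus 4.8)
The plan is to reduce the joint e-value inequality $\expect[E_t^\LOND \ind{Y_{n + t} \le c_{n + t}}] \le 1$ to a purely combinatorial statement about weighted ranks, via two moves: a monotonicity reduction that removes the dependence on the (random) null, and a weighted-exchangeability argument that averages over which augmented point plays the role of the test point.

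First I would exploit monotonicity of $V$. On the null event $N_t := \{Y_{n + t} \le c_{n + t}\}$ we have $V_{n + t} := V(X_{n + t}, Y_{n + t}) \le V(X_{n + t}, c_{n + t}) = \widehat{V}_{n + t}$, so each indicator $\ind{V_i < \widehat{V}_{n + t}}$ dominates $\ind{V_i < V_{n + t}}$. Defining the (unobserved) oracle p-value $P_t^{\mathrm{or}}$ by replacing $\widehat{V}_{n + t}$ with $V_{n + t}$ in \eqref{eq:weighted-pvalue}, this yields $P_t \ge P_t^{\mathrm{or}}$ on $N_t$, hence $\ind{P_t \le \widehat{\alpha}_t^{\LOND, +}} \le \ind{P_t^{\mathrm{or}} \le \widehat{\alpha}_t^{\LOND, +}}$. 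As the resulting bound is nonnegative I may drop $\ind{N_t}$, reducing the claim to $\expect[\ind{P_t^{\mathrm{or}} \le \widehat{\alpha}_t^{\LOND, +}} / \widehat{\alpha}_t^{\LOND, -}] \le 1$. Crucially, neither the oracle numerator (which now uses the true score $V_{n + t}$) nor the thresholds $\widehat{\alpha}_t^{\LOND, \pm}$ (which use only $\widehat{V}_{n + j}$ for $j \le t - 1$) reference $c_{n + t}$ any longer.

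Next I would condition on the unordered multiset of the $n + 1$ augmented points $\{(X_i, Y_i)\}_{i \in [n]} \cup \{(X_{n + t}, Y_{n + t})\}$ together with the data of all earlier test points, which fix $\widehat{V}_{n + j}$ for $j \le t - 1$ and are independent of how the test label is assigned inside the augmented set. By the covariate-shift weighted-exchangeability lemma of \citet{jin_model-free_selective_2023}, conditional on this information the identity of the test point is point $a$ with probability proportional to $w(X_a)$, while the total weight $D := \sum_{i = 1}^n w(X_i) + w(X_{n + t})$ is invariant to the swap. Writing $R(a)$ for the weighted rank $P_t^{\mathrm{or}}$ produced when $a$ is the test point, and $\widehat{\alpha}_t^{\LOND, \pm}(a)$ for the thresholds recomputed with $a$ removed from the calibration set and $w(X_a)$ playing the role of the test-point weight, the conditional expectation becomes $\sum_a \tfrac{w(X_a)}{D} \cdot \ind{R(a) \le \widehat{\alpha}_t^{\LOND, +}(a)} / \widehat{\alpha}_t^{\LOND, -}(a)$, and it suffices to show this finite sum is $\le 1$.

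The main obstacle is exactly this last combinatorial bound, because the thresholds themselves move with the swap $a$. The design of the $+/-$ pair is what controls this: the ``$-$'' leave-one-out p-values are pointwise no larger than the ``$+$'' ones, and since \LOND\ is monotone (pointwise smaller p-values only enlarge its rejection set at every step, as its levels grow with the running count), a coupling/induction argument gives $\widehat{\rejset}^{\LOND(t), -}_{s}(a) \supseteq \widehat{\rejset}^{\LOND(t), +}_{s}(a)$ for all $s \le t - 1$, so $\widehat{\alpha}_t^{\LOND, -}(a) \ge \widehat{\alpha}_t^{\LOND, +}(a)$; placing the larger threshold in the denominator is the source of the slack. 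I would then order the selected set $\{a : R(a) \le \widehat{\alpha}_t^{\LOND, +}(a)\}$ by score and track how removing $a$ shifts the counts $|\widehat{\rejset}^{\LOND(t), \pm}(a)|$, so that the weights $w(X_a)$ of the selected points telescope against $D \cdot \widehat{\alpha}_t^{\LOND, -}(a)$ and force the sum below $1$. This final step parallels the e-value validity argument of \citet{jin_model-free_selective_2023} with \LOND\ as the base procedure, and checking that their telescoping survives the online, count-dependent \LOND\ thresholds is where the real work lies.
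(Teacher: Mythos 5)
Your opening reduction is exactly the paper's first move: on the null event, monotonicity of $V$ gives $V_{n+t} \leq \widehat{V}_{n+t}$ and hence $\bar{P}_t \leq P_t$, where $\bar{P}_t$ is the oracle p-value with $V_{n+t}$ in place of $\widehat{V}_{n+t}$; and the paper likewise conditions on the unordered bag $\xi_{z,t}$ of the $n+1$ augmented points, using weighted exchangeability (via a randomized oracle conformal p-value that is exactly uniform given the bag) to control the test point's conditional law. But there is a genuine gap at the crux of your argument. You keep the empirical thresholds $\widehat{\alpha}_t^{\LOND,\pm}(a)$, which move with the swap $a$ (both the calibration score set and the test weight $w(X_a)$ change), and you defer the resulting bound $\sum_a \tfrac{w(X_a)}{D}\,\ind{R(a) \leq \widehat{\alpha}_t^{\LOND,+}(a)}/\widehat{\alpha}_t^{\LOND,-}(a) \leq 1$ to an unproven telescoping. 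That step is not just ``where the real work lies''; it is doubtful as stated. \LOND's levels scale with the rejection count among the $t-1$ \emph{earlier test hypotheses}, an index set with no relation to the $n+1$ bag elements you sum over, so there is no e-BH-style self-consistency tying $\widehat{\alpha}_t^{\LOND,-}(a)$ to the $w$-weighted mass of selected swaps $a$. The telescoping in \citet{jin_model-free_selective_2023} runs over the family of test points to which e-BH is applied, not over bag elements, so the parallel does not transfer.

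The paper avoids this entirely with one device your proposal is missing: \emph{symmetric} oracle leave-one-out p-values $\bar{P}_j^{(t)}$, which add the test point's own contribution $w(X_{n+t})\ind{V_{n+t} < \widehat{V}_{n+j}}$ to the numerator, making each $\bar{P}_j^{(t)}$ a function of the unordered bag and the earlier test data only, hence \emph{invariant to the swap}. Running \LOND\ on $(\bar{P}_j^{(t)})_{j \in [t-1]}$ yields an oracle threshold $\bar{\alpha}_t^\LOND$ that is conditionally independent of $\bar{P}_t$ given $\xi_{z,t}$, and the sandwich $\widehat{P}_j^{(t),-} \leq \bar{P}_j^{(t)} \leq \widehat{P}_j^{(t),+}$ together with the \LOND\ monotonicity you correctly identified gives $\widehat{\alpha}_t^{\LOND,+} \leq \bar{\alpha}_t^\LOND \leq \widehat{\alpha}_t^{\LOND,-}$, whence $\ind{t \in \hypset_0}\, E_t^\LOND \leq \ind{\bar{P}_t \leq \bar{\alpha}_t^\LOND}/\bar{\alpha}_t^\LOND$. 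Conditional superuniformity of $\bar{P}_t$ given the bag then bounds the conditional expectation of the right-hand side by $1$ pointwise over the (now independent) threshold, and the tower property finishes. In short: the $+/-$ pair exists to sandwich a swap-invariant oracle threshold, not to feed a combinatorial telescope; without the symmetric $\bar{P}_j^{(t)}$, your final step has no proof.
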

We defer the proof of this result to~\Cref{sec:wcs-evalue-proof}. We can derive the FDR control of \ELOND\ or \ULOND\ applied to $(E_t^{\LOND})$\@.
\begin{theorem}\label{thm:wcs-evalue-fdr}
    Using e-values $(E_t)$ satisfying~\eqref{eq:joint-evalue}, $\FDR(\rejset_t^\ELOND) \leq \alpha$ and $\FDR(\rejset_t^\ULOND) \leq \alpha$ for each $t \in \naturals$. \end{theorem}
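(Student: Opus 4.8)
The plan is to reprove the two claims by mirroring the argument behind \Cref{thm:ELONDArbDep}, with two substitutions dictated by the randomized-null setup: the fixed membership indicator $\ind{i \in \hypset_0}$ is replaced by the random null-event indicator $\ind{Y_{n+i} \leq c_{n+i}}$, and the marginal bound $\expect[E_i] \leq 1$ is replaced by the joint bound \eqref{eq:joint-evalue}. First I would write the number of false discoveries at step $t$ as $\sum_{i \in [t]} \ind{E_i \geq 1/\alpha_i^\ELOND}\,\ind{Y_{n+i}\leq c_{n+i}}$, since a discovery $i \in \rejset_t^\ELOND$ is false exactly when its (random) null $H_0^i$ holds, so that $\FDP(\rejset_t^\ELOND) = \sum_{i\in[t]} \ind{E_i \geq 1/\alpha_i^\ELOND}\ind{Y_{n+i}\leq c_{n+i}}\,/\,(|\rejset_t^\ELOND|\vee 1)$.

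The engine is the per-coordinate bound that, for every $i \in [t]$,
\[
\frac{\ind{E_i \geq 1/\alpha_i^\ELOND}}{|\rejset_t^\ELOND|\vee 1} \leq \alpha\gamma_i E_i .
\]
This follows exactly as in \Cref{thm:ELONDArbDep}: on the event $i \in \rejset_t^\ELOND$, the nesting $\rejset_{i-1}^\ELOND \subseteq \rejset_t^\ELOND$ together with $i \notin \rejset_{i-1}^\ELOND$ yields $|\rejset_{i-1}^\ELOND|+1 \leq |\rejset_t^\ELOND|$, while the deterministic inequality \eqref{eqn:DeterministicInequality} gives $1 = \ind{E_i \geq 1/\alpha_i^\ELOND} \leq \alpha_i^\ELOND E_i = \alpha\gamma_i(|\rejset_{i-1}^\ELOND|+1)E_i$; combining these and cancelling $|\rejset_{i-1}^\ELOND|+1$ gives the claim, and off this event the left side is zero. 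Multiplying through by $\ind{Y_{n+i}\leq c_{n+i}} \geq 0$, summing over $i \in [t]$, and taking expectations gives $\FDR(\rejset_t^\ELOND) \leq \sum_{i\in[t]} \alpha\gamma_i\,\expect[E_i\ind{Y_{n+i}\leq c_{n+i}}] \leq \alpha\sum_{i\in[t]}\gamma_i \leq \alpha$, where the middle inequality is precisely the joint e-value condition \eqref{eq:joint-evalue} applied termwise and the last uses $\sum_t\gamma_t \leq 1$. Arbitrary dependence across $t$ is harmless, since the $\FDP$ bound holds pointwise and \eqref{eq:joint-evalue} is invoked separately for each index.

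For \ULOND\ I would reuse the reduction used when \ULOND\ was introduced: by \eqref{eq:ULOND} a discovery is made at $t$ iff $E_t \geq (\alpha_t^\ELOND)^{-1}U_t$, which by \Cref{fact:stochastic-rounding-e} coincides with the discovery rule of \ELOND\ applied to the randomized e-values $S_{\alpha_t^\ELOND}(E_t)$. It therefore suffices to verify that each $S_{\alpha_t^\ELOND}(E_t)$ still satisfies the joint condition \eqref{eq:joint-evalue}, after which the \ELOND\ bound above applies verbatim. The hard part will be confirming this, because $S$ is built from a uniform $U_t$ that must be independent not only of $E_t$ and $\alpha_t^\ELOND$ but also of the data-dependent null event $\{Y_{n+t}\leq c_{n+t}\}$. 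Conditioning on $(E_t,\alpha_t^\ELOND,Y_{n+t},c_{n+t})$ and integrating out only $U_t$, a case split on whether $E_t\alpha_t^\ELOND \geq 1$ shows $\expect[S_{\alpha_t^\ELOND}(E_t)\mid E_t,\alpha_t^\ELOND,Y_{n+t},c_{n+t}] = E_t$; multiplying by $\ind{Y_{n+t}\leq c_{n+t}}$ and taking expectations then gives $\expect[S_{\alpha_t^\ELOND}(E_t)\ind{Y_{n+t}\leq c_{n+t}}] = \expect[E_t\ind{Y_{n+t}\leq c_{n+t}}] \leq 1$. This certifies that $(S_{\alpha_t^\ELOND}(E_t))$ are joint e-values, and the \ULOND\ guarantee follows from the \ELOND\ guarantee applied to them.
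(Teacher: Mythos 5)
Your proposal is correct and follows essentially the same route as the paper's proof: the same termwise bound $\ind{E_i \geq 1/\alpha_i^{\ELOND}}/(|\rejset_t^{\ELOND}| \vee 1) \leq \alpha\gamma_i E_i$ (via the growth of the discovery set and \eqref{eqn:DeterministicInequality}) followed by the joint e-value condition \eqref{eq:joint-evalue} applied index by index, and the same reduction of \ULOND\ to \ELOND\ run on the randomized e-values $S_{\alpha_t^{\ELOND}}(E_t)$. Your verification that $S_{\alpha_t^{\ELOND}}(E_t)$ satisfies \eqref{eq:joint-evalue} --- conditioning on $(E_t, \alpha_t^{\ELOND}, Y_{n+t}, c_{n+t})$ and integrating out only $U_t$ --- is just a slightly more explicit rendering of the paper's one-line appeal to $\expect[S_{\alpha_t^{\ELOND}}(E_t) \mid E_t] = E_t$ and the tower property.
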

We defer the proof of this result to~\Cref{sec:wcs-evalue-fdr-proof}. Now, we apply our online WCS techniques to some real data settings in~\cite{jin_model-free_selective_2023}, and use their code to calculate the weighted p-values in~\eqref{eq:weighted-pvalue} for each setup.
 \begin{figure}[h]\label{fig:wcs-results}
    \begin{subfigure}{0.5\textwidth}
    \includegraphics[width=\textwidth]{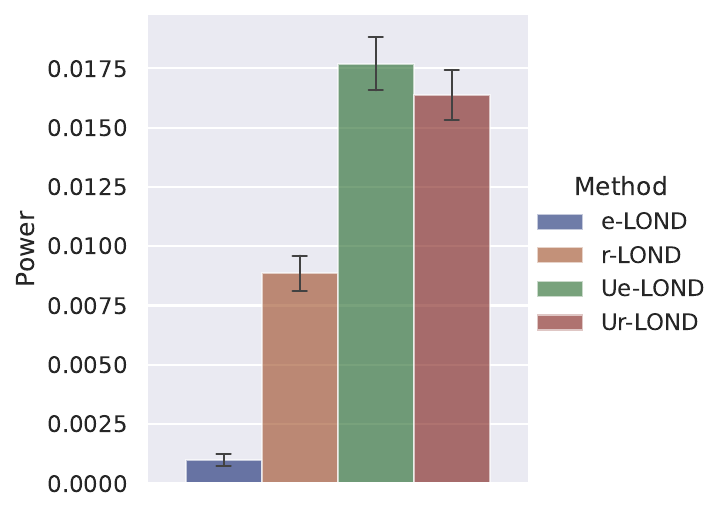}
    \end{subfigure}
    \begin{subfigure}{0.5\textwidth}
    \includegraphics[width=\textwidth]{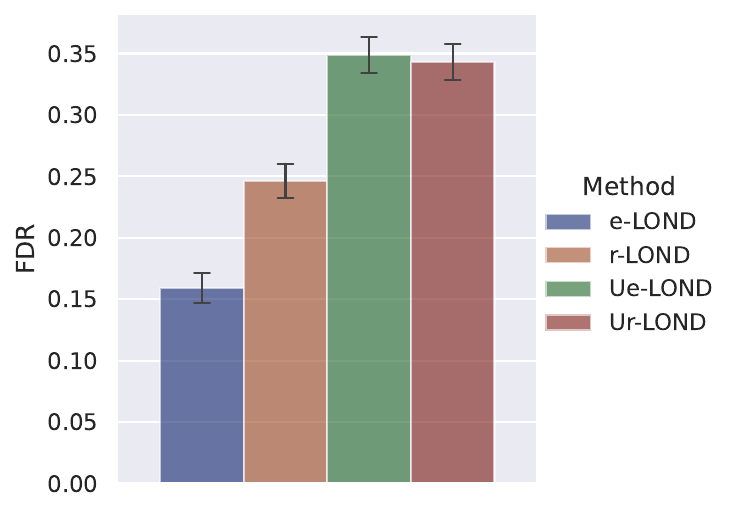}
    \end{subfigure}
    \caption{Average power and empirical FDR for methods applied at $\alpha=0.5$ for the drug property prediction task, with the error bars marking one standard error from Monte Carlo estimation. We can see that \ULOND\ has the largest power. E-LOND has the smallest power, due to the way $(E_t^\LOND)$ are formulated to almost always be below the test threshold of \ELOND\ itself. However, by allowing randomization in \ULOND, we see that this issue is fixed and \ULOND\ exceeds the power of both \UrLOND\ and \rLOND.}
\end{figure}

\subsection{Drug property prediction}
We tackle the task of predicting drug properties that uses the HIV screening dataset in the DeepPurpose library \citep{huang_deeppurpose_deep_2021} --- the goal is to select a subset of drug candidates that bind to a target protein for HIV. The covariate $X$ is the chemical structure of the drug, that is encoded into the form of a vector $\reals^d$, and $Y \in \{0, 1\}$ is binary label of whether it does not or does bind. In constructing the calibration set, experimenters might pick drugs that seem more likely to bind to analyze (and label) and induce a covariate shift as a result of selection bias. Thus, we construct a setup that emulates this issue.
40\% of the data set is placed in $\Dcal_\train$ and used to train a neural network classifier $\widehat{\mu}: \Xcal \mapsto \reals$ that predicts the probability of binding. 60\% more of the dataset is used to construct $\Dcal_{\calib}$ by selecting each point $(X_i , Y_i)$ to be in $\Dcal_\calib$ with probability $p(X_i)$ where $p(x) = \sigma(\widehat{\mu}(x) - \bar{\mu}) \wedge 0.8$.
Of the points that are neither in $\Dcal_{\train}$ nor $\Dcal_{\calib}$, we sample 5\% randomly to constitute $\Dcal_{\test}$ due to computational constraints. Consequently, there is a covariate shift between the calibration and the test set, and the resulting likelihood ratio satisfies $w(x)\propto 1 / p(x)$. The null hypothesis that we wish to test is as follows:
 \begin{align}
     H_0^t: Y_{n + t}  = 0.
 \end{align}
Controlling the FDR results in selecting a subset of drugs where only a small proportion do not bind to the protein in expectation. We average our results over 600 trials. We see that the power of \ULOND\ in \Cref{fig:wcs-results} is the largest. On the other hand, the power of \ELOND\ is the smallest. This is because $E_t^\LOND$ is either 0 or $1 / \widehat{\alpha}_t^{\LOND, +}$, and $\alpha_t^\ELOND < \widehat{\alpha}_t^{\LOND, +}$ holds often, as $\widehat{\alpha}_t^{\LOND, +}$ is a conservative estimate of $\alpha_t^\ELOND$. The randomization from \ULOND\ alleviates this problem, hence it attaining the largest power. All methods also practically control FDR at the desired level of $\alpha=0.5$.
 \section{Related work}
\label{sec:RelatedWork}

This work lies at the intersection of e-values and online multiple testing. We outline the most relevant research in each of these areas to this work.

\paragraph{Online multiple testing} Online multiple testing was first posed by \cite{foster2008alpha} when they were studying computationally cheap methods for performing streamed variable selection in high dimensional things and proved mFDR control for alpha-investing. The methods were subsequently improved in several follow up works to be more powerful and also guarantee control of the FDR \cite{aharoni2014generalized,javanmard2018online,ramdas2017online}. \citet{ramdas2018saffron} and \citet{tian2019addis} developed adaptive online multiple testing procedures based on Storey's method \citep{storey_false_discovery_2002} for offline FDR control. With the exception of \citet{javanmard2018online}, all these works all focus on online FDR or mFDR control under the assumption that p-values are independent or are p-values when conditioned on the information observed so far (e.g., previous p-values, rejection decisions, etc.), i.e., conditional superuniformity. As mentioned before, more recent work of \citet{zrnic_asynchronous_2021} considers explicitly modeling dependence relationships through conflict sets to derive algorithms that still control mFDR and FDR even when independence or conditional superuniformity is not satisfied. Another line of work considers the situation when the rejection decision of a hypothesis does not have to be made immediately, but only need to be made by a later time, such as at the end of a batch of hypotheses being jointly experimented on \citep{zrnic2020power} or at individual future deadlines \citep{fisher_online_control_2022}. This is the first work that directly targets the arbitrary dependence case. \citet{robertson2019onlinefdr} provide a R package implementing many of the aforementioned methods for online control of the FDR, in addition to other online multiple testing methods. Online multiple testing methods (including \LOND) have already been applied in a variety of medicinal and biological applications \citep{robertson2018online,robertson_online_error_2022,liou_global_fdr_2023}. 

\paragraph{E-values} E-values have been applied in many offline multiple testing settings such as FDR control \citep{wang_false_discovery_2020,ignatiadis_e-values_unnormalized_2022} and closed testing \citep{vovk_confidence_discoveries_2023}. In particular, the e-BH procedure introduced by \citet{wang_false_discovery_2020} has been used as a subroutine in other multiple testing procedures with FDR control such as in the bandit setting \citep{xu_unified_framework_2021}, for the purpose of derandomizing knockoffs \citep{ren_derandomized_knockoffs_2022} or achieving optimality under a Bayesian linear model alternative \citep{ahn_nearoptimal_multiple_2022}. \citet{xu_postselection_inference_2022} present selective inference procedure with FCR control for e-CIs. Further, \citet{jin_model-free_selective_2023} showed that the weighted conformal selection procedure in their paper can also be viewed as an application of e-BH to e-values. This work is novel in bringing all these insights concerning e-values that have been used in offline multiple testing to the online setting.  \section{Omitted proofs}

Here, we include the full proofs of the results contained in \Cref{sec:ELOND}, \Cref{sec:FCR}, and \Cref{sec:conformal}.
\subsection{Proof of \Cref{thm:ELONDArbDep}}\label{sec:ArbDepProof}

For brevity, we will write \(\alpha_t^{\ELOND}\) as \(\alpha_t\) in the proofs in this section.
\begin{align}
    \FDR(\rejset_t) &= \expect\left[\sum\limits_{i \in \hypset_0 \cap [t]}\frac{\ind{E_i \geq \alpha_i^{-1}}}{|\rejset_{t}| \vee 1}\right]
    =\sum\limits_{i \in \hypset_0 \cap [t]}\expect\left[\frac{\ind{E_i \geq \alpha_i^{-1}}}{|\rejset_{t}| \vee 1} \times \ind{E_i \geq \alpha_i^{-1}}\right]\\
    &\labelrel{\leq}{rel:rej-ub}\sum\limits_{i \in \hypset_0 \cap [t]}\expect\left[\frac{\alpha_i E_i}{|\rejset_{t}| \vee 1} \times \ind{|\rejset_t| \geq |\rejset_{i - 1}| + 1}\right]\\
    &\labelrel{\leq}{rel:elond-def}\sum\limits_{i \in \hypset_0 \cap [t]}\expect\left[\frac{\alpha \gamma_i(|\rejset_{i - 1}| + 1) E_i}{|\rejset_{i - 1}| + 1} \times \ind{|\rejset_t| \geq |\rejset_{i - 1}| + 1}\right]\\
    &\labelrel{\leq}{rel:drop-ind}\sum\limits_{i \in \hypset_0 \cap [t]}\expect\left[\frac{\alpha \gamma_i(|\rejset_{i - 1}| + 1) E_i}{|\rejset_{i - 1}| + 1} \right] = \alpha \sum\limits_{i \in \hypset_0 \cap [t]}\gamma_i\expect\left[E_i\right] \tag*{} \leq \alpha.
\end{align}

Inequality \eqref{rel:rej-ub} is a result of \eqref{eqn:DeterministicInequality} and $|\rejset_t| \geq |\rejset_{i - 1}| + \ind{E_i \geq \alpha_i^{-1}}$ by construction of $\rejset_t$. Inequality \eqref{rel:rej-ub} is a result of the indicator in the expectation (i.e., making discovery at $H_i$ will make $\rejset_t$ larger than $\rejset_{i - 1}$).
Inequality \eqref{rel:drop-ind} comes from dropping the indicator term.
The last inequality is due to \(\expect[E_t] \leq 1\) for all \(t \in \hypset_0\) by definition of e-values \eqref{eqn:EMarginalExpect}, and because \((\gamma_t)\) sum up to 1. Thus, we achieve an upper bound of \(\alpha\) on the final line and have shown our desired result on FDR.

To show \ELOND\ strictly dominates \rLOND, it is sufficient show that \(\alpha_t^{\ELOND} \geq \alpha_t^{\rLOND}\) for all \(t \in \naturals\), and there exists a sequence of e-values $(E_t)$ such that there exists $t \in \naturals$ such that $\alpha_t^{\ELOND} > \alpha_t^{\rLOND}$. For any $t \in \naturals$,

\begin{align}
\beta_t(|\rejset_{t - 1}| + 1) = \int\limits_0^{|\rejset_{t - 1}| + 1} x\ d\nu(x) \leq |\rejset_{t - 1}| + 1,
\end{align} where the first equality is by definition of reshaping function, and the inequality is because $x \leq |\rejset_{t - 1}| + 1$ in the integrand, and $\nu$ is a probability measure that is nonnegative and integrates to 1. Thus, \(\alpha_t^{\ELOND} \geq \alpha_t^{\rLOND}\) for all \(t \in \naturals\).

Next, note for $\beta_2$, either it satisfes (1) $ \beta_2(2) = 2$ and $\beta_2(1) = 0$ or (2) $\beta_2(2) < 2$ --- this follows from the definition of reshaping function, and case (1) correpsonds to putting all probability mass in $\nu$ on 2.

If $\beta_2$ satisfies case (1), then we set $E_1 = 1 / (\alpha \gamma_1) + 1$. This results in $\alpha_2^{\ELOND} = \alpha \gamma_2 > 0 = \alpha_2^{\rLOND}$. Otherwise, we set $E_1 = 1 / (\alpha \gamma_1)$, which leads to a rejection by \ELOND, and note that $\alpha_2^{\rLOND} \leq \alpha\gamma_2\beta_2(2) < 2\alpha\gamma_2 = \alpha_2^{\ELOND}$. Thus, we have shown that \ELOND\ strictly dominates \rLOND\ applied to $(1 / E_t)$ and conclude our proof.\hfill$\qed$

\subsection{Proof of \Cref{thm:UrLOND}}\label{sec:urlond-proof}
For simplicity, denote $\alpha_t^\UrLOND, \rejset_t^\UrLOND$ as $\alpha_t, \rejset_t$. Similar to the proof of FDR control for \rLOND\ in \cite{zrnic_asynchronous_2021}, we first show the following inequality for any $i \in [t]$:\begin{align}
        &\expect\left[\frac{\ind{P_i \leq \alpha_i^{\UrLOND}}}{|\rejset_t| \vee 1}\right] \labelrel{=}{rel:ind-imp} \expect\left[\frac{\ind{P_i \leq \alpha_i^{\UrLOND}}}{|\rejset_t| \vee 1}\ind{|\rejset_{t - 1}| \geq |\rejset_{i - 1}| + 1}\right]\\
        &\labelrel{=}{rel:expand} \expect\left[\frac{\ind{P_i \leq \alpha \gamma_i\beta_i((|\rejset_{i - 1}| + 1) / U_i)}}{|\rejset_{t}| \vee 1}\ind{|\rejset_t| \vee 1 \geq |\rejset_{i - 1}| + 1}\right]\\
        &\labelrel{\leq}{rel:denom-ub} \expect\left[\frac{\ind{P_i \leq \alpha \gamma_i\beta_i((|\rejset_{i - 1}| + 1) / U_i)}}{|\rejset_{i - 1}| + 1}\ind{|\rejset_t| \vee 1 \geq |\rejset_{i - 1}| + 1}\right]\\
        &\labelrel{\leq}{rel:denom-drop-ind} \expect\left[\frac{\ind{P_i \leq \alpha \gamma_i\beta_i((|\rejset_{i - 1}| + 1) / U_i)}}{|\rejset_{i - 1}| + 1}\right]
        \labelrel{\leq}{rel:su-apply} \alpha \gamma_t.\label{eq:ind-bound}
        \label{eq:fdr-ind-bound}
    \end{align} Equality \eqref{rel:ind-imp} is because $\{P_i \leq \alpha_i^{\UrLOND}\} \Rightarrow \{|\rejset_{i - 1}| + 1 \leq |\rejset_t| \vee 1\}$ as a result of a discovery being made at the $i$th hypothesis.
    Equality \eqref{rel:expand} is by expanding the definition of $\alpha_i^{\UrLOND}$.
    Inequality \eqref{rel:denom-ub} is the indicator $\ind{|\rejset_{t}|\vee 1 \geq |\rejset_{i - 1}| + 1}$ being 1 iff the event it is indicating is true.
    Inequality \eqref{rel:denom-drop-ind} is simply by droppign the indicator.
    Inequality \eqref{rel:su-apply} is by \Cref{fact:rand-su-lemma}. Thus, we can derive the following bound on the FDR by \eqref{eq:ind-bound}:
    \begin{align}
        \FDR(\rejset_t) &= \sum\limits_{i \in \hypset_0 \cap [t]} \expect\left[\frac{\ind{P_i \leq \alpha_i^{\UrLOND}}}{|\rejset_t| \vee 1}\right]\\
        &\leq \alpha \sum\limits_{[t]} \gamma_t\leq \alpha,
    \end{align} which achieves our desired FDR control.

    The strict dominance in expectation follows from the fact that $\alpha_t^{\UrLOND} > \alpha_t^{\rLOND}$ with nonzero probability whenever $|\rejset_{t - 1}| < t - 1$ because $U_t^{-1}$ is a positive number that is at least 1, and $(|\rejset_{t - 1}| + 1)U_t^{-1} \geq |\rejset_{t - 1}| + 2$ (which implies $\beta_t^\BY((|\rejset_{t - 1}| + 1)U_t^{-1}) > \beta_t^\BY(|\rejset_{t - 1}| + 1)$) with nonzero probability. Thus, we have shown strict dominance in expectation and all results in the theorem.\hfill$\qed$

\subsection{Proof of \Cref{thm:FCR}}\label{sec:FCRProof}
Denote $\alpha_t^{\ELOND}$ as $\alpha_t$ in this section.
We make the following derivation for the FCR:
\begin{align}
    \FCR(\Scal_t) &= \expect\left[\sum\limits_{i \in \Scal_t}\frac{\ind{\theta_i \not\in C_i(X_i, \alpha_i)}}{|\Scal_{t}| \vee 1}\right]
    \labelrel{=}{rel:ci-ind}\expect\left[\sum\limits_{i \in \Scal_t}\frac{\ind{E_{\theta_i} \geq \alpha_i^{-1}}}{|\Scal_{t}| \vee 1}\right]\\
&\labelrel{\leq}{rel:elond-ci-def}\expect\left[\sum\limits_{i \in \Scal_t}\frac{\alpha \gamma_i(|\Scal_{i - 1}| + 1) E_i}{|\Scal_t| \vee 1} \right]
    \labelrel{=}{rel:ind-move}\sum\limits_{i \in [t]}\expect\left[\frac{\alpha \gamma_i(|\Scal_{i - 1}| + 1) E_i \ind{i \in \Scal_t}}{|\Scal_t| \vee 1} \right]\\
    &\labelrel{\leq}{rel:sel-grow}\sum\limits_{i \in [t]}\expect\left[\frac{\alpha \gamma_i(|\Scal_{i - 1}| + 1)E_{\theta_i}\ind{|\Scal_t| \geq |\Scal_{i - 1}| + 1}}{|\Scal_t| \vee 1} \right]\\
    &\labelrel{\leq}{rel:shrink-denom}\sum\limits_{i \in [t]}\expect\left[\frac{\alpha \gamma_i(|\Scal_{i - 1}| + 1)E_{\theta_i}\ind{|\Scal_t| \vee 1 \geq |\Scal_{i - 1}| + 1}}{|\Scal_{i - 1}| + 1} \right]\\
    &\labelrel{\leq}{rel:ci-drop-ind}\sum\limits_{i \in [t]}\expect\left[\frac{\alpha \gamma_i(|\Scal_{i - 1}| + 1) E_{\theta_i}}{|\Scal_{i - 1}| + 1} \right] = \alpha \sum\limits_{i \in [t]}\gamma_i\expect\left[E_{\theta_i}\right] \leq \alpha.
\end{align}
Equality~\eqref{rel:ci-ind} is by the definition of an e-CI in~\eqref{eq:eci-def}.
Inequality~\eqref{rel:elond-ci-def} is by the definition of an e-LOND.
Equality~\eqref{rel:ind-move} is simply arithmetic with the indicator of whether $i$ is in $\Scal_t$.
Inequality~\eqref{rel:sel-grow} is because $i \in \Scal_t$ implies that $\Scal_{t}$ gained a selected parameter, namely the $i$th parameter, over $\Scal_{i - 1}$.
Inequality~\eqref{rel:shrink-denom} is because $i \in \Scal_t$ implies that $\Scal_{t}$ gained a selected parameter, namely the $i$th parameter, over $\Scal_{i - 1}$.
Inequality~\eqref{rel:ci-drop-ind} follows from dropping the indicator, and the last inequality is again due to \(\expect[E_{\theta_i}] \leq 1\) for each $i \in \naturals$ by definition of e-values \eqref{eqn:EMarginalExpect}, and because \((\gamma_t)\) sum up to 1. Thus, we achieve our desired result of FCR control of \(\alpha\).

\ULOND-CI can be shown to have FCR control by following the above argument, except we can replace $E_{\theta_i}$ with $S_{\alpha_t^\ELOND}(E_{\theta_i})$. Thus, we have shown our desired levels of FCR control. \hfill$\qed$

\subsection{Proof of \Cref{prop:wcs-p-fdr}}
\label{sec:wcs-p-fdr-proof}
Let $\widetilde{P}_t \coloneqq P_t \vee \ind{Y_t > c_t}$. Note that for each $t \in \naturals$, $\widetilde{P}_t$ satisfies the following two properties.
\begin{gather}
    \{\widetilde{P}_t\leq s\} \Leftrightarrow \{P_t \leq s, Y_t \leq c_t\}\text{ and }\prob{\widetilde{P}_t\leq s} = \prob{P_t\leq s, Y_t \leq c_t} \leq s \text{ for all }s \in [0, 1). \label{eq:weighted-p-ub}
\end{gather} This is by definition of $\widetilde{P}_t$ and by the superuniform constraint on $P_t$ in \eqref{eq:joint-superuniform}. Further, we can see that
\begin{align}
    \{P_t \leq s, Y_t \leq c_t\} \Rightarrow \{\widetilde{P}_t\leq s\} \text{ when }s = 1, \label{eq:weighted-p-except}
\end{align} by definition of $\widetilde{P}_t$ as well.

We also observe the following implication holds:
\begin{align}
    \{\widetilde{P}_i \leq \alpha_i\} \Rightarrow \{\rejset_t \supset \rejset_{i - 1} \} \Rightarrow \{|\rejset_t| \vee 1 \geq \rejset_{i - 1} + 1\},\label{eq:rej-grow}
\end{align} for all $t \geq i$ simply because a discovery set grows when a a new discovery is made.

Let $(\alpha_t), (\rejset_t)$ be either $(\alpha_t^\rLOND), (\rejset_t^\rLOND)$ or $(\alpha_t^\UrLOND), (\rejset_t^\UrLOND)$. We can make the following derivation of the FDR:
\begin{align}
    \FDR(\rejset_t) &= \sum\limits_{i \in [t]}\expect\left[\frac{\ind{P_i \leq \alpha_t, i \in \hypset_0}}{|\rejset_t| \vee 1}\right] = \sum\limits_{i \in [t]}\expect\left[\frac{\ind{P_i \leq \alpha_t, Y_i \leq c_i}}{|\rejset_t| \vee 1}\right]
                    \labelrel{\leq}{rel:ind-swap} \sum\limits_{i \in [t]}\expect\left[\frac{\ind{\widetilde{P}_i \leq \alpha_i}}{|\rejset_t| \vee 1}\right]\\
                    &\labelrel{=}{rel:rej-grow-ind} \sum\limits_{i \in [t]}\expect\left[\frac{\ind{\widetilde{P}_i \leq \alpha_i}}{|\rejset_{i - 1}| + 1}\right]
                    \labelrel{\leq}{rel:denom-swap} \sum\limits_{i \in [t]}\expect\left[\frac{\ind{\widetilde{P}_i \leq \alpha \gamma_i \cdot \beta_i((|\rejset_{i - 1}| + 1) / U_i)}}{|\rejset_{i - 1}| + 1}\right]
                    \labelrel{\leq}{rel:gamma-sum} \sum\limits_{i \in [t]}\alpha \gamma_i \leq \alpha.
\end{align}
Inequality \eqref{rel:ind-swap} is by a combination of \eqref{eq:weighted-p-ub} and \eqref{eq:weighted-p-except}.
Inequality \eqref{rel:rej-grow-ind} is because of \eqref{eq:rej-grow}.
Inequality \eqref{rel:denom-swap} is by the definition of either choice of $(\alpha_t)$ ($U_i = 1$ if \rLOND, and $U_i$ is an independent uniform random variable over $[0, 1]$ if \UrLOND) and the fact that $|\rejset_t| \vee 1 \leq |\rejset_{t - 1}| + 1$ by definition of discovery sets.
Inequality \eqref{rel:gamma-sum} is by \Cref{fact:rand-su-lemma}, since $U_i$ is superuniform and independent of all $\widetilde{P}_i$.
The last inequality is due to $\sum_{i \in [t]}\gamma_i \leq 1$. Thus, we have shown our desired FDR control.\hfill$\qed$

\subsection{Proof of \Cref{prop:wcs-evalue}}
\label{sec:wcs-evalue-proof}

We follow a similar proof structure to the proof of Theorem 3.1 in \citet{jin_model-free_selective_2023}.

First, we define the following oracle p-values (that cannot be computed from the observable data) to assist with our proof:
\begin{align}
    \bar{P}_t  &\coloneqq  \frac{\sum_{i = 1}^n w(X_i) \mathbf{1}\{V_i < V_{n + t}\} + w(X_{n + t})}{\sum_{i = 1}^n w(X_i) + w(X_{n + t})}.\\
    \bar{P}_j^{(t)}  &\coloneqq  \frac{\sum_{i = 1}^n w(X_i) \mathbf{1}\{V_i < \widehat{V}_{n + j}\} + w(X_{n + t})\mathbf{1}\{V_{n + t} < \widehat{V}_{n + j}\}}{\sum_{i = 1}^n w(X_i) + w(X_{n + t})}.
\end{align} These essentially replace $\widehat{V}_{n + t}$ with $V_{n + t}$ when compared to their empirical counterparts $P_t$ and $P_j^{(t)}$, respectively.
The first thing we note is the following relationship between the oracle nonconformity score and the empirical nonconformity score at $n + t$:
\begin{align}
    t \in \hypset_0 \Leftrightarrow Y_{n + t} \leq c_{n + t} \Rightarrow V_{n + t} \leq \widehat{V}_{n + t} \Rightarrow \bar{P}_t \leq \widehat{P}_t,
\end{align} since $V$ is a monotone score function. Further, the oracle p-values $(\bar{P}_j^{(t)})_{j \in [t - 1]}$ are bounded by their empirical counterparts, i.e.,
\begin{align}
   \widehat{P}_j^{(t), -} \leq \bar{P}_j^{(t)} \leq \widehat{P}_j^{(t), +} \text{ for all }t \in \naturals\text{ and }j \in [t - 1].\label{eq:oracle-lb-emp}
\end{align}

Define $\bar{\rejset}_{t - 1}$ to be the discovery set that results from applying $\LOND$ to $(\bar{P}_1^{(t)}, \dots, \bar{P}_{t - 1}^{(t)})$, and define
\begin{align}
    \bar{\alpha}_t^\LOND \coloneqq \alpha \gamma_t \cdot (|\bar{\rejset}_{t - 1}| + 1), \qquad \bar{E}_t^{\LOND} \coloneqq \ind{\bar{P}_t \leq \bar{\alpha}_t^\LOND} / \bar{\alpha}_t^\LOND
\end{align} to be the test level for the next hypothesis and an all-or-nothing e-value testing at that level, respectively. By \eqref{eq:oracle-lb-emp}, we can derive that
\begin{align}
    |\widehat{\rejset}_{t - 1}^+| \leq |\bar{\rejset}_{t - 1}| \leq |\widehat{\rejset}_{t - 1}^-|, \text{ and } \widehat{\alpha}_t^{\LOND, +} \leq \bar{\alpha}_t^\LOND \leq \widehat{\alpha}_t^{\LOND, -}.\label{eq:rej-ineq}
\end{align}
This gives us the following inequality:
\begin{align}
    \ind{t \in \hypset_0} \cdot E_t^\LOND = \frac{\ind{t \in \hypset_0} \cdot \ind{\widehat{P}_t \leq \widehat{\alpha}_t^{\LOND, +}}}{\widehat{\alpha}_{t}^{\LOND, -}}
    \leq \frac{\ind{t \in \hypset_0}\cdot\ind{ \bar{P}_t \leq \bar{\alpha}_t^\LOND}}{\bar{\alpha}_t^\LOND}\leq \bar{E}_t^\LOND.\label{eq:oracle-e-ub}
\end{align}

Now we need to show that $\bar{E}_t^\LOND$ is an e-value as defined in \eqref{eq:joint-evalue}. Define $Z_i \coloneqq (X_i, Y_i)$ for each $i \in \naturals$. Let $Z\coloneqq [Z_1, \dots, Z_n, Z_{n + t}]$ denote the unordered set of $\{Z_1, \dots, Z_n, Z_{n + t}\}$, and $z = [z_1, \dots, z_n, z_{n + t}]$ be the unordered set of their realized values. Define $\xi_{z, t}$ as the event such that $Z = z$. Let $I_t \in [n] \cup \{n + t\}$ be the index such that $Z_{n + t} = z_{I_t}$. Now, we note the following important facts
\begin{gather}
    \bar{P}_t \text{ is measurable w.r.t.\ }Z\text{ and }I_t.\\
    (\bar{P}_j^{(t)})_{j \in [t - 1]}, \bar{\rejset}_{t - 1}, \bar{\alpha}_t^\LOND \text{ are measurable w.r.t.\ }Z\text{ and }\{Z_{n + i}\}_{i \neq t}.
\end{gather}
In addition, we have that
\begin{align}
    \{Z_{n + i}\}_{i \neq t } \indep I_t \mid \xi_{t, z}.
\end{align} This is a result of $\{Z_{n + i}\}_{i \neq t } \indep \{Z_i\}_{i \in [n] \cup \{n + t\}}$ since each data point is assumed to be independent. As a result, we can conclude that
\begin{align}
    \bar{P}_t \indep \bar{\alpha}_t^\LOND \mid \xi_{z, t}.\label{eq:cond-ind}
\end{align}
Let $F_{z, t} \coloneqq \prob{\bar{P}_t \leq \bar{\alpha}_t^\LOND \mid \xi_{z, t}}$ be the conditional c.d.f.\ of $\bar{P}_t$.

Now, we define a randomized oracle conformal p-value:
\begin{align}
    P^*_t\coloneqq  \frac{\sum_{i = 1}^n w(X_i) \mathbf{1}\{V_i < V_{n + t}\} + U^*_t (w(X_{n + t}) + \ind{V_i = V_{n + t}})}{\sum_{i = 1}^n w(X_i) + w(X_{n + t})}.
\end{align} where $U_t^*$ is an independent uniform random variable on $[0, 1]$.

We know cite the following fact from \citet{hu_two-sample_conditional_2023} that arises due to weighted exchangeability of $(Z_1, \dots, Z_n, Z_{n + t})$:
\begin{fact}[Lemmas 2 and 3 of \citet{hu_two-sample_conditional_2023}]
    $P^*_t \mid \xi_{t, j}$ is uniformly distributed over $[0, 1]$.
\end{fact}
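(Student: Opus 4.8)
The plan is to prove the claim by the standard weighted-exchangeability argument for randomized conformal $p$-values, carried out conditionally on the pooled set $\xi_{z,t} = \{Z = z\}$ (I read the conditioning event $\xi_{t,j}$ in the statement as $\xi_{z,t}$). First I would pin down the conditional law of the test-point index $I_t$. Since $Z_1, \dots, Z_n$ are i.i.d.\ from $\mathbf{P}$ and $Z_{n+t} \sim \mathbf{Q}$ independently, the joint density of the ordered tuple factors as $\prod_{i=1}^n p(z_i)\, q(z_{n+t})$; applying the covariate-shift identity $q(z) = w(x)\, p(z)$ rewrites this as $w(x_{n+t}) \prod_{k=1}^{n+1} p(z_k)$ after reindexing the whole pool by $1, \dots, n+1$. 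The factor $\prod_k p(z_k)$ is symmetric in the pool, so conditioning on the unordered set $\xi_{z,t}$ leaves only the factor $w(x_{n+t})$, giving, with $W \coloneqq \sum_{k=1}^{n+1} w(x_k)$,
\[
\prob{I_t = k \mid \xi_{z,t}} = \frac{w(x_k)}{W}.
\]
This is exactly the weighted-exchangeability conclusion (Lemma 2 of \citet{hu_two-sample_conditional_2023}).

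Next I would evaluate $P^*_t$ on the event $\{I_t = k\}$. Under the no-point-mass assumption the tie term $\ind{V_i = V_{n+t}}$ vanishes almost surely, so $P^*_t = \bigl(\sum_{i=1}^n w(X_i)\mathbf{1}\{V_i < V_{n+t}\} + U^*_t\, w(X_{n+t})\bigr)/W$, where the denominator always collapses to $W$ since it sums the calibration weights plus $w(X_{n+t})$. Ordering the pooled scores as $v_{(1)} < \dots < v_{(n+1)}$ with matching weights $w_{(1)}, \dots, w_{(n+1)}$ and letting $r$ be the rank of the test score, every pool member with a strictly smaller score is a calibration point, so the numerator's first term equals $\sum_{j<r} w_{(j)}$ and
\[
P^*_t = \frac{\sum_{j<r} w_{(j)} + U^*_t\, w_{(r)}}{W}.
\]
Because $U^*_t \sim \mathrm{Unif}[0,1]$ is independent of the pool and of $I_t$, conditional on rank $r$ the variable $P^*_t$ is uniform on the interval $[a_{r-1}, a_r]$, where $a_0 \coloneqq 0$ and $a_r \coloneqq \sum_{j\le r} w_{(j)}/W$.

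Finally I would assemble the mixture over $r$. The intervals $\{[a_{r-1}, a_r]\}_{r=1}^{n+1}$ partition $[0,1]$, and the $r$-th interval has length $w_{(r)}/W$; by the first step the test point occupies rank $r$ with probability $w_{(r)}/W$, since rank is a deterministic function of $I_t$ given the pool. Hence on $[a_{r-1}, a_r]$ the conditional density of $P^*_t$ is $(w_{(r)}/W)\cdot(W/w_{(r)}) = 1$, and summing over $r$ gives a density identically $1$ on $[0,1]$, i.e.\ $P^*_t \mid \xi_{z,t} \sim \mathrm{Unif}[0,1]$. The one delicate point, and the crux of the argument, is the exact cancellation in this last step: the probability $w_{(r)}/W$ assigned to rank $r$ must precisely match the width $w_{(r)}/W$ of its interval, and it is exactly the randomizer $U^*_t$ that smooths the otherwise discrete weighted empirical c.d.f.\ into a continuous uniform, so that the weights cancel and no residual atoms remain.
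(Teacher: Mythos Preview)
Your proof is correct and self-contained. The paper itself does not prove this statement at all: it is stated as a \emph{Fact} with a citation to Lemmas~2 and~3 of \citet{hu_two-sample_conditional_2023} and is used as a black box in the proof of Proposition~\ref{prop:wcs-evalue}. Your reading of the conditioning event as $\xi_{z,t}$ (rather than the paper's $\xi_{t,j}$, which is a typo) is also right.

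What you have written is precisely the weighted-exchangeability argument that underlies the cited lemmas: (i) the conditional law of the test-point index given the unordered pool is proportional to $w(x_k)$; (ii) given the rank $r$ of the test score, the randomizer $U_t^*$ makes $P^*_t$ uniform on the $r$th subinterval of $[0,1]$; (iii) the interval widths match the rank probabilities, so the mixture is exactly $\mathrm{Unif}[0,1]$. Because the paper only cites the result, you have gone beyond what the paper does; your argument is the standard one and would serve as a drop-in proof if one wanted to make the paper self-contained on this point.
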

Since $P^*_t \leq \bar{P}_t$ determinstically, we have that
\begin{align}
    F_{z, t}(s) \leq \prob{P^*_t \leq s \mid \xi_{z, t}} \leq s\text{ for all }s \in [0, 1].\label{eq:superuniform-f}
\end{align}

Relating this back to our e-value, we, get that
\begin{align}
    \expect[\bar{E}_t^\LOND \mid \xi_{z, t}] = F_{z, t}(\bar{\alpha}^\LOND_t) / \bar{\alpha}^\LOND_t \leq 1 \label{eq:cond-ub}
\end{align} by \eqref{eq:superuniform-f} and \eqref{eq:cond-ind}. $\expect[\bar{E}_t^\LOND] \leq 1$ follows by the tower property of conditional expectation applied to \eqref{eq:cond-ub}. Hence, our desired result that $E_t^\LOND$ is an e-value follows from \eqref{eq:oracle-e-ub}.

\subsection{Proof of \Cref{thm:wcs-evalue-fdr}}
\label{sec:wcs-evalue-fdr-proof}
Let $\alpha_t, \rejset_t$ be short for $\alpha_t^\ELOND, \rejset^{\ELOND}_t$. We can make the following derivation:

\begin{align}
    \FDR(\rejset_t) &= \sum\limits_{i \in [t]}\expect\left[\frac{\ind{E_i \geq \alpha_i, i \in \hypset_0}}{|\rejset_t| \vee 1}\right]
= \sum\limits_{i \in [t]}\expect\left[\frac{\ind{E_i \geq \alpha_i}\cdot \ind{i \in \hypset_0}}{|\rejset_t| \vee 1}\right]\\
                    &\labelrel{=}{rel:grow-rejset} \sum\limits_{i \in [t]}\expect\left[\frac{\ind{E_i \geq \alpha_i}\cdot \ind{i \in \hypset_0}}{|\rejset_t| \vee 1} \cdot \ind{|\rejset_t| \geq |\rejset_{i - 1}| + 1}\right]\\
                    &\labelrel{\leq}{rel:det-ub} \sum\limits_{i \in [t]}\expect\left[\frac{\alpha_i E_i\cdot \ind{i \in \hypset_0}}{|\rejset_t| \vee 1} \cdot \ind{|\rejset_t| \vee 1 \geq |\rejset_{i - 1}| + 1}\right]\\
                    &\labelrel{\leq}{rel:drop-rejset-ind} \sum\limits_{i \in [t]}\expect\left[\frac{\alpha_i E_i\cdot \ind{i \in \hypset_0}}{|\rejset_{i - 1}| + 1} \right]
                    \labelrel{=}{rel:expand-def}\sum\limits_{i \in [t]}\expect\left[\frac{\alpha \gamma_i(|\rejset_{i-1}| + 1) E_i\cdot \ind{i \in \hypset_0}}{|\rejset_{i - 1}| + 1} \right]\\
                    &=\sum\limits_{i \in [t]}\alpha \gamma_i\expect\left[ E_i\cdot \ind{i \in \hypset_0} \right]
\leq \sum\limits_{i \in [t]}\alpha \gamma_i \leq \alpha.
\end{align}
Inequality \eqref{rel:grow-rejset} is because $E_i\geq \alpha_i$ implies a discovery is made at the $i$th hypothesis. Inequality \eqref{rel:det-ub} is because $E_i, \alpha_i$ are nonnegative. Inequality \eqref{rel:drop-rejset-ind} is a result of dropping the indicator for $|\rejset_t| \vee 1 \geq |\rejset_{i - 1}| + 1$ and lower bounding the denominator. Equality \eqref{rel:expand-def} is by exanding the definition of $\alpha_t$ and the final two inequalities are by the definition of an e-value from \eqref{eq:joint-evalue} and $\sum_i \gamma_i \leq 1$. FDR control of \ULOND\  can be proven in a similar fashion by replacing $E_i$ with $S_{\alpha_i^\ELOND}(E_i)$, since $\expect[S_{\alpha_i^\ELOND}(E_i) \mid E_i] = E_i$. Thus, we know that $S_{\alpha_i^\ELOND}(E_i)$ is also an e-value as defined in \eqref{eq:joint-evalue} by the tower property of conditional expectation, and the rest of the proof follows.

 \section{Conclusion}\label{sec:Conclusion}
E-LOND\ and \ULOND\ are two novel procedures that utilize e-values to provide state-of-the-art performance, both practically and theoretically, in power while ensuring provable FDR control under arbitrary dependence. We also built on recent results in using randomization for multiple testing to develop the more powerful randomized online multiple testing procedures of \ULOND\ and \UrLOND. One natural direction is to extend our results to the LORD family of algorithms, which are more powerful, but assign test levels based on the number of hypotheses between the current hypothesis and each of the previous rejections -- more careful analysis is required to ensure FDR control. Note that the sharpness result in \Cref{sec:sharp-fdr} does not preclude this possibility because it only shows that the FDR e-LOND is tight in one specific instance, but e-LOND could be improved in other instances (e.g., have larger test levels when at least one discovery is made). Current LORD algorithms rely on independence and PRDS assumptions to have FDR control while retaining power. Another direction is to explore how e-values can be incorporated with the adaptive online FDR controlling procedures of SAFFRON \citep{ramdas2018saffron} and ADDIS \citep{tian2019addis}, which estimate the proportion of nulls in the manner of Storey-BH \citep{storey_false_discovery_2002}.

\paragraph{Acknowledgements} The authors acknowledge support from NSF grant DMS-1916320.
 \bibliography{hypothesis}

\begin{thebibliography}{51}
\providecommand{\natexlab}[1]{#1}
\providecommand{\url}[1]{\texttt{#1}}
\expandafter\ifx\csname urlstyle\endcsname\relax
  \providecommand{\doi}[1]{doi: #1}\else
  \providecommand{\doi}{doi: \begingroup \urlstyle{rm}\Url}\fi

\bibitem[{1000 Genomes Project Consortium}(2015)]{auton_global_reference_2015}
{1000 Genomes Project Consortium}.
\newblock A global reference for human genetic variation.
\newblock \emph{Nature}, 526\penalty0 (7571):\penalty0 68--74, 2015.

\bibitem[Aharoni and Rosset(2014)]{aharoni2014generalized}
Ehud Aharoni and Saharon Rosset.
\newblock Generalized $\alpha$-investing: definitions, optimality results and
  application to public databases.
\newblock \emph{Journal of the Royal Statistical Society: Series B (Statistical
  Methodology)}, pages 771--794, 2014.

\bibitem[Ahn et~al.(2022)Ahn, Lin, and Mei]{ahn_nearoptimal_multiple_2022}
Taejoo Ahn, Licong Lin, and Song Mei.
\newblock Near-optimal multiple testing in {{Bayesian}} linear models with
  finite-sample {{FDR}} control.
\newblock arXiv:2211.02778, 2022.

\bibitem[Asuncion and Newman(2007)]{newman1998uci}
A.~Asuncion and D.H. Newman.
\newblock {UCI} machine learning repository.
\newblock 2007.
\newblock URL \url{http://archive.ics.uci.edu/ml}.

\bibitem[Benjamini and Hochberg(1995)]{benjamini_controlling_false_1995}
Yoav Benjamini and Yosef Hochberg.
\newblock Controlling the {{False Discovery Rate}}: {{A Practical}} and
  {{Powerful Approach}} to {{Multiple Testing}}.
\newblock \emph{Journal of the Royal Statistical Society. Series B
  (Methodological)}, 57\penalty0 (1):\penalty0 289--300, 1995.

\bibitem[Benjamini and Yekutieli(2001)]{benjamini_control_false_2001}
Yoav Benjamini and Daniel Yekutieli.
\newblock The control of the false discovery rate in multiple testing under
  dependency.
\newblock \emph{The Annals of Statistics}, 29\penalty0 (4):\penalty0
  1165--1188, 2001.

\bibitem[Blanchard and Roquain(2008)]{blanchard_two_simple_2008}
Gilles Blanchard and Etienne Roquain.
\newblock Two simple sufficient conditions for {{FDR}} control.
\newblock \emph{Electronic Journal of Statistics}, 2:\penalty0 963--992, 2008.

\bibitem[Bojer and Meldgaard(2021)]{bojer_kaggle_forecasting_2021}
Casper~Solheim Bojer and Jens~Peder Meldgaard.
\newblock Kaggle forecasting competitions: {{An}} overlooked learning
  opportunity.
\newblock \emph{International Journal of Forecasting}, 37\penalty0
  (2):\penalty0 587--603, 2021.

\bibitem[Campbell et~al.(2007)Campbell, Donner, and
  Klar]{campbell_developments_cluster_2007}
M.~J. Campbell, A.~Donner, and N.~Klar.
\newblock Developments in cluster randomized trials and {{Statistics}} in
  {{Medicine}}.
\newblock \emph{Statistics in Medicine}, 26\penalty0 (1):\penalty0 2--19, 2007.

\bibitem[Dunn et~al.(2022)Dunn, Gangrade, Wasserman, and
  Ramdas]{dunn_universal_inference_2022}
Robin Dunn, Aditya Gangrade, Larry Wasserman, and Aaditya Ramdas.
\newblock Universal {{Inference Meets Random Projections}}: {{A Scalable Test}}
  for {{Log-concavity}}.
\newblock arXiv:2111.09254, 2022.

\bibitem[Dwork et~al.(2015)Dwork, Feldman, Hardt, Pitassi, Reingold, and
  Roth]{dwork_preserving_statistical_2015}
Cynthia Dwork, Vitaly Feldman, Moritz Hardt, Toniann Pitassi, Omer Reingold,
  and Aaron~Leon Roth.
\newblock Preserving {{Statistical Validity}} in {{Adaptive Data Analysis}}.
\newblock In \emph{{{ACM}} Symposium on {{Theory}} of {{Computing}}}, 2015.

\bibitem[Fisher(2022{\natexlab{a}})]{fisher_online_false_2022}
Aaron Fisher.
\newblock Online {{False Discovery Rate Control}} for {{LORD}} \& {{SAFFRON
  Under Positive}}, {{Local Dependence}}.
\newblock arXiv:2110.08161, 2022{\natexlab{a}}.

\bibitem[Fisher(2022{\natexlab{b}})]{fisher_online_control_2022}
Aaron~J. Fisher.
\newblock Online {{Control}} of the {{False Discovery Rate}} under ``{{Decision
  Deadlines}}''.
\newblock In \emph{{{International Conference}} on {{Artificial Intelligence}}
  and {{Statistics}}}, 2022{\natexlab{b}}.

\bibitem[Foster and Stine(2008)]{foster2008alpha}
Dean Foster and Robert~A Stine.
\newblock {Alpha-Investing}: {A Procedure} for {Sequential Control} of
  {Expected False Discoveries}.
\newblock \emph{Journal of the Royal Statistical Society: Series B (Statistical
  Methodology)}, 70\penalty0 (2):\penalty0 429, 2008.

\bibitem[Gangrade et~al.(2023)Gangrade, Rinaldo, and
  Ramdas]{gangrade_sequential_test_2023}
Aditya Gangrade, Alessandro Rinaldo, and Aaditya Ramdas.
\newblock A {{Sequential Test}} for {{Log-Concavity}}.
\newblock arXiv:2301.03542, 2023.

\bibitem[Hu and Lei(2023)]{hu_two-sample_conditional_2023}
Xiaoyu Hu and Jing Lei.
\newblock A {{Two-Sample Conditional Distribution Test Using Conformal
  Prediction}} and {{Weighted Rank Sum}}.
\newblock \emph{Journal of the American Statistical Association}, 0\penalty0
  (0):\penalty0 1--19, 2023.

\bibitem[Huang et~al.(2021)Huang, Fu, Glass, Zitnik, Xiao, and
  Sun]{huang_deeppurpose_deep_2021}
Kexin Huang, Tianfan Fu, Lucas~M Glass, Marinka Zitnik, Cao Xiao, and Jimeng
  Sun.
\newblock {{DeepPurpose}}: A deep learning library for drug\textendash target
  interaction prediction.
\newblock \emph{Bioinformatics}, 36\penalty0 (22-23):\penalty0 5545--5547,
  2021.

\bibitem[Ignatiadis et~al.(2023)Ignatiadis, Wang, and
  Ramdas]{ignatiadis_e-values_unnormalized_2022}
Nikolaos Ignatiadis, Ruodu Wang, and Aaditya Ramdas.
\newblock E-values as unnormalized weights in multiple testing.
\newblock \emph{Biometrika}, 2023.

\bibitem[Javanmard and Montanari(2015)]{javanmard_online_control_2015}
Adel Javanmard and Andrea Montanari.
\newblock On {{Online Control}} of {{False Discovery Rate}}.
\newblock arXiv:1502.06197, 2015.

\bibitem[Javanmard and Montanari(2018)]{javanmard2018online}
Adel Javanmard and Andrea Montanari.
\newblock Online rules for control of false discovery rate and false discovery
  exceedance.
\newblock \emph{The Annals of Statistics}, 46\penalty0 (2):\penalty0 526--554,
  2018.

\bibitem[Jin and Cand{\`e}s(2023)]{jin_model-free_selective_2023}
Ying Jin and Emmanuel~J. Cand{\`e}s.
\newblock Model-free selective inference under covariate shift via weighted
  conformal p-values.
\newblock arXiv:2307.09291, 2023.

\bibitem[Koscielny et~al.(2014)Koscielny, Yaikhom, Iyer, Meehan, Morgan,
  {Atienza-Herrero}, Blake, Chen, Easty, Di~Fenza, Fiegel, Grifiths, Horne,
  Karp, Kurbatova, Mason, Matthews, Oakley, Qazi, Regnart, Retha, Santos,
  Sneddon, Warren, Westerberg, Wilson, Melvin, Smedley, Brown, Flicek, Skarnes,
  Mallon, and Parkinson]{koscielny_international_mouse_2014}
Gautier Koscielny, Gagarine Yaikhom, Vivek Iyer, Terrence~F. Meehan, Hugh
  Morgan, Julian {Atienza-Herrero}, Andrew Blake, Chao-Kung Chen, Richard
  Easty, Armida Di~Fenza, Tanja Fiegel, Mark Grifiths, Alan Horne, Natasha~A.
  Karp, Natalja Kurbatova, Jeremy~C. Mason, Peter Matthews, Darren~J. Oakley,
  Asfand Qazi, Jack Regnart, Ahmad Retha, Luis~A. Santos, Duncan~J. Sneddon,
  Jonathan Warren, Henrik Westerberg, Robert~J. Wilson, David~G. Melvin, Damian
  Smedley, Steve D.~M. Brown, Paul Flicek, William~C. Skarnes, Ann-Marie
  Mallon, and Helen Parkinson.
\newblock The {{International Mouse Phenotyping Consortium Web Portal}}, a
  unified point of access for knockout mice and related phenotyping data.
\newblock \emph{Nucleic Acids Research}, 42\penalty0 (D1):\penalty0 D802--D809,
  2014.

\bibitem[Liou et~al.(2023)Liou, Hornburg, and Robertson]{liou_global_fdr_2023}
Lathan Liou, Milena Hornburg, and David~S Robertson.
\newblock Global {{FDR}} control across multiple {{RNAseq}} experiments.
\newblock \emph{Bioinformatics}, 39\penalty0 (1), 2023.

\bibitem[Ramdas et~al.(2017)Ramdas, Yang, Wainwright, and
  Jordan]{ramdas2017online}
Aaditya Ramdas, Fanny Yang, Martin~J Wainwright, and Michael~I Jordan.
\newblock Online control of the false discovery rate with decaying memory.
\newblock In \emph{Neural Information Processing Systems}, 2017.

\bibitem[Ramdas et~al.(2018)Ramdas, Zrnic, Wainwright, and
  Jordan]{ramdas2018saffron}
Aaditya Ramdas, Tijana Zrnic, Martin Wainwright, and Michael Jordan.
\newblock {SAFFRON}: an adaptive algorithm for online control of the false
  discovery rate.
\newblock In \emph{International Conference on Machine Learning}, 2018.

\bibitem[Ramdas et~al.(2020)Ramdas, Ruf, Larsson, and
  Koolen]{ramdas_admissible_anytimevalid_2020}
Aaditya Ramdas, Johannes Ruf, Martin Larsson, and Wouter Koolen.
\newblock Admissible anytime-valid sequential inference must rely on
  nonnegative martingales.
\newblock arXiv:2009.03167, 2020.

\bibitem[Ramdas et~al.(2021)Ramdas, Ruf, Larsson, and
  Koolen]{ramdas_how_can_2021}
Aaditya Ramdas, Johannes Ruf, Martin Larsson, and Wouter~M. Koolen.
\newblock How can one test if a binary sequence is exchangeable?
  {{Fork}}-convex hulls, supermartingales and e-processes.
\newblock \emph{International Journal of Approximate Reasoning}, 2021.

\bibitem[Ramdas et~al.(2023)Ramdas, Gr{\"u}nwald, Vovk, and
  Shafer]{ramdas_gametheoretic_statistics_2022}
Aaditya Ramdas, Peter Gr{\"u}nwald, Vladimir Vovk, and Glenn Shafer.
\newblock Game-theoretic statistics and safe anytime-valid inference.
\newblock \emph{Statistical Science}, 2023.

\bibitem[Ren and Barber(2023)]{ren_derandomized_knockoffs_2022}
Zhimei Ren and Rina~Foygel Barber.
\newblock Derandomised knockoffs: Leveraging e-values for false discovery rate
  control.
\newblock \emph{Journal of the Royal Statistical Society: Series B (Statistical
  Methodology)}, 2023.

\bibitem[Robertson and Wason(2018)]{robertson2018online}
David~S Robertson and James Wason.
\newblock Online control of the false discovery rate in biomedical research.
\newblock arXiv:1809.07292, 2018.

\bibitem[Robertson et~al.(2022{\natexlab{a}})Robertson, Liou, Ramdas, and
  Karp]{robertson2019onlinefdr}
David~S. Robertson, Lathan Liou, Aaditya Ramdas, and Natasha~A. Karp.
\newblock \emph{{onlineFDR}: {Online} error control}, 2022{\natexlab{a}}.
\newblock R package 2.6.0.

\bibitem[Robertson et~al.(2022{\natexlab{b}})Robertson, Wason, K{\"o}nig,
  Posch, and Jaki]{robertson_online_error_2022}
David~S. Robertson, James M.~S. Wason, Franz K{\"o}nig, Martin Posch, and
  Thomas Jaki.
\newblock Online error control for platform trials.
\newblock arXiv:2202.03838, 2022{\natexlab{b}}.

\bibitem[Robertson et~al.(2023)Robertson, Wason, and
  Ramdas]{robertson_online_multiple_2022a}
David~S. Robertson, James M.~S. Wason, and Aaditya Ramdas.
\newblock Online multiple hypothesis testing for reproducible research.
\newblock \emph{Statistical Science}, 2023.

\bibitem[{Splawa-Neyman} et~al.(1990){Splawa-Neyman}, Dabrowska, and
  Speed]{splawa-neyman_application_probability_1990}
Jerzy {Splawa-Neyman}, D.~M. Dabrowska, and T.~P. Speed.
\newblock On the {{Application}} of {{Probability Theory}} to {{Agricultural
  Experiments}}. {{Essay}} on {{Principles}}. {{Section}} 9.
\newblock \emph{Statistical Science}, 5\penalty0 (4):\penalty0 465--472, 1990.

\bibitem[Storey(2002)]{storey_false_discovery_2002}
John~David Storey.
\newblock \emph{False Discovery Rates Theory and Applications to {{DNA}}
  Microarrays}.
\newblock PhD thesis, Stanford University, 2002.

\bibitem[Tian and Ramdas(2019)]{tian2019addis}
Jinjin Tian and Aaditya Ramdas.
\newblock {ADDIS}: an adaptive discarding algorithm for online {FDR} control
  with conservative nulls.
\newblock In \emph{Neural Information Processing Systems}, 2019.

\bibitem[Vovk and Wang(2021)]{vovk_e-values_calibration_2021}
Vladimir Vovk and Ruodu Wang.
\newblock E-values: {{Calibration}}, combination and applications.
\newblock \emph{The Annals of Statistics}, 49\penalty0 (3):\penalty0
  1736--1754, 2021.

\bibitem[Vovk and Wang(2023)]{vovk_confidence_discoveries_2023}
Vladimir Vovk and Ruodu Wang.
\newblock Confidence and discoveries with e-values.
\newblock \emph{Statistical Science}, 2023.

\bibitem[Wang and Ramdas(2022)]{wang_false_discovery_2020}
Ruodu Wang and Aaditya Ramdas.
\newblock False discovery rate control with e-values.
\newblock \emph{Journal of the Royal Statistical Society: Series B (Statistical
  Methodology)}, 84:\penalty0 822--852, 2022.

\bibitem[Wasserman et~al.(2020)Wasserman, Ramdas, and
  Balakrishnan]{wasserman_universal_inference_2020}
Larry Wasserman, Aaditya Ramdas, and Sivaraman Balakrishnan.
\newblock Universal inference.
\newblock \emph{Proceedings of the National Academy of Sciences}, 117\penalty0
  (29):\penalty0 16880--16890, 2020.

\bibitem[{Waudby-Smith} and
  Ramdas(2020)]{waudby-smith_confidence_sequences_2020}
Ian {Waudby-Smith} and Aaditya Ramdas.
\newblock Confidence sequences for sampling without replacement.
\newblock In \emph{Neural Information Processing Systems}, 2020.

\bibitem[{Waudby-Smith} and Ramdas(2023)]{waudby-smith_estimating_means_2023}
Ian {Waudby-Smith} and Aaditya Ramdas.
\newblock Estimating means of bounded random variables by betting.
\newblock \emph{Journal of the Royal Statistical Society: Series B (Statistical
  Methodology)}, 2023.

\bibitem[Weinstein and Ramdas(2020)]{weinstein2019online}
Asaf Weinstein and Aaditya Ramdas.
\newblock Online control of the false coverage rate and false sign rate.
\newblock In \emph{International Conference on Machine Learning}, 2020.

\bibitem[{Wellcome Trust Case Control
  Consortium}(2007)]{burton_genome-wide_association_2007}
The {Wellcome Trust Case Control Consortium}.
\newblock Genome-wide association study of 14,000 cases of seven common
  diseases and 3,000 shared controls.
\newblock \emph{Nature}, 447\penalty0 (7145):\penalty0 661--678, 2007.

\bibitem[Xu et~al.(2015)Xu, Chen, Fernandez, Sinno, and
  Bhasin]{xu2015infrastructure}
Ya~Xu, Nanyu Chen, Addrian Fernandez, Omar Sinno, and Anmol Bhasin.
\newblock {From} infrastructure to culture: {A/B} testing challenges in large
  scale social networks.
\newblock In \emph{ACM SIGKDD Conference on Knowledge Discovery and Data
  Mining}, 2015.

\bibitem[Xu and Ramdas(2023)]{xu_more_powerful_2023}
Ziyu Xu and Aaditya Ramdas.
\newblock More powerful multiple testing under dependence via randomization.
\newblock arXiv:2305.11126, 2023.

\bibitem[Xu et~al.(2021)Xu, Wang, and Ramdas]{xu_unified_framework_2021}
Ziyu Xu, Ruodu Wang, and Aaditya Ramdas.
\newblock A unified framework for bandit multiple testing.
\newblock In \emph{{{Neural Information Processing Systems}}}, 2021.

\bibitem[Xu et~al.(2022)Xu, Wang, and Ramdas]{xu_postselection_inference_2022}
Ziyu Xu, Ruodu Wang, and Aaditya Ramdas.
\newblock Post-selection inference for e-value based confidence intervals.
\newblock arXiv:2203.12572, 2022.

\bibitem[Yang et~al.(2017)Yang, Ramdas, Jamieson, and
  Wainwright]{yang_framework_multiarmedbandit_2017}
Fanny Yang, Aaditya Ramdas, Kevin~G Jamieson, and Martin~J Wainwright.
\newblock A framework for {{Multi-A}}(rmed)/{{B}}(andit) {{Testing}} with
  {{Online FDR Control}}.
\newblock In \emph{{{Neural Information Processing Systems}}}, 2017.

\bibitem[Zrnic et~al.(2020)Zrnic, Jiang, Ramdas, and Jordan]{zrnic2020power}
Tijana Zrnic, Daniel Jiang, Aaditya Ramdas, and Michael Jordan.
\newblock {The} {Power} of {Batching} in {Multiple Hypothesis Testing}.
\newblock In \emph{International Conference on Artificial Intelligence and
  Statistics}, 2020.

\bibitem[Zrnic et~al.(2021)Zrnic, Ramdas, and Jordan]{zrnic_asynchronous_2021}
Tijana Zrnic, Aaditya Ramdas, and Michael~I. Jordan.
\newblock Asynchronous {{Online Testing}} of {{Multiple Hypotheses}}.
\newblock \emph{Journal of Machine Learning Research}, 22\penalty0
  (33):\penalty0 1--39, 2021.

\end{thebibliography}
\appendix
\section{Comments on the online multiple testing problem}

We provide additional comments on the motivation behind the formlation of the online multiple testing problem in this section by discussing why FDR is our target error metric and the relationship between online multiple testing and adaptive data analysis.

\subsection{Additional remarks on online FDR control}
\label{sec:WhyFDR}
One might wonder why we wish to simply ensure FDR control, and not prove guarantees about the power of our algorithms as well, e.g., the expected proportion of non-null hypotheses that we actually discover with our algorithm. This is because the in scientific discovery, we cannot know the exact distribution of the statistic under the true distribution when the null hypothesis is false---that would defeat the purpose of testing if the null hypothesis is true in the first place. Prior knowledge or assumptions about the distribution of the true distribution when the null hypothesis is false is often already incorporated by the scientist when designing the individual statistics that are passed to the online multiple testing algorithm. Hence, our framework for online FDR control allows for the user to flexibily change $\alpha_t$ to be large or small based on what they expect the signal of the hypothesis to be.

\subsection{Relating online multiple testing and adaptive data analysis}
\label{sec:ADA}
There is a rich literature on \emph{adaptive data analysis} \citep{dwork_preserving_statistical_2015} that explicitly tackles the data reuse problem, but it is orthogonal to our setup as it focused on the problem of estimation, makes assumptions about the statistic (e.g., bounded) being tested, and focuses on the relation between the number of adaptively chosen parameters can be accurately estimated and the number of i.i.d.\ samples that have been gathered. On the other hand, online multiple testing is agnostic to the exact data generating mechanism (e.g., single dataset, data gathered in a correlated fashion, datasets being merged together, etc.), assumes access to the data only through a statistic (i.e., p-value, e-value, or CI), and maintains error control for a potentially infinite stream of hypotheses, which are not assumed to be adaptively or adversarially chosen. Hence, these two approaches are complementary to each other --- adaptive data analysis focuses on what is the max number of parameters one can estimate for a fixed set of data, while online multiple testing aims to ensure Type I error control regardless of the underlying data sampling method used to test each hypothesis.

\section{Simulation details}
\label{sec:SimulationDetails}

We provide the details of our simulations (in \Cref{sec:Simulations}) in this section. In this section, any references to discount sequence $(\gamma_t)$ is referring to the same choice of $(\gamma_t)$ used in the corresponding algorithm (i.e., \ELOND, \ULOND, \rLOND, or \UrLOND) that is acting on the e-values or p-values. In all our simulations, we let $\gamma_t = 1 / (t(t + 1))$.  We ran the simulations on a 12 core, 60GB RAM cloud server.

\subsection{Definition of \LORD$^*$}\label{sec:lord-star}

We recall the \LORD$^*$ algorithm of \citet{zrnic_asynchronous_2021} as follows:
\begin{align}
    \alpha^{\LORD^*}_t \coloneqq \alpha \left(w_0\gamma_t + \ind{|\rejset_{t - 1}| \geq 1, 1 \not\in \Ccal_t}(\alpha - w_0)\gamma_{t - r_1} + \sum\limits_{i \in \rejset_{t - 1} \setminus [1], i \not\in \Ccal_t} \gamma_{t - i} \right).
\end{align}
Here $w_0 \in [0, \alpha]$ is an algorithm parameter --- we set $w_0 = 0.9$ in all our simulations. $r_1$ is the index of the first discovery made by \LORD$^*$. $(\Ccal_t)$ are a sequence of ``conflict sets'' that dictate hypothesis indices that the current hypothesis has dependence or ``conflict'' with. In our local dependence setting, $\Ccal_t = \{t - L, \dots, t - 1\}$.

\subsection{Local dependence simulation details}\label{sec:local-dep-details}
Each $X_t^i$ is a sample from Beta$(a, b)$ distribution, where we let $a + b = 10^{-2}$, that is shifted and rescaled to be supported on $[-4, 4]$.
The following Hoeffding-based process $(M^i_t)_i$ was shown by \citet{waudby-smith_estimating_means_2023} to be an e-process for random variables bounded in $[\ell, u]$ if $\expect[X^i_t] = 0$ for $ i \in [N]$.
\begin{align}
    M^i_t = \exp\left(\sum\limits_{j = 1}^i\lambda^j_t X^j_t - \frac{(\lambda^j_t(u - \ell))^2}{8} \right),
\end{align}
for any sequence of $(\lambda^j_t)_{j \in [N]}$ that is predictable, i.e., $\lambda_t^j$ can be determined by $X_t^1, \dots,X_t^{j - 1}$. We let $\lambda_t^j = \sqrt{8\log(1 / (\alpha \gamma_t)) / ((u - \ell)^2N)}$ as per \citet[eq. 3.6]{waudby-smith_estimating_means_2023}.

Our e-values, and p-values are defined as follows:
\begin{align}
    E_t = M^{\tau_t}_t\text{ and } P_t = \frac{1}{\max_{i \leq N}\ M^{i}_t},
\end{align} The stopping time $\tau^E_t$ defined the in the following recursive fashion:
\begin{align}
    \tau_t = \min\{i \in [N]: M^i_t \geq 1 / \widehat{\alpha}_t^\ELOND(i)\} \cup \{N\},
\end{align} where we define $\widehat{\alpha}^\ELOND_t(i)$ to be the test level output by \ELOND\ after being applied to $(M_1^{\tau_1 \wedge i}, \dots,  M_{t - 1}^{\tau_{t - 1} \wedge i})$, where $\wedge$ denotes minimum. Note that $(M_1^{\tau_1 \wedge i}, \dots,  M_{t - 1}^{\tau_{t - 1} \wedge i})$ can be computed using only the first $i$ samples of the data for the first $t - 1$ hypotheses, i.e., $\{X_k^j\}_{j \in [i], k \in [t - 1]}$. Hence, these are valid stopping times.

\subsection{Sampling WoR simulation details}\label{sec:sampling-wor-details}
Let $[\ell, u]$ be the support of the population, and in our case, we set $\ell = -4, u = 4$. Let $P(\mu)$ be the distribution $X = (u - \ell) Y + u$ , where $Y \sim \text{Beta}((\mu - \ell) \cdot s / (u - \ell), (u - \mu) \cdot s / (u - \ell))$, i.e., $P(\mu)$
is the Beta distribution scaled to be supported on $[\ell, u]$ with mean $\mu$, and variance scaling factor $s$ (where a smaller $s$ results in population values concentrating at the support limits).
Next, take a discrete grid of size $N \times T$ that is uniformly spread over $[0, 1]$, and compute the quantiles of the grid values of $P(\mu)$. We then shift all quantile values below (or above) $\mu$ by the same amount, so the mean of the grid quantiles is equal to $\mu$.

The e-values and p-values we use in this setup are derived from the following e-process from \citet{waudby-smith_confidence_sequences_2020} for sampling WoR:
\begin{align}
    M^i_t = \exp\left(\sum\limits_{j = 1}^i\lambda^j_t X^j_t  + \mu^{j - 1}_t(0) - \frac{(\lambda^j_t(u - \ell))^2}{8} \right),
\end{align} for any predictable sequence $(\lambda^i_t)_{i \in [N]}$ where $\mu^i_t(0) = \frac{1}{N - i + 1}\sum_{j = 1}^i X_t^j$ is an adjustment term for sampling WoR. We also set $\lambda_t^j = \sqrt{8\log(1 / (\alpha \gamma_t)) / ((u - \ell)^2N)}$ here. We define our e-values and p-values likewise:
\begin{align}
    E_t = M^{\tau_t}_t\text{ and } P_t = \frac{1}{\max_{i \leq N}\ M^{i}_t},
\end{align}
where $\tau_t = \min \{i \in [N]: M^i_t \geq 1 / (\alpha \gamma_t) \} \cup \{N\}$ is the first time the $(M_t^i)_{i \in [N]}$ crosses the threshold $1 / (\alpha \gamma_t)$ or reaches the maximum sample size $N$.

\section{FDR control of e-LOND is sharp}\label{sec:sharp-fdr}

Here we show that there exists a sequence of e-values $(E_1, \dots, E_t)$ such that the FDR control of \ELOND\ is sharp.
\begin{theorem}
    If the discount sequence $(\gamma_t)$ satisfies $\sum_{t \in \naturals} \gamma_t = 1$, there exists a joint distribution over a sequence of e-values $(E_t)_{t \in \naturals}$ such that for every $\varepsilon > 0$, there exists $t' \in \naturals$ such that $\FDR(\rejset^{\ELOND}_{t}) > \alpha - \varepsilon$ for all $t \geq t'$.
\end{theorem}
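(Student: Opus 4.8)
The plan is to exhibit an instance in which \emph{every} hypothesis is null, i.e.\ $\hypset_0 = \naturals$, so that every discovery is false. In that case $\FDP(\rejset_t) = |\rejset_t| / (|\rejset_t| \vee 1) = \ind{|\rejset_t| \geq 1}$, and hence
\[
\FDR(\rejset_t) = \prob{|\rejset_t| \geq 1}
\]
is exactly the probability that \ELOND\ has made at least one discovery by step $t$. The whole problem therefore reduces to designing valid e-values (each satisfying $\expect[E_t] \leq 1$) whose induced discovery probability is driven up to $\alpha$ as $t \to \infty$.

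To see why $\alpha$ is the right ceiling, note that as long as no discovery has yet been made we have $|\rejset_{t-1}| = 0$ and thus $\alpha_t^{\ELOND} = \alpha\gamma_t$. A single null e-value with $\expect[E_t] \leq 1$ can trigger a discovery at step $t$ with probability at most $\alpha\gamma_t$, attained by the two-point law $E_t = (\alpha\gamma_t)^{-1}$ with probability $\alpha\gamma_t$ and $E_t = 0$ otherwise; this is precisely where the deterministic inequality $\ind{E \geq \alpha^{-1}} \leq \alpha E$ of \eqref{eqn:DeterministicInequality} holds with equality. By a union bound the probability of ever rejecting is then at most $\sum_t \alpha\gamma_t = \alpha$, so the task is to actually meet this bound.

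The key idea is to couple all the e-values through a single uniform $U \sim \mathrm{Unif}[0,1]$ so that the per-step rejection events become \emph{mutually exclusive}, forcing the union bound to be tight. Since $\sum_t \alpha\gamma_t = \alpha$ by the assumption $\sum_t \gamma_t = 1$, I can partition $[0,\alpha]$ into consecutive intervals $I_t$ with $|I_t| = \alpha\gamma_t$ and set
\[
E_t \coloneqq (\alpha\gamma_t)^{-1}\,\ind{U \in I_t}.
\]
Each $E_t$ is nonnegative with $\expect[E_t] = (\alpha\gamma_t)^{-1}\prob{U \in I_t} = 1$, hence a valid e-value. Because the $I_t$ are disjoint, at most one $E_t$ is ever nonzero, so at most one rejection ever occurs and $|\rejset_{t-1}|$ stays at $0$ until that single rejection. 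Consequently \ELOND\ rejects at step $t$ exactly when $U \in I_t$, and
\[
\FDR(\rejset_t) = \prob{|\rejset_t| \geq 1} = \prob{U \in I_1 \cup \dots \cup I_t} = \alpha\sum_{s \leq t}\gamma_s.
\]

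Finally I would take the limit: since $\sum_s \gamma_s = 1$, for any $\varepsilon > 0$ there is $t'$ with $\sum_{s \leq t'}\gamma_s > 1 - \varepsilon/\alpha$, and because the partial sums are nondecreasing, $\FDR(\rejset_t) = \alpha\sum_{s \leq t}\gamma_s > \alpha - \varepsilon$ for all $t \geq t'$, as claimed. I expect the main conceptual obstacle to be exactly the coupling step: with \emph{independent} e-values the rejection events overlap, leaving strict slack in the union bound — equivalently, allowing several rejections to pile up loosens the denominator bound $|\rejset_t| \vee 1 \geq |\rejset_{i-1}| + 1$ used in the proof of \Cref{thm:ELONDArbDep}. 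The construction must therefore force the rejection events to be disjoint (so $|\rejset_t| = 1$ whenever there is a discovery) while still spending the full budget $\expect[E_t] = 1$ at every step, and the single shared uniform $U$ is what simultaneously achieves both.
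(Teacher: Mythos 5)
Your proposal is correct and takes essentially the same route as the paper: your events $\{U \in I_t\}$ and $\{U > \alpha\}$ are exactly the paper's disjoint events $\xi_t$ (each with probability $\alpha\gamma_t$, under which $E_t = (\alpha\gamma_t)^{-1}$ and all other e-values vanish) and $\xi_0$ (probability $1-\alpha$), with the single shared uniform $U$ simply giving an explicit coupling that realizes that mixture distribution. The ensuing computation $\FDR(\rejset_t) = \alpha\sum_{s \leq t}\gamma_s$ and the choice of $t'$ from the convergence of the partial sums match the paper's argument verbatim.
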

\begin{proof}
    We write $\rejset_t$ as shorthand for $\rejset_t^\ELOND$.
    We let null be true at every hypothesis, i.e., $\hypset_0 = \naturals$, and construct the joint distribution over e-values is characterized as follows:
    \begin{gather}
        \xi_t \coloneqq \{E_t = (\alpha \gamma_t)^{-1} \text{ and }E_i = 0\text{ for all }i \neq t\}, \qquad \xi_0 \coloneqq \{E_t = 0 \text{ for all }t \in \naturals\}\\
        \prob{\xi_t} = \alpha \gamma_t \text{ for each }t \in \naturals,\qquad  \prob{\xi_0} = 1 - \alpha.
    \end{gather}
Note that $\xi_t$ are disjoint events for $t \in \naturals \cup \{0\}$, and $\prob{\xi_0}+\sum_{t \in \naturals}\prob{\xi_t} = 1 - \alpha  \alpha \sum_{t \in \naturals}\gamma_t.$ --- hence this characterizes a complete distribution over $(E_t)_{t \in \naturals}$. Further, $\expect[E_t] = (\alpha \gamma_t)^{-1} \cdot \prob{\xi_t} = 1$, for each $t \in \naturals$, so $(E_t)_{t \in \naturals}$ is provably a sequence of e-values.

We note that $\FDP(\rejset_{t}) = \max_{t_1 \in [t]}\ind{\xi_{t_1}} $, i.e., the FDP is 1 iff $\xi_{t_1}$ for some $t_1 \in [t]$ occurs.
Hence,
\begin{align}
    \FDR(\rejset_{t}) = \expect\left[\max_{t_1 \in [t]}\ind{\xi_{t_1}}\right] = \prob{\bigcup_{t_1 \in [t]} \xi_{t_1}} = \sum\limits_{t_1 \in [t]}\prob{\xi_{t_1}} = \alpha \sum\limits_{t_1 \in [t]} \gamma_{t_1}.\label{eq:counterex-fdr}
\end{align}

Hence, for a fixed $\varepsilon > 0$, if we define $t'(\varepsilon)$ to be the smallest $t \in \naturals$ such that $\sum_{t_1 \in [t]} \gamma_{t_1} > 1 - (\varepsilon / \alpha)$ --- note such a $t$ always exists because $(\gamma_t)$ is nonnegative and $\sum_{t \in \naturals} \gamma_t = 1$. We can see as a result of \eqref{eq:counterex-fdr}, $\FDR(\rejset_{t}) > \alpha - \epsilon$  for all $t \geq t'(\epsilon)$. Thus, we have shown our desired result.
\end{proof}

A similar argument can be made to argue that \ULOND\ is sharp as well, as well as FCR control of \ELOND-CI\ and \ULOND-CI.

\end{document}